\documentclass[12pt,reqno]{amsart}

\usepackage{amscd,amssymb,amsthm}
\usepackage{graphicx}
\usepackage{multirow}
\usepackage{array}
\usepackage{color}
\usepackage{bbm}
\usepackage{epstopdf}

\textheight 235mm
\topmargin  -5mm
\textwidth 160mm
\oddsidemargin 0mm
\evensidemargin 0mm
\newtheorem{theorem}{Theorem}[section]
\newtheorem{corollary}{Corollary}[section]
\newtheorem{proposition}{Proposition}[section]
\newtheorem{lemma}{Lemma}[section]

\newtheorem{definition}{Definition}[section]

\numberwithin{equation}{section}


\def\asubs#1{{
\vbox{\hrule height .2pt \kern 1pt \hbox{$\scriptstyle {#1}\kern
1pt$} }\kern-.05pt \vrule width .2pt }}
\def\hang{\hangindent=\parindent\noindent}
\def\urltilda{\kern -.15em\lower .7ex\hbox{\~{}}\kern .04em}

\begin{document}
\def\hang{\hangindent=\parindent\noindent}

\thispagestyle{empty}

\vspace*{1mm}

\begin{center}
\MakeUppercase{\bf {Multiple risk factor dependence structures: Distributional properties}}

\bigskip

\noindent
{\large Jianxi Su, Edward Furman$^{*}$}

\noindent
Department of Mathematics and Statistics, York University, Toronto, Ontario M3J 1P3, Canada

\bigskip
\end{center}

\begin{quote}
\textbf{Abstract.}
We introduce a class of dependence structures, that we call the Multiple Risk Factor (MRF) dependence structures. On the one hand, the new constructions extend the popular CreditRisk$^+$ approach, and as such
they formally describe default risk portfolios exposed to an arbitrary number of fatal risk factors with conditionally
exponential and dependent hitting (or occurrence) times. On the other hand, the MRF structures can be seen as an
encompassing family of multivariate probability distributions with univariate margins distributed Pareto of the 2nd kind, and in
this role they can be used to model insurance risk portfolios of dependent and heavy tailed risk components.

\bigskip \noindent\textit{Keywords and phrases: }
Multivariate distributions, dependence, Pareto distributions, default risk, factor
models, weighted risk measures.


\end{quote}

\newpage

\begin{section}{Introduction}

Consider an $n(\in\mathbf{N})$-variate random vector $\mathbf{X}:=(X_1,\ldots,X_n)'$ with coordinates
$X_i:=E_{\lambda_i}\wedge E_{\lambda_{n+1}},\ i=1,\ldots,n$, where $E_{\lambda_j}$ are exponentially
distributed and stochastically independent random variables (r.v.'s) with parameters $\lambda_j\in\mathbf{R}_+,\ j=1,\ldots,n+1$. The joint distribution of $\mathbf{X}$ is nowadays well-known as the Marshall \& Olkin (MO) multivariate
exponential distribution (Marshall \& Olkin (1967)).

An interesting peculiarity of the MO distribution is that it is not
absolutely continuous with respect to the corresponding Lebesgue measure. While this feature is rather an impediment when
it comes to univariate cumulative distribution functions (c.d.f.'s), the singularity is often perceived as an advantage in
multivariate extensions. Specifically, actuaries have found the MO multivariate exponential distribution quite useful in the
context of life insurance, where the coordinates $X_1,\ldots,X_n$ are being interpreted as dependent future life times of $n$
insureds with $E_{\lambda_i},\ i=1,\ldots,n$  and $E_{\lambda_{n+1}}$ denoting, respectively, the future life times due to
the individual mortality and the  `hitting' time of a common shock risk factor (r.f.).

Speaking a bit more technically, in the MO framework
the probability that
the $i$-th insured does not survive one additional time unit is
$\mathbf{P}[X_i<1]=1-\exp\{-(\lambda_i+\lambda_{n+1})\}>1-\exp\{-\lambda_i\},\ i=1,\ldots,n$, where
the inequality holds intuitively because of the inclusion of the additional (common shock) r.f. that aggravates the marginal
mortality. The rationale for augmenting non-zero probabilities of simultaneous death
$\mathbf{P}[X_1=\cdots=X_n]$
comes naturally from, e.g.,  family
insurance coverages as a result of common exposure of the family members to aircraft and road accidents, among other
joint hazards. On a different note, the MO multivariate exponential distribution has been consistently
included in the professional actuarial exams and taught to students  (see, e.g., Bowers et al. (1997)).

Recently, the MO multivariate exponential distribution along with the underlying copula  (Nelsen (2006)) have been suggested as a reasonable
model for describing dependent defaults (Giesecke (2003)). In this respect, the r.v. $X_i$ is viewed as default time of the $i$-th
risk component (r.c.) $i=1,\ldots,n$ within the risk portfolio (r.p.) $\mathbf{X}$ exposed to $n$ idiosyncratic
and one systemic r.f.'s, having, correspondingly, exponentially distributed hitting times $E_{\lambda_i}$ and $E_{\lambda_{n+1}}$. In the context of default risk, the non-zero probabilities of simultaneous default are motivated empirically by, e.g.,  joint
insolvency of 24 railway firms that defaulted on June 21, 1970 (see, Azizpour \& Giesecke (2008)), whereas the exponentiality of the hitting times of distinct r.f.'s  follows naturally when risk occurrences are governed by $(n+1)$
stochastically independent Poisson processes.

In general, the exponentiality assumption on $E_{\lambda_j},\ j=1,\ldots,n+1$ mentioned above is deep-rooted in the modern
practical default modelling. Speaking briefly, the CreditRisk$^+$ approach to modelling dependent defaults
 (see, Frey \& McNeil (2003))
assumes that, for
$i=1,\ldots,n$, the stochastically independent hitting times ${}_iE_{\lambda_j}$ are  exponentially distributed
conditionally on the r.v.'s $\Lambda_1,\ldots,\Lambda_{n+1}$, which are mutually independent, gamma distributed and explain
the uncertainty
associated with the corresponding r.f.'s $j\in\{1,\ldots,n+1\}$. In such a case, the probability that the $i$-th r.c. defaults before the end of on arbitrary
time unit is given by
$\mathbf{P}[X_i<1|\ \boldsymbol{\Lambda}]=1-\exp\{-\sum_{j=1}^{n+1} w_j \Lambda_j \}$, where
$\boldsymbol{\Lambda}:=(\Lambda_1,\ldots,\Lambda_{n+1})'$ is a r.v. and $w_1.\ldots,w_{n+1}$ are deterministic positive weights.
The distributional properties of the CreditRisk$^+$ approach were explored in, e.g., Su \& Furman (2016).

In the present paper we introduce and study what can be viewed as an extension of the CreditRisk$^+$ approach.
Interestingly, on the one hand the structures proposed herein
can be motivated by the multi-factor ERM framework (Sweeting (2011)), and on the other hand they are of interest in
distribution theory, as we end up with a new multivariate probability distribution having Pareto of the 2nd kind univariate
margins, which unifies many existing stand alone results in, e.g., Arnold (1983, 2015), Chiragiev \& Landsman (2009), Asimit et al. (2010) and Su \& Furman (2016).
We note in passing in this respect that distributions with Paretian tails have been applied in a multitude of areas. We refer
to: Benson et al. (2007) - for applications in catastrophic events;  to Koedijk et al (1990), Longin (1996), Gabaix et al. (2003) - for applications in general financial phenomena; to Cebri\'{a}n et al. (2003) - for applications in insurance pricing; and to Soprano et al. (2010), Chavez-Demoulin et al. (2015) for applications in risk management.

In the sequel, we denote by $(X_1,\ldots,X_n)'$ the default times of generic financial units with labels in the set $\{1,\ldots,n\}$,
exposed to
$d(\in\mathbf{N})$ exogenous r.f.'s having stochastically independent hitting times $E_{\lambda_j}$ that are
exponentially distributed conditionally on $\Lambda_j,\ j=1,\ldots,d$. We assume that the r.v.'s $\Lambda_1,\ldots,\Lambda_{d}$
are gamma distributed and stochastically independent as in CreditRisk$^+$. The exposure matrix $c\in Mat_{n\times d}(\mathbf{1})$
having  entries $c_{i,j}\in \mathbf{1}:=\{0,\ 1\},\ i=1,\ldots,n,\ j=1,\ldots,d$  is deterministic and determined by the upper
management. Then we say that the $j$-th r.f. hits the set of r.c.'s $\mathcal{RC}_j:=\{i\in\{1,\ldots,n\}: c_{i,j}=1\}$ for any
$j=1,\ldots,d$. In a similar fashion, we say that the $i$-th r.c. is hit by the set of r.f.'s $\mathcal{RF}_i:=\{j\in\{1,\ldots,d\}: c_{i,j}=1
\}$ for any $i=1,\ldots,n$. The rest of the paper is organized as follows: We revisit and generalize when necessary some existing multivariate
dependence structures that are relevant to our present work in Section \ref{Sec-2}, and we then introduce and study the main object of interest herein -
the Multiple Risk
 Factor (MRF) dependence models in Section \ref{SecMRF}. In Section \ref{Sec-4} we specialize the discussion to the bivariate case, and
 in Section \ref{Sec-5} we exemplify the usefulness of the MRF dependencies within the contexts of financial risk measurement. In Section \ref{Sec-6} we elucidate the MRF dependencies with the help of a numerical example borrowed from the context of default risk. Section \ref{Sec-7} concludes the paper. Proofs are relegated to Appendix \ref{app-proof}.
 We note in passing that `$\overset{d}{=}$' stands for equality in distribution throughout the paper.

\end{section}

\begin{section}{Relevant existing dependencies revisited}
\label{Sec-2}
We start off with a number of simple but important notes. First, in what follows we
denote by $\Lambda\sim Ga(\alpha,\ \sigma)$ a r.v. that follows gamma distribution and has
the probability density function (p.d.f.)
\[
f_\Lambda(x)=\frac{1}{\Gamma(\alpha)}e^{-\sigma x}x^{\alpha-1}\sigma^{\alpha},\
x\in\mathbf{R}_+,
\]
where $\alpha\in\mathbf{R}_+$ and $\sigma\in\mathbf{R}_+$ are deterministic shape and rate
parameters, respectively. Second, we denote by $E_\lambda\sim Exp(\lambda)$ an
exponentially distributed r.v. with the p.d.f.
\[
f_{E_\lambda}(x)=\lambda e^{-\lambda x},\ x\in\mathbf{R}_+,
\]
where $\lambda\in\mathbf{R}_+$ is a deterministic parameter.
Third, we denote by `$*$' the mixture operator such that given two  appropriately jointly measurable r.v.'s $X_\beta\sim C(\beta)$ with $\beta\in \mathcal{B}\subseteq\mathbf{R}$ and
$B\sim H$, it holds that $X_\beta *B$ has the same distribution as $X_B$, or
succinctly $X_\beta *B\overset{d}{=}X_B$, where we of course assume that
the r.v. $B$ has its range in $\mathcal{B}$.
Then we readily have that
\[
E_{\Lambda}\overset{d}{=}E_\lambda * \Lambda\sim Pa(II)(\sigma,\ \alpha),
\]
that is $E_{\Lambda}$ follows the Lomax distribution. We note in this respect that a
Lomax distributed r.v.
$X\sim Pa(II)(\sigma,\ \alpha)$ has the p.d.f. and the decumulative distribution function (d.d.f.), respectively,
\[
f_X(x)=\frac{\alpha}{\sigma}\left(1+\frac{x}{\sigma}\right)^{-\alpha-1}
\textnormal{ amd }
\overline{F}_X(x):=\mathbf{P}[X>x]=\left(1+\frac{x}{\sigma}\right)^{-\alpha},\ x\in\mathbf{R}_+,
\]
where $\sigma\in\mathbf{R}_+$ is a scale parameter and $\alpha\in\mathbf{R}_+$ is a power
parameter. Last but not least, the classical extension of the univariate Lomax (and more generally
Pareto of the 2nd kind) distributions to the multivariate context is arguably the one of Arnold (1983, 2015)
with the d.d.f.
\begin{equation}
\label{ddfArnold}
\overline{F}(x_1,\ldots,x_n):=\mathbf{P}[X_1>x_1,\ldots,X_n>x_n]=\left(
1+\sum_{i=1}^n \frac{x_i}{\sigma_i}
\right)^{-\alpha},\ (x_1,\ldots,x_n)'\in\mathbf{R}_+^n,
\end{equation}
and the Pearson correlation for, $1\leq i\neq k\leq n$ and $\alpha>2$,
\begin{equation}
\label{rhoArnold}
\mathbf{Corr}[X_i,X_k]\equiv\frac{1}{\alpha}\in(0,\ 1/2).
\end{equation}

Two recently introduced multivariate probability distributions are of central interest for the forthcoming derivations. Namely, these are the dependence structures with univariate Pareto margins introduced in Asimit et al. (2010) and in Su \& Furman (2016).
Further we discuss  the just mentioned probability distributions and generalize them slightly where needed.

\begin{subsection}{The model of Asimit et al. (2010)}
\label{Subsec_Asimit}
Recall that $E_{\lambda_j},\ j=1,\ldots,n+1$ denote $(n+1)$ stochastically independent exponentially distributed r.v.'s
with parameters $\lambda_j\in\mathbf{R}_+$ and $n\in\mathbf{N}$. Further let $\Lambda_j\sim Ga(\alpha_j,\ 1)$
be gamma distributed r.v.'s all  independent
mutually and on $E_{\lambda_j},\ j=1,\ldots,n+1$. Then the r.v. $\mathbf{X}=(X_1,\ldots,X_n)'$ is said to follow
the multivariate probability distribution of Asimit et al. (2010) if its coordinates admit the following
stochastic representation
\begin{equation}\label{stochAsimit}
X_i\overset{d}{=} \sigma_i \left(
E_{\Lambda_i} \wedge E_{\Lambda_{n+1}}
\right)
\end{equation}
with the corresponding joint d.d.f. given by
\begin{equation}\label{ddfAsimit}
\overline{F}(x_1,\ldots,x_n)=\left(1+\bigvee_{i=1}^n \frac{x_i}{\sigma_i}
\right)^{-\alpha_{n+1}}
\prod_{i=1}^n
\left(
1+\frac{x_i}{\sigma_i}
\right)^{-\alpha_i},\ (x_1,\ldots,x_n)'\in\mathbf{R}^n_{+}.
\end{equation}

Speaking practically, if $E_{\lambda_j},\ j=1,\ldots,n+1$ are random and exponentially
distributed (default) times,
then (\ref{stochAsimit}) and (\ref{ddfAsimit}) are akin to the famous common shock set-up (Bowers et al. (1997))
with $(n+1)$ exogenous risk factors, of which $n$ are idiosyncratic and one is the common shock. By construction, the times of occurrence of the idiosyncratic risks are independent, whereas the times of occurrence of the systemic risks are fully-commonotonic. The presence of the common shock
risk factor results in a singularity, that is $\mathbf{P}[X_1=\cdots=X_n]=\alpha_{n+1}/(\alpha_1+\cdots+\alpha_n+\alpha_{n+1})$. Thereby, d.d.f.
(\ref{ddfAsimit}) is not absolutely continuous with respect to the Lebesgue measure on $\mathbf{R}_+^n$.
The presence of singularity is in general desirable as it addresses the  phenomenon of simultaneous death/default,
but at the same time it reduces the tractability of (\ref{ddfAsimit}) considerably (Asimit et al. (2010, 2016)).

A possibility of having numerous systemic r.f.'s that may hit subsets of the coordinates of the r.p. $\mathbf{X}$ as well as the
general lines of Sweeting (2011) motivate an extension of the set-up described above to allow for an arbitrary number, say $l(\in\mathbf{N})$ of r.f.'s. To this end, let
$c^l\in Mat_{n\times l}(\mathbf{1})$ be an $n\times l$ matrix with entries in $\{0,\ 1\}$. For the sake of the discussion in this section we assume that $n<l$. Also, let $\mathcal{RF}^l_i:=\{j\in\{1.\ldots,l\}:c^l_{i,j}=1\}$ denote the set of all risk factors
that `attack' risk component $i\in\{1,\ldots,n\}$.

\begin{definition}\label{AsimitGenDef}
Let $E_{\lambda_j},\ j=1,\ldots,l$ be stochastically independent exponentially distributed r.v.'s
with parameters $\lambda_j\in\mathbf{R}_+$. Also let $\Lambda_j\sim Ga(\alpha_j,\ 1)$
be gamma distributed r.v.'s all  independent
mutually and on $E_{\lambda_j},\ j=1,\ldots,l$. Set, for $\sigma_i\in\mathbf{R}_+$,
\[
X_i=\sigma_i \bigwedge_{j\in\mathcal{RF}^l_i} E_{\Lambda_j} ,
\]
then the joint distribution of $X_1,\ldots,X_n$ is a generalized variant of the one introduced in Asimit et al. (2010).
Succinctly, we write $\mathbf{X}\sim Pa_{1\ldots,n}^{c_l}(II)(\boldsymbol{\sigma},\ \boldsymbol{\alpha})$, where $\boldsymbol{\sigma}:=(\sigma_1,\ldots,\sigma_n)'$ is an $n$-dimensional vector of scale parameters and $\boldsymbol{\alpha}:=(\alpha_1,\ldots,\alpha_{l})'$ is an $l$-dimensional  vector of power parameters, all the parameters are positive
reals.
\end{definition}

Clearly the distribution of the $i$-th coordinate is Lomax, i.e., we have that $X_i\sim Pa(II)(\sigma_i,\ \alpha_{c,i})$ where
$\alpha_{c,i}:=\sum_{j\in\mathcal{RF}_i^l} \alpha_j,\ i=1,\ldots,n$. The joint distribution is given in the following theorem. Therein
we let $\alpha_{c,(i,k)}:=\sum_{j\in\mathcal{RF}_{(i,k)}^l} \alpha_j$, where
$\mathcal{RF}_{(i,k)}^l:=\{j\in\{1,\ldots,l\}:\ c_{k,j}^lc_{i,j}^l=1\}$, that is the set
$\mathcal{RF}_{(i,k)}^l$ contains the risk factors that attack both the $i$-th and
the $k$-th risk components, $1\leq i\neq k\leq n$. The proof of the theorem is omitted, as it is readily obtained along the lines of  the proofs of Propositions
2.3 and 3.2 in Asimit et al. (2010).

\begin{theorem}\label{AsimitGenProp}
Let $\mathbf{X}\sim Pa_{1\ldots,n}^{c^l}(II)(\boldsymbol{\sigma},\ \boldsymbol{\alpha})$,
then its d.d.f. is given by
\begin{equation}
\label{ddf_c}
\overline{F}(x_1,\ldots,x_n)=\prod_{j=1}^{l}\left(1+\bigvee_{i\in\mathcal{RC}_j}
\frac{x_i}{\sigma_i}\right)^{-\alpha_j},\ (x_1,\ldots,x_n)'\in\mathbf{R}_+^n.
\end{equation}
Moreover, the index of the Pearson correlation is given by
\begin{equation}
\label{rhoAsimitformula}
\mathbf{Corr}[X_i,X_k]=\frac{\alpha_{c,(i,k)}}{\alpha_{c,i}+\alpha_{c,k}-\alpha_{c,(i,k)}-2}
\sqrt{\frac{(\alpha_{c,i}-2)(\alpha_{c,k}-2)}{\alpha_{c,i} \alpha_{c,k}}},
\end{equation}
subject to $\alpha_{c,i}>2$ and $\alpha_{c,k}>2$ with $1\leq i\neq k\leq n$.
\end{theorem}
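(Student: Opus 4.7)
The plan is to split the theorem into two parts: the joint d.d.f.~(\ref{ddf_c}) and the correlation index~(\ref{rhoAsimitformula}).

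\textbf{Part 1: The joint d.d.f.} First I would condition on $\boldsymbol{\Lambda}=(\Lambda_1,\ldots,\Lambda_l)'$. Given $\boldsymbol{\Lambda}$, the random variables $E_{\Lambda_j}$ are independent exponentials with (random) rates $\Lambda_j$. Since $X_i>x_i$ is equivalent to $E_{\Lambda_j}>x_i/\sigma_i$ for every $j\in\mathcal{RF}_i^l$, the joint event $\bigcap_{i=1}^n\{X_i>x_i\}$ rewrites as $\bigcap_{j=1}^l\{E_{\Lambda_j}>\bigvee_{i\in\mathcal{RC}_j}x_i/\sigma_i\}$ by swapping the direction of the $(i,j)$ index pair through the exposure matrix $c^l$. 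Using conditional independence and then unconditioning, I obtain
\[
\overline{F}(x_1,\ldots,x_n)=\prod_{j=1}^l\mathbf{E}\!\left[\exp\!\left(-\Lambda_j\bigvee_{i\in\mathcal{RC}_j}\frac{x_i}{\sigma_i}\right)\right],
\]
and the Laplace transform of $\Lambda_j\sim Ga(\alpha_j,1)$, namely $\mathbf{E}[e^{-t\Lambda_j}]=(1+t)^{-\alpha_j}$, delivers exactly (\ref{ddf_c}).

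\textbf{Part 2: The correlation.} To obtain (\ref{rhoAsimitformula}) I would first extract the bivariate marginal d.d.f.~of $(X_i,X_k)$ from (\ref{ddf_c}) by setting all other arguments to zero. Partitioning the index set $\{1,\ldots,l\}$ according to whether a risk factor attacks only $i$, only $k$, or both, the bivariate d.d.f.~collapses into
\[
\overline{F}_{i,k}(x_i,x_k)=\left(1+\frac{x_i}{\sigma_i}\right)^{-(\alpha_{c,i}-\alpha_{c,(i,k)})}\!\left(1+\frac{x_k}{\sigma_k}\right)^{-(\alpha_{c,k}-\alpha_{c,(i,k)})}\!\left(1+\bigvee_{m\in\{i,k\}}\frac{x_m}{\sigma_m}\right)^{-\alpha_{c,(i,k)}},
\]
which is precisely the bivariate form (\ref{ddfAsimit}) of Asimit et al.\ (2010) with individual powers $\alpha_{c,i}-\alpha_{c,(i,k)}$, $\alpha_{c,k}-\alpha_{c,(i,k)}$ and common-shock power $\alpha_{c,(i,k)}$. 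Applying the correlation formula established in Proposition~3.2 of that paper to these three effective powers (noting that the total margin powers remain $\alpha_{c,i}$ and $\alpha_{c,k}$), together with the Lomax marginal variances $\mathbf{Var}[X_i]=\sigma_i^2\alpha_{c,i}/((\alpha_{c,i}-1)^2(\alpha_{c,i}-2))$, yields (\ref{rhoAsimitformula}) after simplification.

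\textbf{Main obstacle.} The conceptual step I expect to cost the most care is the index swap in Part~1 through the exposure matrix $c^l$: one must verify that $\{j:i\in\mathcal{RC}_j\}=\mathcal{RF}_i^l$ and that the resulting $\bigvee_{i\in\mathcal{RC}_j}x_i/\sigma_i$ correctly aggregates the common constraints imposed on a single conditional exponential $E_{\Lambda_j}$ by several risk components. In Part~2 the only nontrivial bookkeeping is keeping track of the effective idiosyncratic and systemic powers so that the bivariate marginal is identified with (\ref{ddfAsimit}); once this is established, the correlation computation is a direct pull-back from the previously known bivariate result and involves no new integration.
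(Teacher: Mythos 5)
Your proposal is correct and follows essentially the route the paper itself indicates: the paper omits the proof, stating it is obtained along the lines of Propositions 2.3 and 3.2 of Asimit et al.\ (2010), which is exactly your conditioning-on-$\boldsymbol{\Lambda}$/Laplace-transform derivation of (\ref{ddf_c}) followed by reduction of the bivariate margin to the form (\ref{ddfAsimit}) and a pull-back of the known correlation formula. Both steps, including the index swap $j\in\mathcal{RF}_i^l\iff i\in\mathcal{RC}_j$ and the identification of the effective powers $\alpha_{c,i}-\alpha_{c,(i,k)}$, $\alpha_{c,k}-\alpha_{c,(i,k)}$, $\alpha_{c,(i,k)}$, are sound.
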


A remark about the $c^l$-matrix is perhaps well-timed now. It is useful to realize that the importance of the matrix  is in that
it shapes the set of risk factors. More specifically, a risk factor $j\in\{1,\ldots,l\}$ is idiosyncratic if and only if
$\sum_{i=1}^n c_{i,j}^l=1$, and it is common shock otherwise. Moreover, in the latter case the risk factor
($j\in\{1,\ldots,l\}$ again) affects the subset  of  risk components
$\mathcal{RC}_j=\{i\in\{1,\ldots,n\}:\ c_{i,j}^l= 1\}$.

We conclude this subsection by noticing that the idea of randomization, and Definition
\ref{AsimitGenDef} is nothing but that, has been well-studied and can be traced back to Feller (1966) -
probability theory and Bunke (1969) - economics. Important recent references are Geweke \& Amisano (2011)  -  economics, McNeil et al. (2005) and references therein - quantitative risk, generally and Gordy (2000) - credit risk, particularly. Also, as  (\ref{rhoAsimitformula}) can take on any
value in $[0,\ 1]$,  d.d.f. (\ref{ddf_c}) covers the entire range of positive dependence when measured by the Pearson index of
correlation. We show in the next subsection that neither one of the dependence structures
investigated in Arnold (1983, 2015) and Chiragiev \& Landsman (2009) enjoys the just mentioned important property.
\end{subsection}

\begin{subsection}{The model of Su \& Furman (2016)}
\label{Subsec_Su}

Let $E_{\lambda_j},\ j=1,\ldots,n+1$ denote as hitherto exponentially distributed and stochastically independent r.v.'s
with parameters $\lambda_j\in\mathbf{R}_+$ and $n\in\mathbf{N}$. Further let ${}_iE_{\lambda_j},\ i=1,\ldots,n$
be independent copies of the r.v. $E_{\lambda_j},\ j=1,\ldots,n+1$. In addition,
let $\Lambda_{j}\sim Ga(\gamma_j,1)$ be gamma distributed r.v.'s independent mutually and on
${}_iE_{\lambda_{j}},\ i=1,\ldots,n,\ j=1,\ldots,n+1$. The r.v. $\mathbf{X}=(X_1,\ldots,X_n)'$ follows the multivariate probability distribution of Su \& Furman (2016) if its coordinates admit the following stochastic representation
\begin{equation}\label{StocSu}
X_i\overset{d}{=}\sigma_i\left(
{}_iE_{\Lambda_{i}} \wedge {}_iE_{\Lambda_{n+1}}
\right)
\end{equation}
where $\sigma_i\in\mathbf{R}_+$ for $i=1,\ldots,n$, and the joint d.d.f. is given by
\begin{equation}\label{DdfSu}
\overline{F}(x_1,\ldots,x_n)=\prod_{i=1}^{n}\left(
1+\frac{x_i}{\sigma_i}
\right)^{-\gamma_i}
\left(
1+\sum_{i=1}^{n} \frac{x_i}{\sigma_i}
\right)^{-\gamma_{n+1}}, \ (x_1,\ldots,x_n)'\in\mathbf{R}_+^n.
\end{equation}

Speaking plainly, (\ref{StocSu}) and (\ref{DdfSu}) mimic the common shock framework described in Subsection
\ref{Subsec_Asimit}, but circumvent the emergence of singularity  by assuming that the  times of occurrence of the
common shock risk factors are not fully-comonotonic but rather conditionally independent. As in Subsection \ref{Subsec_Asimit},
we extend (\ref{StocSu}) and (\ref{DdfSu}) to augment $m(\in\mathbf{N})$
risk factors. To this end, let $c^m$ be an $n\times m$ deterministic matrix with the entries equal to either
zero or one, and let $\mathcal{RF}^m_i:=\{ j\in\{1,\ldots,m\}:\ c^m_{{i,j}}=1\}$ for $i=1,\ldots,n$.

\begin{definition}\label{SuGenDef}
Let ${}_iE_{\lambda_{j}},\ i=1,\ldots,n$ be independent copies of the stochastically independent
exponentially distributed r.v.'s $E_{\lambda_j},\ \lambda_{j}\in\mathbf{R}_+,\ j=1,\ldots,m$.
Also let $\Lambda_j\sim Ga(\gamma_j,\ 1)$
be gamma distributed r.v.'s all  independent
mutually and on ${}_iE_{\lambda_{j}},\ i=1,\ldots,n,\ j=1,\ldots,m$. Set, for $\sigma_i\in\mathbf{R}_+$,
\[
X_i=\sigma_i \bigwedge_{j\in\mathcal{RF}_i^m} {}_iE_{\Lambda_{j}},
\]
then the joint distribution of $X_1,\ldots,X_n$ is a generalized variant of the one introduced in Su and Furman (2016). Succinctly, we write $\mathbf{X}\sim Pa_{1\ldots,n}^{c^m}(II)(\boldsymbol{\sigma},\ \boldsymbol{\gamma})$, where $\boldsymbol{\sigma}:=(\sigma_1,\ldots,\sigma_n)'$ is an $n$-dimensional vector of scale parameters and $\boldsymbol{\gamma}:=(\gamma_{1},\ldots,\gamma_m)'$ is an $m$-dimensional  vector of power parameters, all parameters are positive reals.
\end{definition}

The distribution of $X_i,\ i=1,\ldots,n$ is clearly $Pa(II)(\sigma_i,\ \gamma_{c,i})$, where $\gamma_{c,i}=
\sum_{j\in\mathcal{RF}_{i}^m}  \gamma_j$, that is the coordinates of the random vectors in Definition
\ref{SuGenDef} are Lomax distributed, similarly to the situation in Definition
\ref{AsimitGenDef}. The
joint distribution of $X_1,\ldots,X_n$ is formulated in the following theorem.  We note in passing that by analogy with the discussion in the previous subsection, we let
$\gamma_{c,(i,k)}:=\sum_{j\in\mathcal{RF}_{(i,k)}^m} \gamma_j$ where
$\mathcal{RF}_{(i,k)}^m:=\{j\in\{1,\ldots,m\}:\  c^m_{i,j}c^m_{k,j}=1\}$.
In addition the $(q+1)\times q$ hypergeometric function is given by (Gradshteyn \& Ryzhik (2014))
\begin{eqnarray}
\label{hyperpq}
_{q+1}F_q(a_1,\ldots,a_{q+1};b_1,\ldots,b_q;z):=\sum_{k=0}^{\infty}\frac{(a_1)_k,\ldots,(a_{q+1})_k }{(b_1)_k,\ldots,(b_q)_k}\frac{z^k}{k!},\
\end{eqnarray}
where $(p)_n:=p(p+1)\ldots(p+n-1)$ for $n\in \mathbf{Z}_+$, $(p)_0:=1$ and $q\in\mathbf{Z}_+$.
For $a_1,\ldots,a_{q+1}$ all positive, and these are the cases of interest in the present paper,
the radius of convergence of the series is the open disk $|z|<1$. On the boundary $|z|=1$, the series
converges absolutely if $d=b_1+\cdots + b_q-a_1-\cdots -a_{q+1}>0$, and it
converges except at $z=1$ if $0\geq d>-1$. We omit the proof of the theorem as it is very similar to the proof of
Theorem 2.1 in Su \& Furman (2016).

\begin{theorem}\label{AsimitGenProp}
Let $\mathbf{X}\sim Pa_{1\ldots,n}^{c^m}(II)(\boldsymbol{\sigma},\ \boldsymbol{\gamma})$,
then its d.d.f. is given by
\begin{equation}
\label{ddf-ac}
\overline{F}(x_1,\ldots,x_n)=\prod_{j=1}^{m}\left(1+\sum_{i\in\mathcal{RC}_j}\frac{x_i}{\sigma_i}\right)^{-\gamma_j},\
(x_1,\ldots,x_n)'\in\mathbf{R}_+^n.
\end{equation}
Furthermore, the index of the Pearson correlation is given, for $\gamma_{c,i}>2$ and $\gamma_{c,k}>2$, by
\begin{equation}
\label{rhoSuformula}
\mathbf{Corr}[X_i,X_k]=\left(_3F_2\left(\gamma_{c,(i,k)},1,1;\gamma_{c,i},\gamma_{c,k};1 \right)-1\right)\sqrt{\frac{(\gamma_{c,i}-2)(\gamma_{c,k}-2)}{\gamma_{c,i} \gamma_{c,k}}}
\in[0,\ 1/2),
\end{equation}
where $1\leq i\neq k\leq n$. The hypergeomteric function in (\ref{rhoSuformula}) converges absolutely.
\end{theorem}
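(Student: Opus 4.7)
The plan is to obtain the joint d.d.f. (\ref{ddf-ac}) by a conditioning argument and then compute the correlation by reducing the cross-moment $\mathbf{E}[X_iX_k]$ to a $_3F_2$ evaluation at unity. For the d.d.f., I would condition on $\boldsymbol{\Lambda}:=(\Lambda_1,\ldots,\Lambda_m)'$. Since the collection $\{{}_iE_{\lambda_j}\}_{i,j}$ is stochastically independent by construction, conditionally on $\boldsymbol{\Lambda}$ the ${}_iE_{\Lambda_j}$ are independent exponentials with rates $\Lambda_j$, so that
\begin{equation*}
\overline{F}(x_1,\ldots,x_n\mid\boldsymbol{\Lambda})=\prod_{i=1}^{n}\prod_{j\in\mathcal{RF}_i^m}e^{-\Lambda_j x_i/\sigma_i}=\prod_{j=1}^{m}\exp\!\left(-\Lambda_j\sum_{i\in\mathcal{RC}_j}\frac{x_i}{\sigma_i}\right),
\end{equation*}
where the product reordering uses the tautology $\{(i,j):j\in\mathcal{RF}_i^m\}=\{(i,j):i\in\mathcal{RC}_j\}$. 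Taking expectations and invoking the Laplace transform $\mathbf{E}[e^{-t\Lambda_j}]=(1+t)^{-\gamma_j}$ of the mutually independent $\Lambda_j\sim Ga(\gamma_j,1)$ yields (\ref{ddf-ac}).

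For the correlation, the marginal $X_i\sim Pa(II)(\sigma_i,\gamma_{c,i})$ supplies $\mathbf{E}[X_i]=\sigma_i/(\gamma_{c,i}-1)$ and $\mathbf{Var}[X_i]=\sigma_i^2\gamma_{c,i}/((\gamma_{c,i}-1)^2(\gamma_{c,i}-2))$. Setting $x_l=0$ for $l\notin\{i,k\}$ in (\ref{ddf-ac}) and grouping the risk factors of $\mathcal{RF}_i^m\cup\mathcal{RF}_k^m$ into those touching only $i$, only $k$, and both gives the bivariate d.d.f.
\begin{equation*}
\overline{F}_{i,k}(x_i,x_k)=\left(1+\frac{x_i}{\sigma_i}\right)^{-\alpha_A}\left(1+\frac{x_k}{\sigma_k}\right)^{-\alpha_B}\left(1+\frac{x_i}{\sigma_i}+\frac{x_k}{\sigma_k}\right)^{-\alpha_C},
\end{equation*}
with $\alpha_A=\gamma_{c,i}-\gamma_{c,(i,k)}$, $\alpha_B=\gamma_{c,k}-\gamma_{c,(i,k)}$, $\alpha_C=\gamma_{c,(i,k)}$. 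Using $\mathbf{E}[X_iX_k]=\int_0^{\infty}\!\int_0^{\infty}\overline{F}_{i,k}(x_i,x_k)\,dx_i\,dx_k$ and rescaling $u=x_i/\sigma_i$, $v=x_k/\sigma_k$, one has $\mathbf{E}[X_iX_k]=\sigma_i\sigma_k\,I$ with $I:=\int_0^{\infty}\!\int_0^{\infty}(1+u)^{-\alpha_A}(1+v)^{-\alpha_B}(1+u+v)^{-\alpha_C}\,du\,dv$.

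The principal difficulty is to cast $I$ as a $_3F_2$ at unity. I would compactify via $u=(1-s)/s$, $v=(1-t)/t$, which reshapes $I$ into $\int_0^1\!\int_0^1 s^{\gamma_{c,i}-2}t^{\gamma_{c,k}-2}(s+t-st)^{-\alpha_C}\,ds\,dt$. Performing the inner integral over $t$ via the Euler representation of the Gauss hypergeometric function, then applying the Pfaff transformation ${}_2F_1(a,b;c;z)=(1-z)^{-a}{}_2F_1(a,c-b;c;z/(z-1))$ to transport the argument into $(0,1)$, and finally matching the remaining outer integral against the Euler representation
\begin{equation*}
{}_3F_2(a_1,a_2,a_3;b_1,b_2;1)=\frac{\Gamma(b_2)}{\Gamma(a_3)\Gamma(b_2-a_3)}\int_0^1 r^{a_3-1}(1-r)^{b_2-a_3-1}{}_2F_1(a_1,a_2;b_1;r)\,dr,
\end{equation*}
yields $\mathbf{E}[X_iX_k]=\sigma_i\sigma_k\,{}_3F_2(\gamma_{c,(i,k)},1,1;\gamma_{c,i},\gamma_{c,k};1)/\{(\gamma_{c,i}-1)(\gamma_{c,k}-1)\}$. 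Inserting this into $(\mathbf{E}[X_iX_k]-\mathbf{E}[X_i]\mathbf{E}[X_k])/\sqrt{\mathbf{Var}[X_i]\mathbf{Var}[X_k]}$ produces (\ref{rhoSuformula}). Absolute convergence of the series at $z=1$ amounts to $b_1+b_2-a_1-a_2-a_3=\gamma_{c,i}+\gamma_{c,k}-\gamma_{c,(i,k)}-2>0$, which holds because the left-hand side equals $\sum_{j\in\mathcal{RF}_i^m\cup\mathcal{RF}_k^m}\gamma_j-2\geq\max(\gamma_{c,i},\gamma_{c,k})-2>0$ under the variance assumptions. Non-negativity of the correlation is evident from the termwise positivity of the series (so ${}_3F_2-1\geq 0$); the strict upper bound $1/2$ can be recovered by specializing to the fully overlapping limit $\gamma_{c,(i,k)}=\gamma_{c,i}=\gamma_{c,k}=\gamma$, where Gauss's summation gives the closed form $1/\gamma\in(0,1/2)$ and a monotonicity comparison confirms this is the supremum.
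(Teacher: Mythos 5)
Your argument is correct in substance and reaches both (\ref{ddf-ac}) and (\ref{rhoSuformula}), but by a route that differs from the paper's. The paper omits the proof, deferring to Theorem 2.1 of Su \& Furman (2016); the closest computation it does contain (the proof of Corollary \ref{cov-su} in Appendix \ref{app-proof}) evaluates the same double integral by splitting $\mathbf{R}_+^2$ into the ordered regions $u<v$ and $u>v$, applying Equations (3.197(1)), (3.211), (9.182(1)) and (7.512(12)) of Gradshteyn \& Ryzhik to obtain two ${}_3F_2(\cdot\,;-1)$ terms, and then recombining them into ${}_3F_2\left(\gamma_{c,(i,k)},1,1;\gamma_{c,i},\gamma_{c,k};1\right)$ via an identity from Su \& Furman (2016). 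Your route—conditioning on $\boldsymbol{\Lambda}$ and using the gamma Laplace transform for the d.d.f., then $\mathbf{E}[X_iX_k]=\int_0^\infty\int_0^\infty\overline{F}_{i,k}(x,y)\,dx\,dy$, the compactification $u=(1-s)/s$, $v=(1-t)/t$, the Euler representation, the Pfaff transformation, and the Euler-type integral for ${}_3F_2$ at unity—gets the same answer in one pass, without the ordered-region split and without the intermediate argument $-1$ evaluations; your convergence check $\gamma_{c,i}+\gamma_{c,k}-\gamma_{c,(i,k)}-2=\gamma_{c,i,k}-2>0$ is exactly the condition for absolute convergence at $z=1$, and the inclusion--exclusion identity behind it is right.

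The one soft spot is the claim that the correlation lies in $[0,\ 1/2)$. Non-negativity follows from termwise positivity, as you say, but ``a monotonicity comparison confirms this is the supremum'' is asserted rather than proved, and as phrased it is slightly off: $1/\gamma$ is the value in the fully symmetric case, whereas the supremum over all admissible parameters is $1/2$, approached but never attained as $\gamma_{c,(i,k)}=\gamma_{c,i}=\gamma_{c,k}\to 2^+$. To close the gap: since $(a)_h$ is increasing in $a>0$ for $h\geq 1$, the series for ${}_3F_2\left(\gamma_{c,(i,k)},1,1;\gamma_{c,i},\gamma_{c,k};1\right)$ is increasing in $\gamma_{c,(i,k)}$ with the marginal parameters fixed, and $\gamma_{c,(i,k)}\leq\min(\gamma_{c,i},\gamma_{c,k})$; taking without loss of generality $\gamma_{c,i}\leq\gamma_{c,k}$ and setting $\gamma_{c,(i,k)}=\gamma_{c,i}$, cancellation and Gauss summation give ${}_2F_1(1,1;\gamma_{c,k};1)=(\gamma_{c,k}-1)/(\gamma_{c,k}-2)$, so the correlation is at most $\sqrt{(\gamma_{c,i}-2)/\left((\gamma_{c,k}-2)\gamma_{c,i}\gamma_{c,k}\right)}$; writing $\gamma_{c,i}=2+\delta$, $\gamma_{c,k}=2+\epsilon$ with $0<\delta\leq\epsilon$, this equals $\sqrt{\delta/\epsilon}\,\big/\sqrt{(2+\delta)(2+\epsilon)}\leq 1/\sqrt{(2+\delta)(2+\epsilon)}<1/2$. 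With this short addition your proof is complete.
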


D.d.f. (\ref{ddf-ac}) is clearly absolutely continuous with respect to the corresponding Lebesgue measure, and
it unifies the classical multivariate Pareto distributions of Arnold (1983, 2015) as well as the recently introduced ones of
Chiragiev \& Landsman (2009). Also, the fact that (\ref{rhoSuformula}) is in $[0,1/2)$ is rather
unfortunate, as it makes d.d.f. (\ref{ddf-ac}) inappropriate for applications when
the r.v.'s $X_1,\ldots,X_n$ have strong positive correlations. Since d.d.f. (\ref{ddf-ac})
serves as the basic building block of the popular in the modern practical default risk measurement CreditRisk$^+$ approach, the aforementioned limited range of attainable
Pearson correlations conforms well to the empirical evidence showing that CreditRisk$^+$
may underestimate defaults' clustering (Das et al. (2007)).
\end{subsection}

To summarize the developments hitherto, we note that
\begin{itemize}
\item the restatements of the dependency models of Asmitit et al. (2011) and Su \& Furman (2016) in terms of times of occurrence with random risk factors  align well with the general set-up of default risk
and as such are well suited for the corresponding practical applications and can be easily conveyed to upper management;
\item simple stochastic representations
in Definitions \ref{AsimitGenDef} and \ref{SuGenDef} are essential and allow for convenient simulation analysis and
stresstesting;
\item d.d.f.'s
(\ref{ddf_c}) and (\ref{ddf-ac})
are analytically tractable and quite general unifying a variety of existing multivariate structures.
\end{itemize}

The advantages described above lay the groundwork for gluing the objects in Definitions \ref{AsimitGenDef} and \ref{SuGenDef} into one
general multivariate probability structure that would augment random times of occurrence that are a.)
 stochastically
independent, b.) conditionally independent, and c.) fully-comonotonic as well as
inherit the full range of attainable non-negative Pearson correlations.
This new general structure is referred to as the Multiple Risk Factor
dependence structure, and it is the main object of study
in the present paper.
\end{section}

\begin{section}{Multiple risk factor dependence structure}
\label{SecMRF}

Naturally, the gap between the stochastically independent and fully-comonotonic times of occurrence of r.f.'s in Definition \ref{AsimitGenDef} is somewhat
too harsh. Next we fill the gap by unifying
 Definitions \ref{AsimitGenDef} and \ref{SuGenDef}.

 Let $\mathbf{Y}:=(Y_1,\ldots,Y_n)'$ and $\mathbf{Z}:=(Z_1,\ldots,Z_n)'$ be two r.v.'s of dimension $n(\in\mathbf{N})$.
Next we formulate the MRF models of interest in this paper.

\begin{definition}
\label{MRF-stoch-Def}
Assume that $\mathbf{Y}\sim Pa_{1,\ldots,n}^{c^l}(II)(\boldsymbol{\sigma},\ \boldsymbol{\alpha})$ as in Definition \ref{AsimitGenDef}
and  $\mathbf{Z}\sim Pa_{1,\ldots,n}^{c^m}(II)(\boldsymbol{\sigma},\ \boldsymbol{\gamma})$ as in Definition \ref{SuGenDef}
are stochastically independent, then the r.v.
$\mathbf{X}=(X_1,\ldots,X_n)'$ with each coordinate $X_i=Y_i\wedge Z_i,\ i=1,\ldots,n$
is said to follow a Multiple Risk Factor  dependence with Pareto of the 2nd kind univariate marginal distributions; notationally
$\mathbf{X}\sim Pa_{1,\ldots,n}^c(II)(\boldsymbol{\sigma},\ \boldsymbol{\xi})$. Here
 $c=(c^l,\ c^m)$ is an $n\times (l+m)$-dimensional block matrix,
$\boldsymbol{\sigma}$ is an $n$-dimensional vector of scale parameters and
$\boldsymbol{\xi}$ is an $(l+m)$- dimensional vector of power parameters, such that
$\xi_j=\alpha_j,\ j=1,\ldots,l(\in\mathbf{N})$ and $\xi_j=\gamma_j,\ j=l+1.\ldots,l+m(\in\mathbf{N})$.
\end{definition}
\noindent In what follows, we sometimes write $\mathbf{X}\sim Pa_{1,\ldots,n}^{c}(II)(\boldsymbol{\sigma},\ \boldsymbol{\alpha},\
\boldsymbol{\gamma})$, if we wish to emphasize the presence of $\boldsymbol{\gamma}$
and $\boldsymbol{\alpha}$ vectors of parameters.

It is clear that Definition \ref{MRF-stoch-Def} unifies multiple risk factor dependence
 structures with and without singularities into
one encompassing structure. One practical motivation for the MRF structures stems from Sweeting (2011)
(also, Duffie \& Singleton (1999)). More specifically, consider
a portfolio of default times of $n$  business units. Further assume that each component of the portfolio is exposed to a set of possibly $(l+m)$ fatal
risk factors, of which some are idiosyncratic - $\left\{j\in\{1,\ldots, l+m\}:\sum_{i=1}^n c_{i,j}=1\right\}$,
others are systemic with fully-commonotonic occurrence times - $\left\{j\in\{1,\ldots, l\}:\sum_{i=1}^n c_{i,j}>1\right\}$, and there exists yet another group of risk factors
$\left\{j\in\{l+1,\ldots, l+m\}:\right.$ $\left.\sum_{i=1}^n c_{i,j}>1\right\}$ that attack the risk components at positively dependent
but not perfectly dependent times. Then, assuming that the hitting times of the risk factors are stochastically independent group-wise and
also that the risk factors are gamma distributed as in CreditRisk$^+$, the MRF structure \`a la  Definition \ref{MRF-stoch-Def} is obtained. Another motivation for the MRF
dependencies stems from the recent trends of prudence that have taken the modern financial
risk measurement by storm. Indeed, multivariate probability distributions having heavy tailed
and positively dependent  univariate margins are in a significant
practical demand
nowadays.

The joint d.d.f. of
the MRF r.v.'s is given in the next theorem. Recall that $\mathcal{RC}_j$ has been defined
as the set that contains all the risk components that are exposed to the $j$-th risk
factor, $j\in\{1,\ldots,l+m\}$. The proof of the theorem is simply by construction of the MRF r.v.'s, and it is thus
omitted.

\begin{theorem}
\label{def-gp}			
Let $\mathbf{X}\sim Pa_{1,\ldots,n}^c(II)(\boldsymbol{\sigma},\ \boldsymbol{\xi})$, then the joint d.d.f. of
its coordinates is given by
\begin{equation}
\label{ddf-gp}
\overline{F}(x_1,\ldots,x_n)=\prod_{j=1}^{l}\left(1+\bigvee_{i\in\mathcal{RC}_j}
\frac{x_i}{\sigma_i}\right)^{-\xi_j}
\prod_{j=l+1}^{l+m}\left(1+\sum_{i\in\mathcal{RC}_j}\frac{x_i}{\sigma_i}\right)^{-\xi_j},\
(x_1,\ldots,x_n)'\in\mathbf{R}_+^n,
\end{equation}
where $\boldsymbol{\sigma}$ is an $n$-dimensional vector of scale parameters, and $\boldsymbol{\xi}$ is an $(l+m)$- dimensional vector of power parameters.
\end{theorem}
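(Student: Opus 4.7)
The plan is to proceed directly by construction, exploiting Definition \ref{MRF-stoch-Def} and the already-established d.d.f.'s from Theorems for the Asimit-type and Su--Furman-type generalized structures.

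First, I would write $\overline{F}(x_1,\ldots,x_n)=\mathbf{P}[X_1>x_1,\ldots,X_n>x_n]$ and use the coordinatewise definition $X_i=Y_i\wedge Z_i$ to observe that the event $\{X_i>x_i\}$ is the intersection $\{Y_i>x_i\}\cap\{Z_i>x_i\}$. Intersecting over $i=1,\ldots,n$, the joint survival event splits into $\{Y_1>x_1,\ldots,Y_n>x_n\}\cap\{Z_1>x_1,\ldots,Z_n>x_n\}$.

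Next, I would invoke the stochastic independence of $\mathbf{Y}$ and $\mathbf{Z}$ assumed in Definition \ref{MRF-stoch-Def} to factor the joint probability as $\overline{F}_{\mathbf{Y}}(x_1,\ldots,x_n)\,\overline{F}_{\mathbf{Z}}(x_1,\ldots,x_n)$. Since $\mathbf{Y}\sim Pa_{1,\ldots,n}^{c^l}(II)(\boldsymbol{\sigma},\boldsymbol{\alpha})$, Theorem \ref{AsimitGenProp} (equation (\ref{ddf_c})) gives the first factor as $\prod_{j=1}^{l}(1+\bigvee_{i\in\mathcal{RC}_j}x_i/\sigma_i)^{-\alpha_j}$. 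Since $\mathbf{Z}\sim Pa_{1,\ldots,n}^{c^m}(II)(\boldsymbol{\sigma},\boldsymbol{\gamma})$, the second (generalized Su--Furman) theorem and equation (\ref{ddf-ac}) give the second factor as $\prod_{j=1}^{m}(1+\sum_{i\in\mathcal{RC}_j}x_i/\sigma_i)^{-\gamma_j}$, where the index set $\mathcal{RC}_j$ is read off from $c^m$.

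Finally, I would relabel so that the two blocks of power parameters align with the concatenated block matrix $c=(c^l,c^m)$: set $\xi_j=\alpha_j$ for $j=1,\ldots,l$ and $\xi_j=\gamma_{j-l}$ for $j=l+1,\ldots,l+m$, and similarly reinterpret $\mathcal{RC}_j$ with respect to the columns of $c$. Substituting yields (\ref{ddf-gp}). There is essentially no technical obstacle here, since the two component d.d.f.'s are already in hand and the whole argument rests on nothing more than the independence of $\mathbf{Y}$ and $\mathbf{Z}$ together with the elementary identity $\{Y_i\wedge Z_i>x_i\}=\{Y_i>x_i\}\cap\{Z_i>x_i\}$; the only care needed is in bookkeeping the indexing so that the column partition of the block matrix $c$ matches the split between the max-type and sum-type factors.
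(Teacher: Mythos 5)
Your argument is correct and is precisely the "by construction" proof the paper omits: using $\{Y_i\wedge Z_i>x_i\}=\{Y_i>x_i\}\cap\{Z_i>x_i\}$, the independence of $\mathbf{Y}$ and $\mathbf{Z}$, and the d.d.f.'s in (\ref{ddf_c}) and (\ref{ddf-ac}), with the relabelling $\xi_j=\alpha_j$ for $j\leq l$ and $\xi_j=\gamma_{j-l}$ for $j>l$. Nothing further is needed.
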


On the one hand, d.d.f. (\ref{ddf-gp}) can be viewed as a new multivariate probability
 distribution
with Lomax distributed univariate margins, which, for appropriately chosen $c$
matrices, reduces to, e.g., the classical model of Arnold (1983, 2015), as well as to these of Chiragiev \& Landsman (2009),
Asimit et al. (2010) and Su \& Furman (2016). As such it can be used to describe dependent
(actuarial) risks with heavy tailed marginal behaviour. On the other hand,
d.d.f. (\ref{ddf-gp}) enjoys the interpretation of a dependent default times
model that encompasses an arbitrary number of exogenous risk factors having
stochastically independent, positively orthant dependent (see, Joe (1997) and Denuit et al. (2005)) or fully-comonotonic
occurrence times. In either case, the following theorem is of basic importance.
The proof is simple and thus omitted.

\begin{theorem}
\label{marginal-p}
Let $\mathbf{X}\sim Pa_{1,\ldots,n}^c(II)(\boldsymbol{\sigma},\ \boldsymbol{\xi})$, then, for $i=1,\dots,n$, we have that
each coordinate is Lomax distributed, i.e., $X_i\sim Pa(II)(\sigma_i,\ \xi_{c,i}),$ where $\xi_{c,i}=\sum_{j\in\mathcal{RF}_i}\xi_j $
and therefore
\begin{itemize}
\item the d.d.f. of $X_i$ is
\[
\overline{F}(x)=\left(1+\frac{x}{\sigma_i}\right)^{-\xi_{c,i}},\ x\in\mathbf{R}_+;
\]
\item the mathematical expectation of $X_i$ is, for $\xi_{c,i}>1$,
\[
\mathbf{E}[X_i]=\frac{\sigma_i}{\xi_{c,i}-1};
\]
\item the variance of $X_i$ is, for $\xi_{c,i} >2$,
\[
\mathbf{Var}[X_i]=\frac{\sigma_i^2\xi_{c,i}}{(\xi_{c,i} -1)^2(\xi_{c,i}-2)}.
\]
\end{itemize}
\end{theorem}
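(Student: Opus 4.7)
\medskip
\noindent\textbf{Proof plan for Theorem \ref{marginal-p}.}
The plan is to obtain the marginal d.d.f. of $X_i$ in two short moves and then quote the standard moment formulas for the Lomax law. The natural first route is through the stochastic representation in Definition \ref{MRF-stoch-Def}: writing $X_i=Y_i\wedge Z_i$ with $\mathbf{Y}\sim Pa_{1,\ldots,n}^{c^l}(II)(\boldsymbol{\sigma},\boldsymbol{\alpha})$ and $\mathbf{Z}\sim Pa_{1,\ldots,n}^{c^m}(II)(\boldsymbol{\sigma},\boldsymbol{\gamma})$ stochastically independent, the marginals of the Asimit-type and Su--Furman-type blocks (noted explicitly right after Definitions \ref{AsimitGenDef} and \ref{SuGenDef}) give $Y_i\sim Pa(II)(\sigma_i,\alpha_{c,i})$ and $Z_i\sim Pa(II)(\sigma_i,\gamma_{c,i})$, where $\alpha_{c,i}=\sum_{j\in\mathcal{RF}_i^l}\alpha_j$ and $\gamma_{c,i}=\sum_{j\in\mathcal{RF}_i^m}\gamma_j$. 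Independence and the scalar identity $\mathbf{P}[Y_i\wedge Z_i>x]=\mathbf{P}[Y_i>x]\,\mathbf{P}[Z_i>x]$ then yield
\[
\mathbf{P}[X_i>x]=\left(1+\frac{x}{\sigma_i}\right)^{-\alpha_{c,i}}\left(1+\frac{x}{\sigma_i}\right)^{-\gamma_{c,i}}=\left(1+\frac{x}{\sigma_i}\right)^{-\xi_{c,i}},\ x\in\mathbf{R}_+,
\]
with $\xi_{c,i}=\alpha_{c,i}+\gamma_{c,i}=\sum_{j\in\mathcal{RF}_i}\xi_j$, since $\mathcal{RF}_i$ is by construction the disjoint union of the idiosyncratic-plus-comonotonic index set $\mathcal{RF}_i^l$ and the conditionally-independent index set $\mathcal{RF}_i^m$.

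As a sanity check that also bypasses the stochastic representation, I would alternatively derive the same marginal directly from the joint d.d.f.\ (\ref{ddf-gp}). Setting $x_k=0$ for all $k\neq i$ collapses each bracketed factor to $(1+x_i/\sigma_i)$ when $i\in\mathcal{RC}_j$ and to $1$ otherwise; collecting exponents recovers $\prod_{j\in\mathcal{RF}_i}(1+x_i/\sigma_i)^{-\xi_j}=(1+x_i/\sigma_i)^{-\xi_{c,i}}$, which matches the expression above. This second derivation is useful because it makes no use of the $Y$/$Z$ split and therefore also confirms internal consistency of (\ref{ddf-gp}) with the construction.

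Finally, the mean and variance follow from the classical moments of a $Pa(II)(\sigma_i,\xi_{c,i})$ random variable, which are obtained by one integration by parts (or equivalently via $\mathbf{E}[X_i^k]=\int_0^\infty k x^{k-1}(1+x/\sigma_i)^{-\xi_{c,i}}\,dx$ and the substitution $u=1+x/\sigma_i$), giving $\mathbf{E}[X_i]=\sigma_i/(\xi_{c,i}-1)$ for $\xi_{c,i}>1$ and $\mathbf{E}[X_i^2]=2\sigma_i^2/((\xi_{c,i}-1)(\xi_{c,i}-2))$ for $\xi_{c,i}>2$, from which $\mathbf{Var}[X_i]=\sigma_i^2\xi_{c,i}/((\xi_{c,i}-1)^2(\xi_{c,i}-2))$ drops out algebraically. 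There is essentially no obstacle here; the only point requiring a trace of care is the bookkeeping of the index set $\mathcal{RF}_i$ across the two blocks of the matrix $c=(c^l,c^m)$, ensuring that every risk factor hitting component $i$ is counted exactly once when summing the $\xi_j$'s.
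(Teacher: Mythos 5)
Your proposal is correct; the paper omits the proof of Theorem \ref{marginal-p} as ``simple'', and your argument (marginal Lomax laws of $Y_i$ and $Z_i$, independence giving the product of the d.d.f.'s, and the standard Lomax moment computations) is exactly the intended routine verification, with the joint-d.d.f.\ check via (\ref{ddf-gp}) a harmless bonus. No gaps.
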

\noindent  We note in passing that the expectation/variance of the r.v. $X_i,\ i=1,\ldots,n$
 can be finite even if the expectations/variances of occurrence
times of some risk factors $j\in\{1,\ldots,l+m\}$ are infinite.

When the r.v.'s $X_1,\ldots,X_n$ denote times of default of various financial units $i\in\{1,\ldots,n\}$, the r.v.'s
$X_-:=\wedge_{i=1}^n X_i$ and $X_+:=\vee_{i=1}^n X_i$ denote, respectively,  first and  last default times.
We say that $X_-$ and $X_+$ have distributions
$F_-$ and $F_+$, respectively, and they play a pivotal role in the general
theory of credit
risk (Adalsteinsson (2014)) and in the mathematics of life contingencies (Bowers et al. (1997)).
We further investigate the d.d.f.'s of the two. In this respect, the following lemma is of central
importance.

\begin{lemma}
\label{FL2005} For $i=1,\ldots,n$, let $\Lambda_i\sim Ga(\gamma_i(\in\mathbf{R}_+),\ \sigma_i(\in\mathbf{R}_+)),$ denote independent
gamma distributed r.v.'s, and let $\Lambda^\ast :=\Lambda_1+\cdots +\Lambda_n$ be their sum.
Then  $\Lambda^\ast\sim Ga(\gamma^\ast+K,\ \sigma_{+}),$ where
$\gamma^\ast=\gamma_1+\cdots+\gamma_n$, $\sigma_{+}=\vee_{i=1}^n \sigma_i$ and $K$ is an
integer-valued non-negative r.v. with the probability mass function given, for $k\in\mathbf{N}_0$,  by
\begin{equation}
\label{pk}
p_k=\mathbf{P}[K=k]=c_{+} \delta_k
\end{equation}
where
\[
c_{+}=\prod_{i=1}^n \left(\frac{\sigma_i}{\sigma_{+}}\right)^{\gamma_i}
\]
and
\[
\delta_k=k^{-1}\sum_{l=1}^k\sum_{i=1}^n \gamma_i \left(
1-\frac{\sigma_i}{\sigma_{+}}
\right)^l\delta_{k-l}, \textnormal{ for } k>0 \textnormal{ and } \delta_0=1.
\]
Moreover, the distribution of $E_{\Lambda^\ast}$ is $Pa(II)(\sigma_+,\ \gamma^\ast+K)$.
\end{lemma}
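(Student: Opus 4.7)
\textbf{Proof plan for Lemma \ref{FL2005}.}
My plan is to work with the Laplace transform (equivalently the MGF) of $\Lambda^\ast=\Lambda_1+\cdots+\Lambda_n$ and recognise it as the Laplace transform of the claimed gamma mixture. Since the $\Lambda_i$ are independent with $\Lambda_i\sim Ga(\gamma_i,\sigma_i)$, we have
$$
\mathbf{E}[e^{t\Lambda^\ast}]=\prod_{i=1}^n\left(\frac{\sigma_i}{\sigma_i-t}\right)^{\gamma_i},\quad t<\sigma_{+}.
$$
Setting $r_i:=\sigma_i/\sigma_{+}\in(0,1]$ and $u:=t/\sigma_{+}$, the idea is to pull out the largest rate $\sigma_{+}$ via the identity $\sigma_i-t=\sigma_{+}(1-u)\bigl(1-(1-r_i)/(1-u)\bigr)$, which gives
$$
\mathbf{E}[e^{t\Lambda^\ast}]=c_{+}\,(1-u)^{-\gamma^\ast}\prod_{i=1}^n\bigl(1-(1-r_i)/(1-u)\bigr)^{-\gamma_i},
$$
with $c_{+}=\prod_{i}r_i^{\gamma_i}$. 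The goal becomes to show that the last product equals $\sum_{k\ge0}\delta_k(1-u)^{-k}$ with $\delta_k$ satisfying the recursion stated in the lemma, because then the right-hand side collapses to $\sum_{k\ge0}p_k\bigl(\sigma_{+}/(\sigma_{+}-t)\bigr)^{\gamma^\ast+k}$, which is the Laplace transform of $Ga(\gamma^\ast+K,\sigma_{+})$ where $K$ has pmf $p_k=c_{+}\delta_k$.

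The key technical step, and what I expect to be the main obstacle, is to verify the generating-function identity
$$
D(z):=\sum_{k\ge0}\delta_k z^k=\prod_{i=1}^n\bigl(1-(1-r_i)z\bigr)^{-\gamma_i},\qquad |z|<\min_i(1-r_i)^{-1}.
$$
I would establish this by showing that the function on the right satisfies the same first-order ODE as the formal series defined by the recursion. Writing $\zeta_l:=\sum_i\gamma_i(1-r_i)^l$, the recursion $k\delta_k=\sum_{l=1}^k\zeta_l\,\delta_{k-l}$ is a Cauchy convolution, so it rewrites as $zD'(z)=D(z)\,Z(z)$ with $Z(z):=\sum_{l\ge1}\zeta_l z^l$, together with $D(0)=\delta_0=1$. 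Separating variables gives $\log D(z)=\sum_{l\ge1}\zeta_l z^l/l=-\sum_i\gamma_i\log(1-(1-r_i)z)$, which is exactly the product on the right. Substituting $z=(1-u)^{-1}$ into this identity and combining with the reorganised MGF above completes the identification. As a sanity check, $D(1)=\prod_i r_i^{-\gamma_i}=c_{+}^{-1}$, so $\sum_k p_k=c_{+}D(1)=1$, confirming that $\{p_k\}$ is indeed a probability mass function.

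Finally, the assertion about $E_{\Lambda^\ast}$ follows by conditioning on $K$: given $K=k$, $\Lambda^\ast\sim Ga(\gamma^\ast+k,\sigma_{+})$, and the gamma-exponential mixing identity recalled at the start of Section \ref{Sec-2} yields $E_{\Lambda^\ast}\mid K=k\sim Pa(II)(\sigma_{+},\gamma^\ast+k)$. Mixing over $K$ then gives $E_{\Lambda^\ast}\sim Pa(II)(\sigma_{+},\gamma^\ast+K)$, as claimed.
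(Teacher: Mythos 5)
Your proposal is correct, but it is more self-contained than the paper's argument. The paper proves the first assertion (that $\Lambda^\ast$ is a gamma r.v.\ with random shape $\gamma^\ast+K$ and the stated recursion for $\delta_k$) simply by citing Moschopoulos (1985) (also Furman \& Landsman (2005)), and then devotes its short proof entirely to the second assertion, computing $\overline{F}_{E_{\Lambda^\ast}}(x)=\int_{\mathbf{R}_+}e^{-xt}\sum_{k\ge 0}p_k f_{Ga(\gamma^\ast+k,\sigma_+)}(t)\,dt=\sum_{k\ge 0}p_k\left(1+x/\sigma_+\right)^{-(\gamma^\ast+k)}$ by interchanging sum and integral --- which is exactly your final conditioning/mixing step. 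What you add is a from-scratch derivation of the Moschopoulos representation: factoring the largest rate $\sigma_+$ out of the transform and identifying the remaining product with the generating function $D(z)=\sum_k\delta_k z^k$ via the first-order ODE $zD'(z)=Z(z)D(z)$ equivalent to the Cauchy-convolution recursion; this is essentially Moschopoulos's own method, and your check $D(1)=c_+^{-1}$ neatly confirms that $\{p_k\}$ is a genuine p.m.f.\ (nonnegativity of the $\delta_k$ being immediate from the recursion). The trade-off is transparency versus length: your route makes the lemma independent of the cited literature, while the paper's proof is a two-line reduction to a known result. Two small points to tidy up: the moment generating function of the sum exists for $t<\wedge_{i=1}^n\sigma_i$, not $t<\sigma_+$ (or simply work with the Laplace transform at $t\le 0$ and invoke uniqueness), and when you substitute $z=(1-u)^{-1}$ you should note that this $z$ lies within the radius of convergence $\left(\vee_i(1-r_i)\right)^{-1}$ precisely on that domain, which you implicitly use.
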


We are now ready to prove that in the context of the MRF dependencies, the distribution of the first default time is Lomax with a random power parameter.
\begin{theorem}
\label{pro-min}
Let
$\mathbf{X}\sim Pa_{1,\ldots,n}^c(II)(\boldsymbol{\sigma},\ \boldsymbol{\xi})$, then $X_-\sim Pa(II)(\sigma_+,\xi^\ast+K)$, where
\[
\sigma_+=\left(
\bigvee_{j=1}^{l} \bigwedge_{i\in\mathcal{RC}_j}\sigma_i
\right)
\vee
\left(
\bigvee_{j=l+1}^{l+m}\left(\sum_{i\in\mathcal{RC}_j}\frac{1}{\sigma_i}\right)^{-1}
\right),
\]
and $\xi^\ast+K=\xi_1+\cdots+\xi_{l+m}+K$ is a random power parameter with the integer valued r.v. $K$ having probability mass function
(\ref{pk}).
\end{theorem}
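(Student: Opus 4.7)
The plan is to reduce the computation of the d.d.f.\ of $X_-$ to a product of Lomax survival functions and then reinterpret that product as a single Lomax survival function with a random power parameter via Lemma \ref{FL2005}. Setting $x_1=\cdots=x_n=x$ in \eqref{ddf-gp} gives
\[
\mathbf{P}[X_->x]=\prod_{j=1}^{l}\left(1+x\bigvee_{i\in\mathcal{RC}_j}\frac{1}{\sigma_i}\right)^{-\xi_j}\prod_{j=l+1}^{l+m}\left(1+x\sum_{i\in\mathcal{RC}_j}\frac{1}{\sigma_i}\right)^{-\xi_j}.
\]
Introducing $\tau_j=\bigwedge_{i\in\mathcal{RC}_j}\sigma_i$ for $j=1,\ldots,l$ and $\tau_j=(\sum_{i\in\mathcal{RC}_j}1/\sigma_i)^{-1}$ for $j=l+1,\ldots,l+m$ rewrites this as $\prod_{j=1}^{l+m}(1+x/\tau_j)^{-\xi_j}$.

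Next, I would recognize each factor as the d.d.f.\ of an independent $Pa(II)(\tau_j,\xi_j)$ r.v., and use the mixture identity from Section \ref{Sec-2}: a $Pa(II)(\tau_j,\xi_j)$ r.v.\ is distributed as $E_{\Lambda_j}$ with $\Lambda_j\sim Ga(\xi_j,\tau_j)$. Taking these mixing variables $\Lambda_1,\ldots,\Lambda_{l+m}$ mutually independent, the conditional independence of the corresponding exponential minima yields $X_-\overset{d}{=}E_{\Lambda^\ast}$, where $\Lambda^\ast=\Lambda_1+\cdots+\Lambda_{l+m}$, since the minimum of conditionally independent exponentials has rate equal to the sum of the rates.

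Now I would invoke Lemma \ref{FL2005} directly with this sum of independent gammas: it delivers $\Lambda^\ast\sim Ga(\xi^\ast+K,\sigma_+)$ where $\xi^\ast=\xi_1+\cdots+\xi_{l+m}$, $\sigma_+=\bigvee_{j=1}^{l+m}\tau_j$ and the integer valued r.v.\ $K$ has p.m.f.\ \eqref{pk}. Unpacking the definition of the $\tau_j$'s,
\[
\sigma_+=\left(\bigvee_{j=1}^{l}\bigwedge_{i\in\mathcal{RC}_j}\sigma_i\right)\vee\left(\bigvee_{j=l+1}^{l+m}\left(\sum_{i\in\mathcal{RC}_j}\frac{1}{\sigma_i}\right)^{-1}\right),
\]
as required. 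Combining with the final assertion of Lemma \ref{FL2005}, namely $E_{\Lambda^\ast}\sim Pa(II)(\sigma_+,\xi^\ast+K)$, concludes the proof.

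The only subtle step is the middle one: the factorized d.d.f.\ of $X_-$ does not obviously come from a single gamma mixed exponential, and one must justify pooling the $l+m$ independent gamma rates into $\Lambda^\ast$ and then feeding $\Lambda^\ast$ (whose components have \emph{different} rate parameters $\tau_j$) into Lemma \ref{FL2005}; this is where the lemma's treatment of non-identical gamma scale parameters is essential, and the careful identification of the $\tau_j$'s with the max--min/harmonic-sum expressions in the statement is the main bookkeeping obstacle.
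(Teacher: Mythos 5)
Your proposal is correct and follows essentially the same route as the paper: set $x_1=\cdots=x_n=x$ in (\ref{ddf-gp}), recognize the resulting product $\prod_{j=1}^{l+m}(1+x/\tau_j)^{-\xi_j}$ as the d.d.f. of $E_{\Lambda^\ast}$ with $\Lambda^\ast$ a sum of independent gammas $Ga(\xi_j,\tau_j)$ (the paper phrases this via the Laplace transform $\int\exp\{-x\sum_j\lambda_j\}\,dG$ rather than via the minimum of conditionally independent exponentials, but the content is identical), and then apply both parts of Lemma \ref{FL2005}. Your identification of the $\tau_j$'s and of $\sigma_+=\bigvee_j\tau_j$ matches the paper exactly.
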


The next assertion is similar to Proposition 2 of Vernic (2011) as well as to
Proposition 2.4 of Su \& Furman (2016).

\begin{corollary}
\label{max-th}
Let $\mathbf{X}\sim Pa_{1,\ldots,n}^c(II)(\boldsymbol{\sigma},\ \boldsymbol{\xi})$, then the d.d.f. of the last default r.v. is a linear
combination of the d.d.f.'s of univariate Pareto of the 2nd kind r.v.'s with random power parameters, namely
\begin{equation}
\label{maxddf}
\overline{F}_+(x)=\sum_{\mathcal{S}\subseteq\{1,\ldots,n\}} (-1)^{|\mathcal{S}|-1}
\overline{F}_{\mathcal{S}-}(x),\ x\in\mathbf{R}_+,
\end{equation}
where $X_{\mathcal{S}-}=\wedge_{s\in\mathcal{S}\subseteq\{1,\ldots,n\}}X_s$ and
$X_{\mathcal{S}-}\sim F_{\mathcal{S}-}$.
\end{corollary}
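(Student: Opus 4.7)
The plan is to derive the claim as a direct application of inclusion-exclusion on the event $\{X_+ > x\}$, combined with the observation that every subvector of an MRF vector is itself of MRF type, so that Theorem \ref{pro-min} applies to each of the intersections that arise.

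First I would write
\[
\overline{F}_+(x)=\mathbf{P}\!\left[\bigvee_{i=1}^n X_i>x\right]=\mathbf{P}\!\left[\bigcup_{i=1}^n\{X_i>x\}\right],
\]
and invoke the Poincar\'e inclusion-exclusion formula to obtain
\[
\overline{F}_+(x)=\sum_{\emptyset\neq\mathcal{S}\subseteq\{1,\ldots,n\}}(-1)^{|\mathcal{S}|-1}\,\mathbf{P}\!\left[\bigcap_{s\in\mathcal{S}}\{X_s>x\}\right].
\]
For each fixed $\mathcal{S}$, the intersection event is precisely $\{\bigwedge_{s\in\mathcal{S}}X_s>x\}=\{X_{\mathcal{S}-}>x\}$, so the probability in the summand equals $\overline{F}_{\mathcal{S}-}(x)$, which delivers identity (\ref{maxddf}) verbatim.

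It remains to justify the last clause of the corollary, namely that each $\overline{F}_{\mathcal{S}-}$ is the d.d.f.\ of a Pareto of the 2nd kind r.v.\ with a random power parameter. For this I would point out that restricting the stochastic representation in Definition \ref{MRF-stoch-Def} to indices $i\in\mathcal{S}$ amounts to keeping only the rows of $c=(c^l,c^m)$ corresponding to $\mathcal{S}$: the r.v.'s ${}_{i}E_{\Lambda_j}$, $E_{\Lambda_j}$ and the gamma variates $\Lambda_j$ are untouched, and the $\min$-operations defining $Y_i,Z_i$ still make sense. Hence $(X_s)_{s\in\mathcal{S}}\sim Pa^{c_\mathcal{S}}_{\mathcal{S}}(II)(\boldsymbol{\sigma}_\mathcal{S},\boldsymbol{\xi})$ with $c_\mathcal{S}$ the submatrix of $c$ indexed by the rows in $\mathcal{S}$, and Theorem \ref{pro-min} applied to this subvector yields $X_{\mathcal{S}-}\sim Pa(II)(\sigma_{\mathcal{S},+},\xi^\ast_\mathcal{S}+K_\mathcal{S})$ with an integer-valued random power parameter governed by a probability mass function of the shape (\ref{pk}).

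No step is genuinely hard; the only point that requires a moment of care is the last one, namely checking that MRF structure is closed under taking coordinate subvectors, so that Theorem \ref{pro-min} is applicable to each $X_{\mathcal{S}-}$. Once that bookkeeping on the exposure matrix $c_\mathcal{S}$ is in place, the inclusion-exclusion identity and Theorem \ref{pro-min} together finish the proof.
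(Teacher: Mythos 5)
Your proposal is correct and follows essentially the same route as the paper: inclusion--exclusion applied to $\{X_+>x\}=\bigcup_{i=1}^n\{X_i>x\}$ combined with Theorem \ref{pro-min} for the minima over subsets. Your extra remark that the MRF class is closed under taking coordinate subvectors (via the row-submatrix $c_{\mathcal{S}}$), so that Theorem \ref{pro-min} indeed applies to each $X_{\mathcal{S}-}$, is a useful piece of bookkeeping that the paper leaves implicit.
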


We conclude this section by stating yet another important property of the MRF dependence structures, which arguably distinguishes the construction
from the majority of the multivariate probability distributions existing nowadays. In this respect, let $k\leq n$ and let $i_1,\ldots,i_k\in\{1.\ldots,n\}$
establish an index set, then the MRF dependence structures allow the probability $\mathbf{P}[
X_{i_1}=\cdots=X_{i_k}
]$ to be non-zero. This phenomenon, which can be easily motivated in practice by, e.g., the presence of a parent subsidiary or similar contractual
relationships - in the context of default risk, and by occurrence of a catastrophe that affects a number of lives simultaneously - in the context of life contingencies,
is possible because the MRF distributions may have singularities and thus are in general not absolutely continuous with respect to the Lebesgue
measure.
In the next theorem we study the probability mass that is assigned to the singularity.

Before stating and proving the next result, we extend some of the notations already used. First, the set of all risk factors that attack sub-portfolio
$(X_{i_1},\ldots,X_{i_k})'$ is, for $k\geq 2$, in the sequel  denoted by
\begin{equation}
\label{RFall}
\mathcal{RF}_{i_1,\ldots,i_k}:=\left\{j\in\{1,\ldots,l+m\}:\ c_{i_h,j}=1 \textnormal{ for at least one } i_h\in\{i_1,\ldots,i_k\}\right\},
\end{equation}
which is the union of two disjoint sets, that is of
\begin{equation}
\label{RFsing}
\mathcal{RF}_{(i_1,\ldots,i_k)}:=\{j\in\{1,\ldots,l+m\}:  c_{i_h,j}=1
\textnormal{ for all } i_h\in\{i_1,\ldots,i_k\}\}
\end{equation}
and
\begin{equation}
\label{RFnotsing}
\mathcal{RF}_{\overline{(i_1,\ldots,i_k)}}:=\mathcal{RF}_{i_1,\ldots,i_k}\setminus
\mathcal{RF}_{(i_1,\ldots,i_k)}.
\end{equation}
Also, let
\begin{equation}
\label{RFihnotsing}
\mathcal{RF}_{i_h,\overline{(i_1,\ldots,i_k)}}:=\mathcal{RF}_{i_h}\setminus \mathcal{RF}_{(i_1,\ldots,i_k)}.
\end{equation}
Based on the notation above, we can have, e.g.,
$\xi_{c,i_1,\ldots,i_k}=\sum_{j\in\mathcal{RF}_{i_1,\ldots,i_k}} \xi_j$ and
in a similar fashion
$\xi_{c,(i_1,\ldots,i_k)}=\sum_{j\in\mathcal{RF}_{(i_1,\ldots,i_k)}} \xi_j$,
$\xi_{c,\overline{(i_1,\ldots,i_k)}}=\sum_{j\in\mathcal{RF}_{\overline{(i_1,\ldots,i_k)}}} \xi_j$ as
well as
$\xi_{c,{i_h,\overline{(i_1,\ldots,i_k)}}}=\sum_{j\in\mathcal{RF}_{i_h,\overline{(i_1,\ldots,i_k)}}} \xi_j$. Of course,
by analogy, we may need to sum over the coordinates of $\boldsymbol{\gamma}$ or
 $\boldsymbol{\alpha}$, and in such cases we add superscripts `l' and `m', respectively,
 to sets
(\ref{RFall}) to (\ref{RFihnotsing}).
\begin{theorem}
\label{sim-default}
Let $\mathbf{X}\sim Pa_{1,\ldots,n}^c(II)(\boldsymbol{\sigma},\ \boldsymbol{\xi})$,
and let as before $i_1,\ldots,i_k\in\{1,\ldots,n\}$ for
$2\leq k\leq n$. Then we have that
\begin{equation}
\label{probsin}
\mathbf{P}[X_{i_1}/\sigma_{i_1}=\cdots= X_{i_k}/\sigma_{i_k}]=\alpha_{c,(i_1.\ldots,i_k)}\mathbf{E}\left[\frac{1}{\xi_{c,i_1,\ldots,i_k}+K}\right],
\end{equation}
where $K$ is an integer-valued r.v. with the p.m.f. \`a la (\ref{pk}).
\end{theorem}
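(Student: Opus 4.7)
The plan rests on the representation $X_i=Y_i\wedge Z_i$ from Definition \ref{MRF-stoch-Def} together with the observation that only the Asimit-type piece $\mathbf{Y}$ can generate mass on a singularity, while the absolutely continuous Su--Furman piece $\mathbf{Z}$ cannot. Since $\mathbf{P}[Z_{i_h}/\sigma_{i_h}=Z_{i_{h'}}/\sigma_{i_{h'}}]=0$ for $h\neq h'$, the event in question coincides almost surely with
\[
\{Y_{i_1}/\sigma_{i_1}=\cdots=Y_{i_k}/\sigma_{i_k}=T\}\cap\{Z_{i_h}/\sigma_{i_h}\geq T,\ h=1,\ldots,k\}.
\]
By the stochastic independence of $\mathbf{Y}$ and $\mathbf{Z}$, its probability equals $\int_0^\infty f_Y(t)\,\overline{G}_Z(t)\,dt$, where $f_Y$ is the singular density of the common value of the $Y$-ratios and $\overline{G}_Z$ is the joint survival of the corresponding $Z$-ratios at level $t$.

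I then compute each factor by conditioning on the underlying gamma frailties. For $\mathbf{Y}$ a standard competing-risks argument gives $f_Y(t\mid\boldsymbol{\Lambda}^Y)=\Lambda^Y_{\mathcal{A}}\,\exp(-t\,\Lambda^{Y,\star})$, where $\Lambda^Y_{\mathcal{A}}:=\sum_{j\in\mathcal{RF}^l_{(i_1,\ldots,i_k)}}\Lambda^Y_j$ collects the Asimit factors hitting every $i_h$ and $\Lambda^{Y,\star}:=\sum_{j\in\mathcal{RF}^l_{i_1,\ldots,i_k}}\Lambda^Y_j$ collects those hitting at least one. For $\mathbf{Z}$, the independent-copies structure of the ${}_iE_{\Lambda^Z_j}$ gives $\overline{G}_Z(t\mid\boldsymbol{\Lambda}^Z)=\exp(-t\,\Lambda^{Z,\star})$ with $\Lambda^{Z,\star}:=\sum_{j=1}^{m}n_j\Lambda^Z_j$ and $n_j:=|\mathcal{RC}^m_j\cap\{i_1,\ldots,i_k\}|$. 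Integrating out $t$ yields
\[
\mathbf{P}[X_{i_1}/\sigma_{i_1}=\cdots=X_{i_k}/\sigma_{i_k}]=\mathbf{E}\left[\frac{\Lambda^Y_{\mathcal{A}}}{\Lambda^{Y,\star}+\Lambda^{Z,\star}}\right].
\]

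The final piece is a gamma-shift identity: if $\Lambda\sim Ga(\alpha,1)$ is independent of $V\geq 0$, then
\[
\mathbf{E}\left[\frac{\Lambda}{\Lambda+V}\right]=\alpha\,\mathbf{E}\left[\frac{1}{\Lambda'+V}\right],\qquad \Lambda'\sim Ga(\alpha+1,1),
\]
which follows from $\mathbf{E}[\Lambda e^{-s\Lambda}]=\alpha(1+s)^{-\alpha-1}=\alpha\,\mathbf{E}[e^{-s\Lambda'}]$ combined with $1/x=\int_0^\infty e^{-sx}ds$. Applied with $\Lambda=\Lambda^Y_{\mathcal{A}}$ and $V=\Lambda^{Y,\star}-\Lambda^Y_{\mathcal{A}}+\Lambda^{Z,\star}$, this reduces the target to $\alpha_{c,(i_1,\ldots,i_k)}\,\mathbf{E}[1/(\Lambda'+V)]$. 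Now $\Lambda'+V$ is an independent sum of a $Ga(\alpha_{c,i_1,\ldots,i_k}+1,1)$ term (the two rate-$1$ Asimit pieces merge) and $Ga(\gamma_j,1/n_j)$ terms for $j\in\mathcal{RF}^m_{i_1,\ldots,i_k}$. Lemma \ref{FL2005} rewrites this mixture of different rates as $Ga(\xi_{c,i_1,\ldots,i_k}+1+K,\,1)$ with $K$ having a p.m.f.\ of the form (\ref{pk}), and applying $\mathbf{E}[1/Ga(\beta,1)]=1/(\beta-1)$ conditionally on $K$ delivers (\ref{probsin}). The main obstacle is bookkeeping: correctly partitioning the frailties into common-versus-peripheral groups across both the $l$- and $m$-blocks, and verifying that the rate reshuffling of Lemma \ref{FL2005} packages the various gamma rates into precisely the random shape $\xi_{c,i_1,\ldots,i_k}+K$ advertised by the theorem.
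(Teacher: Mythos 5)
Your proof is correct and takes essentially the same route as the paper: the same reduction of the singular event to the comonotonic (Asimit-type) block attaining the common scaled value while the Su--Furman block merely survives past it, and the same use of Lemma \ref{FL2005} to generate the random shift $K$; you only reorganize the computation by integrating over $t$ at the frailty level and finishing with a gamma size-biasing identity plus the inverse-gamma moment, whereas the paper first computes the defective density $\alpha_{c,(i_1,\ldots,i_k)}(1+z)^{-\alpha_{c,i_1,\ldots,i_k}-1}$ unconditionally and then integrates the resulting Lomax mixture in $z$ term by term (note the paper's proof swaps the roles of the labels $\mathbf{Y}$ and $\mathbf{Z}$ relative to Definition \ref{MRF-stoch-Def}, while you keep the definition's convention). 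One minor point: the almost-sure identification of the event requires not only $\mathbf{P}[Z_{i_h}/\sigma_{i_h}=Z_{i_{h'}}/\sigma_{i_{h'}}]=0$ but also that a $Z$-ratio coincides with a $Y$-ratio with probability zero, which follows from the absolute continuity of $\mathbf{Z}$ and its independence of $\mathbf{Y}$.
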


Interestingly, if the r.v. $K$ is zero almost surely, then  (\ref{probsin}) reduces to
\[
\mathbf{P}[X_{i_1}/\sigma_{i_1}=\cdots =X_{i_k}/\sigma_{i_k}]=
\frac{\alpha_{c,(i_1,\ldots,i_k)}}{\xi_{c,i_1,\ldots,i_k}}.
\]
This happens when there is no risk factors with positively dependent (but not fully
commonotonic) default times present in the model. If in addition, $\mathcal{RC}_j$ is
an empty set for all $j=l+1,\ldots,l+m$,
then (\ref{probsin}) simplifies to
\[
\mathbf{P}[Z_{i_1}/\sigma_{i_1}=\cdots =Z_{i_k}/\sigma_{i_k}]=
\frac{\alpha_{c,(i_1,\ldots,i_k)}}{\alpha_{c,i_1,\ldots,i_k}},
\]
which is a slightly more general expression than the one obtained in Asimit et al. (2010).

In summary, it is instructive to note that the probability mass that is assigned to the
singular part of d.d.f. (\ref{ddf-gp}) is proportional to $\alpha_{c,(i_1,\ldots,i_k)}$.
In the language of default risk, this means that the stronger the contribution of the
risk factors having fully-comonotonic hitting times is, the higher the probability of
simultaneous default becomes.
\end{section}

\begin{section}{Multiple Risk Factor dependencies: bivariate case}
\label{Sec-4}
In this section we study some properties of the bivariate variant of the MRF structure.
For instance, we
derive joint product moments of $(X_i,X_k)'\sim Pa_{i,k}^c(II)(\boldsymbol{\sigma},\ \boldsymbol{\xi})$, where
$\boldsymbol{\sigma}=(\sigma_i,\ \sigma_k)'$ and $\boldsymbol{\xi}=(\xi_1,\ldots,\xi_{l+m})'=(\alpha_1,\ldots,\alpha_l,\gamma_{l+1},\ldots,\gamma_{l+m})'$ are two deterministic vectors
of scale and power parameters, respectively.

It is a simple matter to see that d.d.f. (\ref{ddf-gp}) reduces to
\begin{eqnarray}
\label{ddf-2}
&&\overline{F}(x,y) \label{ddfbiv} \\
&=&\left(
1+\frac{x}{\sigma_i}\bigvee \frac{y}{\sigma_k}
\right)^{-\alpha_{c,(i,k)}}
\left(
1+\frac{x}{\sigma_i}+\frac{y}{\sigma_k}
\right)^{-\gamma_{c,(i,k)}}
\left(
1+\frac{x}{\sigma_i}
\right)^{-\xi_{c,i,\overline{(i,k)}}}
\left(
1+\frac{y}{\sigma_k}
\right)^{-\xi_{c,k,\overline{(i,k)}}} \notag ,
\end{eqnarray}
for $1\leq i\neq k\leq n$ and $(x,y)'\in\mathbf{R}_+^2$.
The next theorem states the Lebesgue decomposition of the d.d.f. above.
\begin{theorem}
\label{decomp}
For $1\leq i\neq k\leq n$,
let $(X_i,X_k)' \sim Pa(II)^{c}_{i,k}(\boldsymbol{\sigma}, \boldsymbol{\alpha},\boldsymbol{\gamma})$,
 then its d.d.f. can be decomposed as
\begin{equation}
\label{decomp-formula}
\overline{F}(x,y)=a\overline{F}_s(x,y)+(1-a)\overline{F}_{ac}(x,y), \textnormal{ where } a\in[0,\ 1] \textnormal{ and }
(x,y)'\in\mathbf{R}_+^2.
\end{equation}
In (\ref{decomp-formula}), the singular component concentrates its mass on the line
$\left\{(x,y)'\in\mathbf{R}_+^2:\ \frac{x}{\sigma_i}=\frac{y}{\sigma_k}\right\}$ and is given by
\begin{equation}
\label{scomp}
\overline{F}_s(x,y)=\frac{\alpha_{c,(i,k)}}{a}\mathbf{E}\left[
\frac{1}{\xi_{c,i,k}+K}\left(1+\frac{x}{\sigma_i}\bigvee\frac{y}{\sigma_k}\right)^{-(\xi_{c,i,k}+K)}
\right],
\end{equation}
whereas the absolutely continuous component is given, for $(x,y)'\in\mathbf{R}_+^2$, by
\begin{equation}
\label{acomp}
\overline{F}_{ac}(x,y)=\frac{1}{1-a}\left(
\overline{F}(x,y)-a\overline{F}_s(x,y)
\right),
\end{equation}
where $a=\mathbf{P}[X_i/\sigma_i=X_k/\sigma_k]$ is given in Theorem \ref{sim-default}, $\xi_{c,i,k}=\sum_{j\in\mathcal{RF}_{i,k}} \xi_j$
and $K$ is an
integer-valued non-negative r.v. with a p.m.f. \`a la (\ref{pk}).
\end{theorem}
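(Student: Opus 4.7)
The strategy is to compute the singular mass $a\,\overline{F}_s(x,y)=\mathbf{P}[X_i>x,\,X_k>y,\,S]$ directly from the stochastic representation in Definition~\ref{MRF-stoch-Def}, where $S:=\{X_i/\sigma_i=X_k/\sigma_k\}$. Writing $X_i=Y_i\wedge Z_i$ with $\mathbf{Y}$ of Asimit type and $\mathbf{Z}$ of Su--Furman type, and noting that $Z_i/\sigma_i\neq Z_k/\sigma_k$ almost surely (because the ${}_iE_{\Lambda_j}$ and ${}_kE_{\Lambda_j}$ are independent copies), the event $S$ can occur only when $Y_i/\sigma_i=Y_k/\sigma_k=E^*$ with $E^*:=\bigwedge_{j\in\mathcal{RF}_{(i,k)}^l}E_{\Lambda_j}$, and additionally $E^*\le U$, where $U$ is the minimum of $\bigwedge_{j\in\mathcal{RF}^l_{i,\overline{(i,k)}}}E_{\Lambda_j}$, $\bigwedge_{j\in\mathcal{RF}^l_{k,\overline{(i,k)}}}E_{\Lambda_j}$, $Z_i/\sigma_i$ and $Z_k/\sigma_k$. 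Setting $t:=(x/\sigma_i)\vee(y/\sigma_k)$, the problem reduces to computing $\mathbf{P}[E^*>t,\,E^*\le U]$.

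Next I would condition on $\boldsymbol{\Lambda}$: $E^*$ and $U$ become independent exponentials with respective rates $\Lambda_A:=\sum_{j\in\mathcal{RF}_{(i,k)}^l}\Lambda_j$ and $\Lambda_B$, so the standard exponential-race identity yields
\[
\mathbf{P}[E^*>t,\,E^*\le U\mid\boldsymbol{\Lambda}]=\frac{\Lambda_A}{\Lambda_A+\Lambda_B}\,e^{-(\Lambda_A+\Lambda_B)\,t}.
\]
The key observation for the unconditioning is that $\Lambda_B$ is a sum of independent gammas in which the $\Lambda_j$ for $j\in\mathcal{RF}_{(i,k)}^m$ appear \emph{twice} (once in the conditional rate of $Z_i/\sigma_i$ and once in that of $Z_k/\sigma_k$, since those depend on independent copies). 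Grouping them into a contribution $2\Lambda^m_{(i,k)}\sim Ga(\gamma_{c,(i,k)},1/2)$ alongside four rate-$1$ gammas indexed by $\mathcal{RF}^l_{i,\overline{(i,k)}},\mathcal{RF}^l_{k,\overline{(i,k)}},\mathcal{RF}^m_{i,\overline{(i,k)}},\mathcal{RF}^m_{k,\overline{(i,k)}}$, and applying Lemma~\ref{FL2005} with $\sigma_+=1$, gives $\Lambda_B\sim Ga(\xi_{c,i,k}-\alpha_{c,(i,k)}+K,\,1)$ with $K$ distributed according to (\ref{pk}).

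Since $\Lambda_A\sim Ga(\alpha_{c,(i,k)},1)$ is independent of $\Lambda_B$ and both carry rate $1$ conditionally on $K$, the beta--gamma splitting applies: conditionally on $K$, the sum $\Lambda_A+\Lambda_B\sim Ga(\xi_{c,i,k}+K,1)$ is independent of $\Lambda_A/(\Lambda_A+\Lambda_B)\sim\mathrm{Beta}(\alpha_{c,(i,k)},\,\xi_{c,i,k}-\alpha_{c,(i,k)}+K)$. Combining the gamma Laplace transform $\mathbf{E}[e^{-(\Lambda_A+\Lambda_B)t}\mid K]=(1+t)^{-(\xi_{c,i,k}+K)}$ with the beta mean $\alpha_{c,(i,k)}/(\xi_{c,i,k}+K)$, and taking the outer expectation over $K$, produces
\[
a\,\overline{F}_s(x,y)=\alpha_{c,(i,k)}\,\mathbf{E}\!\left[\frac{(1+t)^{-(\xi_{c,i,k}+K)}}{\xi_{c,i,k}+K}\right],
\]
which matches (\ref{scomp}); setting $t=0$ recovers $a$ as in Theorem~\ref{sim-default}.

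The absolutely continuous component (\ref{acomp}) is then defined by subtraction. That $\overline{F}_{ac}$ carries no residual singular mass with respect to Lebesgue measure on $\mathbf{R}_+^2$ follows because the only non-smooth factor in (\ref{ddfbiv}) is $(1+(x/\sigma_i)\vee(y/\sigma_k))^{-\alpha_{c,(i,k)}}$, whose singular contribution is supported exactly on the diagonal $\{x/\sigma_i=y/\sigma_k\}$ and has been fully accounted for by $a\,\overline{F}_s$; off this diagonal the mixed partial $\partial^2\overline{F}/(\partial x\,\partial y)$ is a locally integrable function and serves as the density of $(1-a)\overline{F}_{ac}$. The main obstacle I anticipate is the third step: correctly identifying $\Lambda_B$ as a mixed gamma to which Lemma~\ref{FL2005} applies with the $K$ of (\ref{pk}). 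The subtlety is the rate-$1/2$ component arising from the double counting of the $m$-part common factors---this is precisely what introduces the nontrivial $K$ appearing in both (\ref{scomp}) and (\ref{probsin}), and overlooking it would collapse $K$ to zero and miss the correct singular mass.
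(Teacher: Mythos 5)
Your proposal is correct and follows essentially the same route as the paper: decompose on the diagonal event $\{X_i/\sigma_i=X_k/\sigma_k\}$, compute the restricted d.d.f. $\mathbf{P}[X_i>x,\,X_k>y,\,X_i/\sigma_i=X_k/\sigma_k]$ from the stochastic representation, identify it via Lemma \ref{FL2005} as a Lomax tail with random power $\xi_{c,i,k}+K$ (the nontrivial $K$ coming precisely from the doubled rate of the $\Lambda_j$, $j\in\mathcal{RF}^m_{(i,k)}$), and define the absolutely continuous part by subtraction. The only difference is computational: you evaluate $\mathbf{P}[X_i/\sigma_i=X_k/\sigma_k>t]$ by conditioning on $\boldsymbol{\Lambda}$ and using the exponential-race identity together with beta--gamma independence, whereas the paper integrates the Lomax density of the comonotonic common minimum against the joint survival of the remaining factors (the technique of the proof of Theorem \ref{sim-default}); both yield the same mixture expression.
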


We further derive the Pearson index of correlation for a random pair with the MRF dependence. Speaking strictly,
the Pearson $\rho$ has been criticized by many authors, yet it remains a ubiquitous measure of correlation when it comes
to financial risk management and/or actuarial science. Let
\begin{equation}
\label{hfun}
h(x;a,b):={}_3F_2(x-1,1,a;x,b-1;-1).
\end{equation}
where ${}_3F_2$ is a special form of the hypergeometric function in (\ref{hyperpq}) and all of $x,\ a,\ b$ are positive reals. Then we have the following result that
formulates the joint product moment in the context of the MRF structures.
\begin{theorem}
\label{joint_m}Let $(X_i,X_k)' \sim Pa(II)^{c}_{i,k}(\boldsymbol{\sigma},\ \boldsymbol{\alpha},\ \boldsymbol{\gamma})$, then the product moment is, if finite, formulated as follows
\begin{eqnarray*}
\mathbf{E}[X_iX_k]=\sigma_1\sigma_2
\frac{(\xi_{c,k}-1)h(\xi_{c,i};\gamma_{c,(i,k)},\xi_{c,i,k})+(\xi_{c,i}-1)h(\xi_{c,k};\gamma_{c,(i,k)},
\xi_{c,i,k})}{(\xi_{c,i,k}-2)(\xi_{c,i}-1)(\xi_{c,k}-1)},
\end{eqnarray*}
where $\xi_{c,i}=\alpha_{c,i}+\gamma_{c,i},\ \xi_{c,k}=\alpha_{c,k}+\gamma_{c,k}$ and $\xi_{c,i,k}=\alpha_{c,i,k}+\gamma_{c,i,k}$.
If in addition, we have that $\xi_{c,i}>2,\ \xi_{c,k}>2$, then the Pearson correlation is
\begin{eqnarray*}
&&\mathbf{Corr}[X_i,X_k]
=\sqrt{\frac{(\xi_{c,i}-2)(\xi_{c,k}-2)}{\xi_{c,i}\xi_{c,k}}}\\
&\times&\frac{(\xi_{c,k}-1)h(\xi_{c,i};\gamma_{c,(i,k)},\xi_{c,i,k})+(\xi_{c,i}-1)h(\xi_{c,k};\gamma_{c,(i,k)},\xi_{c,i,k})-\xi_{c,i,k}+2}{(\xi_{c,i,k}-2)},
\end{eqnarray*}
where $1\leq i\neq k\leq n$.
\end{theorem}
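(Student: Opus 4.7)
Since $X_i,X_k\geq 0$ I would start from
\[
\mathbf{E}[X_iX_k]=\int_0^\infty\int_0^\infty \overline{F}(x,y)\,dx\,dy,
\]
and substitute the bivariate MRF d.d.f.\ from (\ref{ddf-2}). The rescaling $u=x/\sigma_i$, $v=y/\sigma_k$ extracts $\sigma_i\sigma_k$ and leaves the integrand
\[
(1+u\vee v)^{-\alpha_{c,(i,k)}}(1+u+v)^{-\gamma_{c,(i,k)}}(1+u)^{-\xi_{c,i,\overline{(i,k)}}}(1+v)^{-\xi_{c,k,\overline{(i,k)}}}.
\]
I would then split the quadrant along the diagonal $u=v$: on $\{u\leq v\}$ the $\vee$-factor collapses to $(1+v)^{-\alpha_{c,(i,k)}}$ and merges with $(1+v)^{-\xi_{c,k,\overline{(i,k)}}}$ into $(1+v)^{-(\xi_{c,k}-\gamma_{c,(i,k)})}$, using $\xi_{c,k}=\alpha_{c,(i,k)}+\gamma_{c,(i,k)}+\xi_{c,k,\overline{(i,k)}}$. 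The complementary region $\{u>v\}$ is the same integral with the roles of $(u,i)$ and $(v,k)$ swapped, which is exactly what produces the symmetric appearance of $h(\xi_{c,i};\cdot,\cdot)$ and $h(\xi_{c,k};\cdot,\cdot)$ in the stated formula.

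\textbf{Reducing the inner integrals.} On $\{u\leq v\}$ the inner integral
\[
J(v)=\int_0^v (1+u)^{-\xi_{c,i,\overline{(i,k)}}}(1+u+v)^{-\gamma_{c,(i,k)}}\,du
\]
is the heart of the calculation. I would factor $(1+u+v)=(1+v)(1+u/(1+v))$ and substitute $u=(1+v)r-1$, which brings $J(v)$ to a form amenable to the Euler integral representation of ${}_{2}F_{1}$; the negative argument in the eventual hypergeometric appears here, because after pulling out $(1+v)$-powers the remaining kernel is of the shape $(1-zt)^{-a}$ with $z<0$. Pairing $J(v)$ with the outer factor $(1+v)^{-(\xi_{c,k}-\gamma_{c,(i,k)})}$ and integrating in $v$ via one more Euler-type change of variable $v=1/s-1$ collapses the double integral into a series recognisable as
\[
\frac{\sigma_i\sigma_k}{(\xi_{c,i,k}-2)(\xi_{c,k}-1)}\,h(\xi_{c,i};\gamma_{c,(i,k)},\xi_{c,i,k}),
\]
once the identity $\alpha_{c,(i,k)}+\gamma_{c,(i,k)}+\xi_{c,i,\overline{(i,k)}}+\xi_{c,k,\overline{(i,k)}}=\xi_{c,i,k}$ is used to match Pochhammer symbols. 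The $\{u>v\}$-piece yields the analogue with $i\leftrightarrow k$, and adding the two gives the asserted closed form for $\mathbf{E}[X_iX_k]$. The main obstacle is bookkeeping: threading the parameters $(\xi_{c,i}-1)$, $(\xi_{c,k}-1)$ and $(\xi_{c,i,k}-2)$ through the substitutions so that they land in the right places, and verifying that the boundary $|z|=1$ convergence condition from (\ref{hyperpq}) (i.e.\ $d>-1$) is met by the parameter combinations at hand.

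\textbf{From moment to correlation.} With the product moment in hand, the correlation formula follows by direct substitution. I would invoke Theorem \ref{marginal-p} to supply $\mathbf{E}[X_i]=\sigma_i/(\xi_{c,i}-1)$ and $\mathbf{Var}[X_i]=\sigma_i^2\xi_{c,i}/[(\xi_{c,i}-1)^2(\xi_{c,i}-2)]$, and then plug into
\[
\mathbf{Corr}[X_i,X_k]=\frac{\mathbf{E}[X_iX_k]-\mathbf{E}[X_i]\mathbf{E}[X_k]}{\sqrt{\mathbf{Var}[X_i]\mathbf{Var}[X_k]}}.
\]
Clearing the marginal standard deviations produces the $\sqrt{(\xi_{c,i}-2)(\xi_{c,k}-2)/(\xi_{c,i}\xi_{c,k})}$ prefactor, and the $-\xi_{c,i,k}+2$ term in the numerator of the stated formula is exactly $-(\xi_{c,i,k}-2)$ arising when $\mathbf{E}[X_i]\mathbf{E}[X_k]=\sigma_i\sigma_k/[(\xi_{c,i}-1)(\xi_{c,k}-1)]$ is put over the common denominator $(\xi_{c,i,k}-2)(\xi_{c,i}-1)(\xi_{c,k}-1)$. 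The derivation parallels the product-moment computation of Su \& Furman (2016) adapted to accommodate the extra $(1+u\vee v)^{-\alpha_{c,(i,k)}}$ singular factor introduced by the $\alpha$-type risk factors.
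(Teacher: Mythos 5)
Your overall strategy coincides with the paper's: compute $\mathbf{E}[X_iX_k]=\int_{\mathbf{R}_+^2}\overline{F}(x,y)\,dx\,dy$ from (\ref{ddf-2}), split along the diagonal so that the $(1+u\vee v)$-factor is absorbed into the margin of the larger variable, reduce each piece to a hypergeometric expression, and then pass to the correlation via Theorem \ref{marginal-p} (that last step of yours is correct and is exactly how the cross term $-\xi_{c,i,k}+2$ arises). The gap is in the middle, and it is concrete: your claimed value for the $\{u\le v\}$ piece, namely $\sigma_i\sigma_k\,h(\xi_{c,i};\gamma_{c,(i,k)},\xi_{c,i,k})/\bigl[(\xi_{c,i,k}-2)(\xi_{c,k}-1)\bigr]$, pairs the wrong argument of $h$ with the factor $(\xi_{c,k}-1)$. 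On that region the variable running to infinity is $v$, and both the $h$-argument and the $(\cdot-1)$ factor come from that margin: the piece equals $\sigma_i\sigma_k\,h(\xi_{c,k};\gamma_{c,(i,k)},\xi_{c,i,k})/\bigl[(\xi_{c,i,k}-2)(\xi_{c,k}-1)\bigr]$, which is the paper's $I_1$. With your pairing, symmetrizing over $i\leftrightarrow k$ produces the numerator $(\xi_{c,i}-1)h(\xi_{c,i};\cdot)+(\xi_{c,k}-1)h(\xi_{c,k};\cdot)$ rather than the cross-paired numerator $(\xi_{c,k}-1)h(\xi_{c,i};\cdot)+(\xi_{c,i}-1)h(\xi_{c,k};\cdot)$ of the theorem; since $h(x;a,b)$ genuinely depends on its first argument, the two are not interchangeable, so the plan as written cannot land on the stated formula.

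Second, the reduction to ${}_3F_2(\cdot\,;-1)$ is the technical heart and is only gestured at. Your choice of inner integral $J(v)=\int_0^v(1+u)^{-\xi_{c,i,\overline{(i,k)}}}(1+u+v)^{-\gamma_{c,(i,k)}}du$ has a finite upper limit and leads to incomplete-beta/Appell-type functions of two variables, which is harder to close than the paper's route. The paper integrates the unbounded variable first, so Gradshteyn \& Ryzhik (3.197(1)) immediately yields a single ${}_2F_1$ with denominator parameter $\xi_{c,k}$ and prefactor $1/(\xi_{c,k}-1)$ (this is precisely where the correct pairing comes from); the Euler representation (9.111), the evaluation (3.211) producing an Appell $F_1$, the reduction (9.182(1)), and finally (7.512(12)) then give ${}_3F_2(\xi_{c,k}-1,1,\gamma_{c,(i,k)};\xi_{c,k},\xi_{c,i,k}-1;-1)=h(\xi_{c,k};\gamma_{c,(i,k)},\xi_{c,i,k})$ together with the factor $1/(\xi_{c,i,k}-2)$. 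You would need to supply an equivalent chain (or simply switch to the paper's order of integration) before the bookkeeping you yourself flag as the obstacle can be checked; as for the boundary convergence at $z=-1$, the relevant condition works out to $\xi_{c,i,k}-\gamma_{c,(i,k)}>0$ (absolute convergence for $\xi_{c,i,k}-\gamma_{c,(i,k)}>1$), as noted at the end of the paper's proof.
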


Clearly, the Pearson correlation coefficients for the special cases of the MRF structures
having no risk factors with a.) fully comonotonic and
b.) positively dependent but not fully comonotonic hitting times are
recoverable from the general expression in Theorem \ref{joint_m}.
\begin{corollary}
\label{cov-su}
Let $\alpha_{c,(i,k)}\equiv 0$ and assume that the rest of the conditions in Theorem \ref{joint_m} hold, then
\begin{equation}
\label{cov-pos-gen}
\mathbf{Corr}[X_i,X_k]=\sqrt{\frac{(\xi_{c,i}-2)(\xi_{c,k}-2)}{\xi_{c,i}\xi_{c,k}}}
\left(_3F_2\left(\gamma_{c,(i,k)},1,1;\xi_{c,i},\xi_{c,k};1 \right)-1\right),
\end{equation}
where $1\leq i\neq k\leq n$.
\end{corollary}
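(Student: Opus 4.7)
The plan is to prove this corollary by recognizing that, once $\alpha_{c,(i,k)}=0$, the bivariate MRF d.d.f.\ reduces to precisely a bivariate Su \& Furman d.d.f., so that the known formula~(\ref{rhoSuformula}) of Theorem~\ref{AsimitGenProp} (the Su \& Furman version) applies directly. First, I would set $\alpha_{c,(i,k)}=0$ in the bivariate d.d.f.\ (\ref{ddf-2}), which kills the leading factor $(1+ x/\sigma_i\vee y/\sigma_k)^{-\alpha_{c,(i,k)}}\equiv 1$ and leaves
\[
\overline{F}(x,y)=\left(1+\frac{x}{\sigma_i}+\frac{y}{\sigma_k}\right)^{-\gamma_{c,(i,k)}}\left(1+\frac{x}{\sigma_i}\right)^{-\xi_{c,i,\overline{(i,k)}}}\left(1+\frac{y}{\sigma_k}\right)^{-\xi_{c,k,\overline{(i,k)}}}.
\]
This is structurally identical to the $n=2$ instance of the Su \& Furman d.d.f.\ (\ref{ddf-ac}), with scale vector $(\sigma_i,\sigma_k)'$ and three power parameters $\xi_{c,i,\overline{(i,k)}},\xi_{c,k,\overline{(i,k)}},\gamma_{c,(i,k)}$, the first two playing the role of coordinate-specific (``idiosyncratic'') exponents and the last being the shared common-shock exponent.

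Next I would verify compatibility with the marginals dictated by Theorem~\ref{marginal-p}. Since $\alpha_{c,(i,k)}=0$ we have $\xi_{c,(i,k)}=\gamma_{c,(i,k)}$, hence $\xi_{c,i,\overline{(i,k)}}+\gamma_{c,(i,k)}=\xi_{c,i}$ and similarly $\xi_{c,k,\overline{(i,k)}}+\gamma_{c,(i,k)}=\xi_{c,k}$. Thus under the Su \& Furman identification the marginal power parameters are exactly $\xi_{c,i}$ and $\xi_{c,k}$, consistent with Theorem~\ref{marginal-p}.

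With the reduction in place, the Pearson correlation of the pair is governed by formula~(\ref{rhoSuformula}) with the substitutions $\gamma_{c,i}\mapsto \xi_{c,i}$, $\gamma_{c,k}\mapsto \xi_{c,k}$, while the shared exponent $\gamma_{c,(i,k)}$ remains as stated. Plugging these in produces
\[
\mathbf{Corr}[X_i,X_k]=\sqrt{\frac{(\xi_{c,i}-2)(\xi_{c,k}-2)}{\xi_{c,i}\xi_{c,k}}}\left({}_3F_2\bigl(\gamma_{c,(i,k)},1,1;\xi_{c,i},\xi_{c,k};1\bigr)-1\right),
\]
which is exactly the claim of Corollary~\ref{cov-su}; the conditions $\xi_{c,i}>2, \xi_{c,k}>2$ of Theorem~\ref{joint_m} guarantee finiteness of the second moments and absolute convergence of the ${}_3F_2$ at $z=1$ (compare the discussion following (\ref{hyperpq})).

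The only conceptual hurdle is the parameter bookkeeping: one must recognize that Asimit-type factors that touch at most one of the two coordinates only inflate the marginal tail index from $\gamma$ to $\xi$ and contribute nothing to the joint dependence beyond what the Su \& Furman component already supplies. An alternative route would be to specialize the joint moment formula of Theorem~\ref{joint_m} directly by setting $\alpha_{c,(i,k)}=0$ and then convert the ${}_3F_2(\cdot;\cdot;-1)$ series appearing in $h$ (defined in (\ref{hfun})) into the ${}_3F_2(\cdot;\cdot;1)$ series in the target; however this requires nontrivial Kummer-type transformations of $_3F_2$ and is considerably less transparent than the d.d.f.--level reduction above.
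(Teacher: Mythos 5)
Your proposal is correct, but it takes a genuinely different route from the paper. The paper proves Corollary \ref{cov-su} by specializing the product-moment computation of Theorem \ref{joint_m}: with $\alpha_{c,(i,k)}=0$ the moment $\mathbf{E}[X_iX_k]$ is the double integral of $(1+u)^{-c}(1+v)^{-b}(1+u+v)^{-a}$, which Theorem \ref{joint_m} expresses as a sum of two ${}_3F_2(\cdot;\cdot;-1)$ terms, and this sum is then identified with a single ${}_3F_2(\gamma_{c,(i,k)},1,1;\xi_{c,i},\xi_{c,k};1)$ by matching against Theorem 2.1 of Su \& Furman (2016) --- exactly the ``alternative route'' you mention and set aside; note that the paper does not carry out Kummer-type transformations from scratch, it simply equates two evaluations of the same integral, so the hard step is outsourced to the cited theorem. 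Your argument instead works at the d.d.f.\ level: once $\alpha_{c,(i,k)}=0$, the bivariate d.d.f.\ (\ref{ddf-2}) is literally a bivariate instance of (\ref{ddf-ac}) with idiosyncratic exponents $\xi_{c,i,\overline{(i,k)}},\,\xi_{c,k,\overline{(i,k)}}$ and common exponent $\gamma_{c,(i,k)}$, your marginal bookkeeping $\xi_{c,i}=\xi_{c,i,\overline{(i,k)}}+\gamma_{c,(i,k)}$ (and similarly for $k$) is right, and since the Pearson correlation depends only on the joint law of the pair, formula (\ref{rhoSuformula}) applies verbatim with $\gamma_{c,i}\mapsto\xi_{c,i}$, $\gamma_{c,k}\mapsto\xi_{c,k}$. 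What each approach buys: yours is shorter and more transparent, but it rests on the correlation formula (\ref{rhoSuformula}), which this paper states without proof; the paper's route, while less direct, yields as a by-product the consistency check that specializing Theorem \ref{joint_m} really does collapse the two argument-$(-1)$ hypergeometric terms into the single argument-$(+1)$ series of (\ref{cov-pos-gen}), an identity your reduction never exhibits. Both derivations ultimately trace back to the same computation in Su \& Furman (2016), so your proof is a legitimate and arguably cleaner substitute.
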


\noindent Formula (\ref{cov-pos-gen}) recovers the one derived in Chiragiev \& Landsman (2009) as well as the recent one in
Su \& Furman (2016). In the simplest case, i.e., when $\alpha_{c,(i,k)},\ \alpha_{c,i,\overline{(i,k)}},\ \alpha_{c,k,\overline{(i,k)}}$
and also $\gamma_{c,i,\overline{(i,k)}},\ \gamma_{c,k,\overline{(i,k)}}$ are all zero,  we have that $\xi_{c,i}=\xi_{c,k}=\gamma_{c,(i,k)}>2$ and
hence
\[
_3F_2\left(\gamma_{c,(i,k)},1,1;\xi_{c,i},\xi_{c,k};1 \right)={}_2F_1(1,1;\gamma_{c,(i,k)};1)
=\frac{\gamma_{c,(i,k)}-1}{\gamma_{c,(i,k)}-2},
\]
where $1\leq i\neq k\leq n$.
Consequently, we obtain that, for $1\leq i\neq k\leq n$,  correlation (\ref{cov-pos-gen}) reduces to
\[
\mathbf{Corr}[X_i,X_k]=\frac{1}{\gamma_{c,(i,k)}},
\]
which recovers the Pearson correlation of the classical multivariate Pareto distribution of Arnold (1983, 2015).

\begin{corollary}
 \label{cov-cs}
Let $\gamma_{c,(i,k)}\equiv 0$ and assume that the rest of the conditions in Theorem \ref{joint_m} hold, then
 \begin{equation}
 \label{cov-com-gen}
\mathbf{Corr}[X_i,X_k]=\sqrt{\frac{(\xi_{c,i}-2)(\xi_{c,k}-2)}{\xi_{c,i}\xi_{c,k}}}
\frac{\alpha_{c,(i,k)}}{\left(\xi_{c,i,k}-2\right)},
\end{equation}
where $1\leq i\neq k\leq n$.
\end{corollary}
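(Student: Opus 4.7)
The plan is to specialise the general formula in Theorem \ref{joint_m} by substituting $\gamma_{c,(i,k)}=0$ and exploiting the elementary reduction of the hypergeometric function $h$ in this degenerate case. First I would observe that, by the series expansion
\[
{}_3F_2(x-1,1,a;x,b-1;-1)=\sum_{k=0}^{\infty}\frac{(x-1)_k (1)_k (a)_k}{(x)_k (b-1)_k}\,\frac{(-1)^k}{k!},
\]
the Pochhammer symbol $(0)_k$ vanishes for every $k\geq 1$ while $(0)_0=1$, so setting $a=\gamma_{c,(i,k)}=0$ collapses the whole series to its constant term. Thus
\[
h(\xi_{c,i};0,\xi_{c,i,k})=h(\xi_{c,k};0,\xi_{c,i,k})=1,
\]
and the numerator of the correlation in Theorem \ref{joint_m} simplifies to $(\xi_{c,k}-1)+(\xi_{c,i}-1)-\xi_{c,i,k}+2=\xi_{c,i}+\xi_{c,k}-\xi_{c,i,k}$.

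Next I would identify this residual quantity with $\alpha_{c,(i,k)}$ by a short inclusion--exclusion argument on the index sets. By the definition of $\mathcal{RF}_{i,k}$ and $\mathcal{RF}_{(i,k)}$ in (\ref{RFall})--(\ref{RFsing}), a factor index $j$ contributes to $\xi_{c,i}+\xi_{c,k}$ once if it attacks exactly one of the two components and twice if it attacks both, whereas it contributes once to $\xi_{c,i,k}$ regardless. Hence
\[
\xi_{c,i}+\xi_{c,k}-\xi_{c,i,k}=\xi_{c,(i,k)}=\alpha_{c,(i,k)}+\gamma_{c,(i,k)},
\]
and the assumption $\gamma_{c,(i,k)}=0$ leaves precisely $\alpha_{c,(i,k)}$. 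Plugging this back into Theorem \ref{joint_m} and dividing by $(\xi_{c,i,k}-2)$ yields (\ref{cov-com-gen}).

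There is no real obstacle here: the proof is a two-line specialisation of Theorem \ref{joint_m}. The only mild care required is to verify that the degenerate hypergeometric evaluation $h(\,\cdot\,;0,\,\cdot\,)=1$ is valid (which it is, since the series terminates at $k=0$ regardless of convergence issues) and to track the decomposition of the cumulative power parameters through the index-set identity $\xi_{c,i,k}=\xi_{c,i}+\xi_{c,k}-\xi_{c,(i,k)}$ used implicitly throughout Section \ref{SecMRF}.
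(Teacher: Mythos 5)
Your proposal is correct and follows essentially the same route as the paper's own proof: specialising Theorem \ref{joint_m}, noting that the ${}_3F_2$ in $h(\,\cdot\,;0,\,\cdot\,)$ collapses to $1$ because the zero upper parameter kills every term of the series beyond $k=0$, and invoking the index-set identity $\xi_{c,i}+\xi_{c,k}-\xi_{c,i,k}=\xi_{c,(i,k)}=\alpha_{c,(i,k)}$ under $\gamma_{c,(i,k)}\equiv 0$. Your write-up merely makes explicit the two facts the paper labels ``easy to check.''
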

\noindent Formula (\ref{cov-com-gen}) conforms to the one derived in Asimit et al. (2010).

While the importance of the derivations in this section is clear from the distribution theory point of view, some
connections to applications in (actuarial) risk management are perhaps in place.  In summary:
\begin{itemize}
\item The Lebesgue decomposition derived in Theorem \ref{decomp} becomes useful when developing
expressions for, e.g., the joint moments of the MRF distributed random pairs. These moments are
of central importance in (actuarial) risk management, as well as in actuarial and economic
pricing, when the risk measure and pricing functionals belong to the classes of  weighted
(Furman \& Zitikis (2008a,b)) or distorted (Wang (1996)) expectations.
\item Theorem \ref{joint_m} provides an easy to compute expression for the Pearson index of
linear correlation for a random pair coming from the MRF structures. Noticeably, the Pearson correlation can attain
any value in $[0,\ 1]$ and as such Theorem \ref{joint_m} emphasizes that the MRF models cover the entire range of
positive dependence when it is measured by the Pearson $\rho$.
It is well-known that defaults tend to cluster in reality, thus suggesting that they are
positively dependent. Factor models \`a la the ones discussed in Subsection \ref{Subsec_Su}
have been employed extensively to describe the just-mentioned clustering in practice.
The reason stems from a common belief that simultaneous defaults are rare and can be
dismissed. The finding of Das et al. (2007) along with Theorem \ref{joint_m}
suggest that the general MRF dependence structures may provide a better way to model real life default times.
\end{itemize}
\end{section}

\begin{section}{Applications to financial risk measurement}
\label{Sec-5}
In the sequel we consider routes to utilize the results derived hitherto. To this end, we note in passing that
the functional $H:\mathcal{X}\rightarrow [0,\infty]$ is called a risk measure for any risk r.v. $X\in\mathcal{X}$. Moreover,
the aforementioned functional is an actuarial premium calculation principle
(p.c.p.) if the bound $H[X]\geq \mathbf{E}[X]$ holds for all $X\in\mathcal{X}$ with finite
expectations.

Regulatory accords around the world require that insurance companies carry out extensive and quantitatively sound
assessments of their risks, and default risks are not an exception.
In this subsection we report expressions for arguably the most popular risk measure functionals
used nowadays in insurance industry when the risk portfolio is formally described by the MRF structures.

The literature on risk measures is
vast and growing quickly. The following two
indices have however earned an unprecedented amount of interest among both practitioners and theoreticians.
\begin{definition}
\label{VaRDef}
Let $X\in\mathcal{X}$ and fix $q\in[0,\ 1)$, then the Value-at-Risk (VaR) and the Conditional
Tail Expectation (CTE) risk measures are respectively given by
\begin{equation}
\label{VaRdef}
VaR_q[X]=\inf\{x\in\mathbf{R}:\mathbf{P}[X\leq x]\geq q\}
\end{equation}
and
\begin{equation}
\label{CTEdef}
CTE_q[X]=\mathbf{E}[X|\ X>VaR_q[X]].
\end{equation}

\end{definition}

In the context of the MRF risk portfolios, the VaR and the CTE risk measures are readily obtained using
Theorem \ref{marginal-p}.  The proofs are omitted, as they are similar to the proofs of Corollary 3.1 and Corollary 3.3
in Su \& Furman (2016).
\begin{proposition}
Let $\mathbf{X}\sim Pa_{1,\ldots,n}^c(II)(\boldsymbol{\sigma},\ \boldsymbol{\xi})$ as in Definition \ref{MRF-stoch-Def},
then, for $i=1,\ldots,n$,  the Value-at-Risk risk measure is given by
\begin{eqnarray}
\label{var}
VaR_q[X_i]=\sigma_i\left((1-q)^{-1/\xi_{c,i}}-1 \right),
\end{eqnarray}
and the Conditional Tail Expectation risk measure is, for $\xi_{c,i}>1$, given by
\begin{eqnarray}
\label{CTEpar}
CTE_q[X_i]&=&\mathbf{E}[X_i] \frac{\overline{F}_{X_i^\ast}(VaR_q[X_i])}{1-q}+VaR_q[X_i] \nonumber\\
&=&\mathbf{E}[X_i]+VaR_q[X_i]\frac{\xi_{c,i}}{\xi_{c,i}-1},
\end{eqnarray}
where $X_i^\ast \sim Pa(II)(\sigma_i,\xi_{c,i}-1)$.
\end{proposition}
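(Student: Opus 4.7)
The plan is to deduce both formulas from Theorem \ref{marginal-p} and then carry out two short univariate computations on a Pareto-II law; the multivariate MRF structure plays no further role beyond identifying the margin.

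By Theorem \ref{marginal-p} we have $X_i\sim Pa(II)(\sigma_i,\xi_{c,i})$, so the c.d.f. $F_{X_i}(x)=1-(1+x/\sigma_i)^{-\xi_{c,i}}$ is continuous and strictly increasing on $\mathbf{R}_+$. The VaR formula (\ref{var}) then follows by solving $F_{X_i}(x)=q$ for $x$; the infimum in Definition \ref{VaRDef} is attained and equals $\sigma_i\bigl((1-q)^{-1/\xi_{c,i}}-1\bigr)$.

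For the CTE, I would use the stop-loss representation
\[
CTE_q[X_i]=VaR_q[X_i]+\frac{1}{1-q}\int_{VaR_q[X_i]}^{\infty}\overline{F}_{X_i}(t)\,dt,
\]
which is valid whenever $\mathbf{E}[X_i]<\infty$, i.e., $\xi_{c,i}>1$. A direct change of variable $u=1+t/\sigma_i$ shows that $\int_{x}^{\infty}(1+t/\sigma_i)^{-\xi_{c,i}}\,dt=\frac{\sigma_i}{\xi_{c,i}-1}(1+x/\sigma_i)^{-(\xi_{c,i}-1)}$, which is exactly $\mathbf{E}[X_i]\,\overline{F}_{X_i^\ast}(x)$ with $X_i^\ast\sim Pa(II)(\sigma_i,\xi_{c,i}-1)$, since $\mathbf{E}[X_i]=\sigma_i/(\xi_{c,i}-1)$ by Theorem \ref{marginal-p}. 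Substituting $x=VaR_q[X_i]$ yields the first line of (\ref{CTEpar}).

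To obtain the closed form in the second line of (\ref{CTEpar}), I would plug the VaR expression into $\overline{F}_{X_i^\ast}\bigl(VaR_q[X_i]\bigr)=(1-q)^{(\xi_{c,i}-1)/\xi_{c,i}}$ and use $1+VaR_q[X_i]/\sigma_i=(1-q)^{-1/\xi_{c,i}}$ to simplify $\overline{F}_{X_i^\ast}(VaR_q[X_i])/(1-q)=(1-q)^{-1/\xi_{c,i}}=1+VaR_q[X_i]/\sigma_i$; multiplying by $\mathbf{E}[X_i]=\sigma_i/(\xi_{c,i}-1)$ and collecting terms gives $CTE_q[X_i]=\mathbf{E}[X_i]+VaR_q[X_i]\cdot\xi_{c,i}/(\xi_{c,i}-1)$. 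No obstacle of substance arises: once Theorem \ref{marginal-p} delivers the marginal, both identities are standard Pareto-II computations, and the only bookkeeping item is the condition $\xi_{c,i}>1$ needed for the integral defining the CTE to converge.
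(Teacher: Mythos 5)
Your proposal is correct and follows essentially the same route the paper has in mind: the paper omits the proof precisely because it is the standard univariate Lomax computation (as in Corollaries 3.1 and 3.3 of Su \& Furman (2016)), namely inverting the marginal c.d.f. from Theorem \ref{marginal-p} for the VaR and expressing the tail integral of $\overline{F}_{X_i}$ through the Pareto-II law $X_i^\ast\sim Pa(II)(\sigma_i,\xi_{c,i}-1)$ for the CTE. Your use of the stop-loss representation (legitimate here since the Lomax c.d.f. is continuous and strictly increasing, so $\mathbf{P}[X_i>VaR_q[X_i]]=1-q$) and the condition $\xi_{c,i}>1$ for integrability are exactly the ingredients needed, and the algebra leading to $CTE_q[X_i]=\mathbf{E}[X_i]+VaR_q[X_i]\,\xi_{c,i}/(\xi_{c,i}-1)$ is correct.
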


As we have already mentioned, the minima and the maxima r.v.'s, $X_-$ and $X_+$, respectively, play an important role in the theory of credit risk and in insurance mathematics - think, e.g., of the first and last default times - for the former subject, and of
the joint and last-survivor life statuses - for the latter subject.
Next two propositions provide expressions for the
average excess-of-the $q$-th quantile
time of first and last default. The proofs are again similar to these in Su \& Furman (2016) and are thus omitted.

\begin{proposition}
\label{CTEparMin}
In the context of the MRF risk portfolios, the CTE risk measure of the
minima r.v. can be written, if finite and for $q\in[0,\ 1)$, as
\[
CTE_q[X_-]=\mathbf{E}[X_-]\frac{\overline{F}_{X_-^\ast}(VaR_q[X_{-}])}{1-q}+VaR_q[X_{-}],
\]
where $X_-^\ast \sim Pa(\alpha_+(\boldsymbol{\sigma}),\ \xi_{c,1,\ldots,n}+Q-1)$, with
\[\alpha_+(\boldsymbol{\sigma})=\left(
\bigvee_{j=1}^{l} \bigwedge_{i\in\mathcal{RC}_j} \sigma_i
\right)
\vee
\left(
\bigvee_{j=l+1}^{l+m}\left(\sum_{i\in\mathcal{RC}_j}\frac{1}{\sigma_i}\right)^{-1}
\right),\] and $Q$ is an integer-valued and non-negative r.v. with the p.m.f. obtained from the p.m.f.
of $K$ (Lemma \ref{FL2005}) with the help of the following change of measure
\begin{equation}
\label{cmkq}
q_k=\frac{1}{\mathbf{E}[X_{-}]}\frac{\sigma_+}{\xi_{c,1,\ldots,n}+k-1}p_k,\
k=0,1,\ldots
\end{equation}
\end{proposition}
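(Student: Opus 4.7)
The plan is to reduce the statement to a standard tail-integral representation of the CTE and then recognize the change of measure (\ref{cmkq}) as precisely the one that tilts the mixing law of $K$ into the law of $Q$. Throughout I shall write $v=VaR_q[X_-]$ and abbreviate $\xi^\ast=\xi_{c,1,\ldots,n}=\xi_1+\cdots+\xi_{l+m}$, so that by Theorem \ref{pro-min} we have the mixture representation $\overline{F}_{X_-}(x)=\mathbf{E}\bigl[(1+x/\sigma_+)^{-(\xi^\ast+K)}\bigr]$ with $\sigma_+=\alpha_+(\boldsymbol{\sigma})$.

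First I would invoke the well-known identity $CTE_q[X_-]=v+(1-q)^{-1}\int_v^\infty \overline{F}_{X_-}(x)\,dx$, valid for any non-negative r.v. with finite mean. Plugging in the mixture form of $\overline{F}_{X_-}$ and interchanging expectation and integral via Fubini (legitimate since the integrand is non-negative), the inner integral is evaluated pathwise as
\[
\int_v^\infty \Bigl(1+\tfrac{x}{\sigma_+}\Bigr)^{-(\xi^\ast+k)}\,dx=\frac{\sigma_+}{\xi^\ast+k-1}\Bigl(1+\tfrac{v}{\sigma_+}\Bigr)^{-(\xi^\ast+k-1)}
\]
for each $k\in\mathbf{N}_0$, provided $\xi^\ast+k>1$, which is ensured by the finiteness assumption on $\mathbf{E}[X_-]$. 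This yields
\[
\int_v^\infty \overline{F}_{X_-}(x)\,dx=\sigma_+\,\mathbf{E}\!\left[\frac{(1+v/\sigma_+)^{-(\xi^\ast+K-1)}}{\xi^\ast+K-1}\right].
\]

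Next I would identify the two factors on the right-hand side of the target formula. By conditioning on $K$ and using the mean formula of Theorem \ref{marginal-p} (applied conditionally), $\mathbf{E}[X_-]=\sigma_+\,\mathbf{E}\bigl[(\xi^\ast+K-1)^{-1}\bigr]$; in particular this also shows that the weights $q_k$ in (\ref{cmkq}) sum to one, so $Q$ is a bona fide integer-valued r.v. Using the definition of $X_-^\ast\sim Pa(II)(\sigma_+,\xi^\ast+Q-1)$,
\[
\mathbf{E}[X_-]\,\overline{F}_{X_-^\ast}(v)=\mathbf{E}[X_-]\sum_{k\ge 0} q_k\Bigl(1+\tfrac{v}{\sigma_+}\Bigr)^{-(\xi^\ast+k-1)}=\sigma_+\sum_{k\ge 0}\frac{p_k}{\xi^\ast+k-1}\Bigl(1+\tfrac{v}{\sigma_+}\Bigr)^{-(\xi^\ast+k-1)},
\]
where the change of measure (\ref{cmkq}) cancels the $\mathbf{E}[X_-]$ factor exactly. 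The right-hand side coincides with $\int_v^\infty \overline{F}_{X_-}(x)\,dx$ computed in the previous step, completing the identification.

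The remaining step is purely bookkeeping: substituting back into the CTE identity gives the claimed decomposition. The only real subtlety is in guaranteeing that $(q_k)_{k\ge 0}$ is a probability mass function and justifying the Fubini interchange; both follow from the standing assumption $\mathbf{E}[X_-]<\infty$ (equivalently $\mathbf{P}[\xi^\ast+K>1]=1$), which makes all the sums and integrals finite. I do not foresee serious obstacles beyond this integrability check, since the computation is driven entirely by the mixture representation in Theorem \ref{pro-min} and the algebraic form of the change of measure (\ref{cmkq}).
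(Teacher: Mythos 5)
Your proposal is correct and follows precisely the argument the paper alludes to (it omits the proof as "similar to Su \& Furman (2016)"): the tail-integral representation of the CTE, the Lomax-mixture form of $\overline{F}_{X_-}$ from Theorem \ref{pro-min}, and the recognition of (\ref{cmkq}) as the change of measure that turns $\sigma_+\sum_{k}p_k(\xi_{c,1,\ldots,n}+k-1)^{-1}(1+v/\sigma_+)^{-(\xi_{c,1,\ldots,n}+k-1)}$ into $\mathbf{E}[X_-]\overline{F}_{X_-^\ast}(v)$. Your integrability remarks (that $q_k$ sums to one and that finiteness of $\mathbf{E}[X_-]$ licenses the termwise integration) are exactly the right checks, so nothing is missing.
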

\begin{proposition}
\label{ctemaxima}
In the context of the MRF risk portfolios, the CTE risk measure of the
maxima r.v. can be written, if finite and for $q\in[0,\ 1)$,  as
the following linear combination
\[
CTE_q[X_+]=\frac{1}{1-q}\sum_{\mathcal{S}\subseteq \{1,\ldots,n\}} (-1)^{|\mathcal{S}|-1} \mathbf{E}[X_{\mathcal{S}-}|X_{\mathcal{S}-}>VaR_q(X_+)] \overline{F}_{{\mathcal{S}-}}(VaR_q(X_+)),
\]
where $X_{\mathcal{S}-}=\wedge_{s\in \mathcal{S}\subseteq\{1,\ldots,n\}}X_s$ and
$X_{\mathcal{S}-}\sim F_{\mathcal{S}-}$.
\end{proposition}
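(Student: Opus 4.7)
The plan is to reduce the CTE of $X_+$ to a linear combination of CTE-type quantities for the minima $X_{\mathcal{S}-}$, using the inclusion--exclusion identity that is already encoded in Corollary \ref{max-th}. Write $v := VaR_q[X_+]$, so that $\overline{F}_{X_+}(v) = 1-q$ by definition of VaR. Then the definition of the CTE gives
\[
(1-q)\, CTE_q[X_+] \;=\; \mathbf{E}\bigl[X_+\,\mathbf{1}\{X_+>v\}\bigr].
\]

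Next I would use the standard tail representation of truncated means,
\[
\mathbf{E}\bigl[X_+\,\mathbf{1}\{X_+>v\}\bigr]\;=\; v\,\overline{F}_{X_+}(v)+\int_v^{\infty}\overline{F}_{X_+}(t)\,dt
\;=\; v(1-q)+\int_v^{\infty}\overline{F}_{X_+}(t)\,dt,
\]
which is valid whenever the moment in question is finite. Substituting the inclusion--exclusion expansion from Corollary \ref{max-th}, namely $\overline{F}_{X_+}(t)=\sum_{\emptyset\neq\mathcal{S}\subseteq\{1,\ldots,n\}}(-1)^{|\mathcal{S}|-1}\overline{F}_{X_{\mathcal{S}-}}(t)$, and exchanging the finite sum with the integral, I obtain
\[
\int_v^{\infty}\overline{F}_{X_+}(t)\,dt\;=\;\sum_{\emptyset\neq\mathcal{S}}(-1)^{|\mathcal{S}|-1}\int_v^{\infty}\overline{F}_{X_{\mathcal{S}-}}(t)\,dt.
\]

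I would then recognize each inner integral as the stop-loss transform of $X_{\mathcal{S}-}$ at level $v$: $\int_v^{\infty}\overline{F}_{X_{\mathcal{S}-}}(t)\,dt=\mathbf{E}[X_{\mathcal{S}-}\,\mathbf{1}\{X_{\mathcal{S}-}>v\}]-v\,\overline{F}_{X_{\mathcal{S}-}}(v)$, and rewrite the first term as $\mathbf{E}[X_{\mathcal{S}-}\mid X_{\mathcal{S}-}>v]\,\overline{F}_{X_{\mathcal{S}-}}(v)$. Summing over $\mathcal{S}$, the collected $-v\,\overline{F}_{X_{\mathcal{S}-}}(v)$ contributions reassemble, via Corollary \ref{max-th} applied backwards, to $-v\,\overline{F}_{X_+}(v)=-v(1-q)$, which exactly cancels the $v(1-q)$ boundary term from the integration-by-parts step. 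Dividing through by $1-q$ yields the claimed identity.

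The only genuine obstacle is the integrability/finiteness bookkeeping: the individual terms $\mathbf{E}[X_{\mathcal{S}-}\,\mathbf{1}\{X_{\mathcal{S}-}>v\}]$ must each be finite for the interchange of sum and integral and the cancellation step to be legitimate. Since Theorem \ref{pro-min} identifies $X_{\mathcal{S}-}$ as a Lomax random variable with a (possibly random) power parameter bounded below by $\xi^{\ast}=\sum_{j=1}^{l+m}\xi_j$, finiteness of $\mathbf{E}[X_+]$ propagates to every $\mathcal{S}$, so the proof goes through under the blanket assumption that $CTE_q[X_+]<\infty$ already stated in the proposition.
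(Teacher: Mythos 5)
Your argument is correct and is essentially the intended one: the paper omits the proof as routine, but it is precisely your combination of the tail-integral (stop-loss) representation of $\mathbf{E}\bigl[X_+\mathbf{1}\{X_+>VaR_q[X_+]\}\bigr]$ with the inclusion--exclusion expansion of $\overline{F}_+$ from Corollary \ref{max-th}, the boundary terms cancelling exactly as you describe. One small repair to your finiteness bookkeeping: the finiteness of each $\mathbf{E}\bigl[X_{\mathcal{S}-}\mathbf{1}\{X_{\mathcal{S}-}>v\}\bigr]$ is best justified by the pointwise domination $X_{\mathcal{S}-}\mathbf{1}\{X_{\mathcal{S}-}>v\}\leq X_+\mathbf{1}\{X_+>v\}$ together with the assumed finiteness of $CTE_q[X_+]$, since for a sub-portfolio $\mathcal{S}$ the random power parameter delivered by Theorem \ref{pro-min} is $\xi_{c,\mathcal{S}}+K$, which is bounded below by the marginal parameters $\xi_{c,s}$, $s\in\mathcal{S}$, and not by $\xi^{\ast}=\sum_{j=1}^{l+m}\xi_j$ as you state.
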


We conclude this subsection by studying one more important index that aims at shedding light on the effect of the interdependence between two
random default times. We refer to it as the `solvency bonus' index.
\begin{definition}
\label{bonus-def} Let $X_i$ and $X_k$ denote two possibly correlated random default times of two business units $i$ and $k$, where $1\leq i\neq k\leq n$.
Also, let $y\in\mathbf{R}_+$ be fixed, then
\[
\beta(y;\ X_i,\ X_k)=\mathbf{E}[X_i|\ X_k>y]-\mathbf{E}[X_i]
\]
yields the change in the expected default time of the risk component $i$ given that the risk component $k$ has not defaulted and that $i$ and $k$ are parts of a risk portfolio
rather than in the case when $i$ is a stand alone business unit.
\end{definition}

A simple observation about the solvency bonus index is perhaps appropriate. Namely, if the default times are independent(positively
quadrant dependent), then $\beta(y;\ X_i,\ X_k)=0(\geq 0)$, respectively, for all $y\in\mathbf{R}_+$. The former part of the assertion is
straightforward, whereas the latter part follows because
\[
\mathbf{E}[X_i|\ X_k>y]=\frac{\mathbf{E}[X_i\mathbf{1}\{X_k>y\}]}{\mathbf{P}[X_k>y]}\geq \mathbf{E}[X_i],
\]
for any $y\in\mathbf{R}_+$ (Lehmann (1966)).

\begin{proposition}
\label{survival-function}
Let $i$ and $k$, $1\leq i\neq k\leq n$ be two risk components with respective default
times $X_i$ and $X_k$. In the context of the MRF risk portfolios, the solvency bonus index for the risk component $i$ given that the risk component $k$ has not defaulted is, for $\xi_{c,i}>1$, given by
\begin{eqnarray*}
&&\beta(y;\ X_i,X_k)\\
&=&\frac{\sigma_i}{\xi_{c,i,\overline{(i,k)}}+\gamma_{c,(i,k)}-1}{}_2F_1\left(
\gamma_{c,(i,k)}.1;\xi_{c,i,\overline{(i,k)}}+\gamma_{c,(i,k)},\frac{y/\sigma_k}{1+y/\sigma_k}
\right)
 \\
 &-&
\frac{\sigma_i\left(
1+\frac{y}{\sigma_k}
\right)^{-(\xi_{c,i,\overline{(i,k)}}-\gamma_{c,(i,k)})+1}
\left(
1+2\frac{y}{\sigma_k}
\right)^{-\gamma_{c,(i,k)}}}{ \xi_{c,i,\overline{(i,k)}}+\gamma_{c,(i,k)}-1}
{}_2F_1\left(
\gamma_{c,(i,k)},1;\xi_{c,i,\overline{(i,k)}}+\gamma_{c,(i,k)};\frac{y/\sigma_k}{1+2y/\sigma_k}
\right)
 \\
 &+&
 \frac{\sigma_i}{\xi_{c,i}-1}
 \left(
 1+\frac{y}{\sigma_k}
\right) ^{-(\xi_{c,i,\overline{(i,k)}}-\gamma_{c,(i,k)})+1}
\left(
1+2\frac{y}{\sigma_k}
\right)^{-\gamma_{c,(i,k)}}
\ _2F_1\left(\gamma_{c,(i,k)},1;\xi_{c,i};\frac{y/\sigma_k}{1+2y/\sigma_k} \right)\\
&-&\frac{\sigma_i}{\xi_{c,i}-1},
\end{eqnarray*}
where $y\in\mathbf{R}_{+}$.
\end{proposition}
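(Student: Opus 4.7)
My plan is to reduce the solvency bonus to a one-dimensional integral against the bivariate survival function and then convert the two resulting pieces into Gauss hypergeometric functions via Euler-type substitutions.

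First, I would write
\[
\beta(y;X_i,X_k)=\frac{\mathbf{E}[X_i\mathbf{1}\{X_k>y\}]}{\mathbf{P}[X_k>y]}-\mathbf{E}[X_i],
\]
then use Fubini in the layer-cake form $\mathbf{E}[X_i\mathbf{1}\{X_k>y\}]=\int_0^\infty \mathbf{P}[X_i>x,X_k>y]\,dx=\int_0^\infty \overline{F}(x,y)\,dx$, together with the bivariate d.d.f.\ in (\ref{ddf-2}) and the Lomax marginal $\mathbf{P}[X_k>y]=(1+y/\sigma_k)^{-\xi_{c,k}}$ supplied by Theorem \ref{marginal-p}. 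The closed-form mean $\mathbf{E}[X_i]=\sigma_i/(\xi_{c,i}-1)$ is also supplied by that theorem and accounts for the last constant term of the claimed expression.

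The crux is evaluating $\int_0^\infty \overline{F}(x,y)\,dx$. Because of the factor $(1+x/\sigma_i\vee y/\sigma_k)^{-\alpha_{c,(i,k)}}$ in (\ref{ddf-2}), I would split the integration domain at $x^\ast:=\sigma_i y/\sigma_k$, so that on $[0,x^\ast]$ the maximum equals $y/\sigma_k$ and on $[x^\ast,\infty)$ it equals $x/\sigma_i$. After the natural change of variables $u=x/\sigma_i$, $v=y/\sigma_k$, the two resulting integrals read, up to explicit prefactors,
\[
J_1(v)=\int_0^v(1+u)^{-\xi_{c,i,\overline{(i,k)}}}(1+u+v)^{-\gamma_{c,(i,k)}}\,du,\qquad J_2(v)=\int_v^\infty(1+u)^{-(\alpha_{c,(i,k)}+\xi_{c,i,\overline{(i,k)}})}(1+u+v)^{-\gamma_{c,(i,k)}}\,du.
\]
Each has exactly the shape of an Euler integral representation of ${}_2F_1$. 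Concretely, for $J_1$ I would use the substitution $t=u/(1+u+v)$, mapping $[0,v]$ bijectively onto $[0,v/(1+2v)]$ and turning $J_1$ into an incomplete Beta integral whose hypergeometric evaluation produces both the first and second terms of the stated identity, with arguments $v/(1+v)$ and $v/(1+2v)$ respectively. For $J_2$ I would use $t=v/(1+u+v)$ to map the unbounded domain to a bounded one, recognize the result as the Euler integral for ${}_2F_1(\gamma_{c,(i,k)},1;\xi_{c,i};v/(1+2v))$, and recover the third term with its prefactor $\sigma_i/(\xi_{c,i}-1)$.

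Finally, dividing by $(1+v)^{-\xi_{c,k}}$ distributes the $(1+y/\sigma_k)$ factors and converts the exterior exponents $-\xi_{c,k,\overline{(i,k)}}$ from (\ref{ddf-2}) into the stated $-(\xi_{c,i,\overline{(i,k)}}-\gamma_{c,(i,k)})+1$ signatures in the second and third terms; subtracting $\sigma_i/(\xi_{c,i}-1)$ gives the last term. The main obstacle I anticipate is bookkeeping rather than conceptual: one must track carefully how the four power parameters $\alpha_{c,(i,k)},\gamma_{c,(i,k)},\xi_{c,i,\overline{(i,k)}},\xi_{c,k,\overline{(i,k)}}$ recombine into the aggregates $\xi_{c,i}$ and $\xi_{c,i,\overline{(i,k)}}+\gamma_{c,(i,k)}$ that appear as the $c$-parameters of the three ${}_2F_1$'s, and verify that the contiguous hypergeometric identity relating $a\to a-1$ with $b=1$ fixed is applied consistently so that each $J_1$, $J_2$ contribution lands in precisely the normalized form stated in the proposition.
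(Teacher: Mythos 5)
Your proposal is correct and follows essentially the same route as the paper: integrate the bivariate d.d.f. (equivalently, the conditional survival function) over $x$, split at $x^\ast=\sigma_i y/\sigma_k$ because of the $\bigvee$ factor, evaluate the pieces via Euler-type integral formulas (Gradshteyn \& Ryzhik 3.197) to get the three ${}_2F_1$ terms, and subtract $\mathbf{E}[X_i]=\sigma_i/(\xi_{c,i}-1)$. The only execution detail worth adjusting is your treatment of $J_1$: rather than a direct incomplete-Beta substitution (which naturally produces an Appell-type function), write $\int_0^v=\int_0^\infty-\int_v^\infty$, which immediately yields the two ${}_2F_1$ terms with arguments $\frac{v}{1+v}$ and $\frac{v}{1+2v}$ --- this is exactly how the paper proceeds.
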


\end{section}

\section{A numerical example}
\label{Sec-6}

For the sake of the discussion in this section,
we adopt the view of the
Financial Stability Board and the International Monetary Fund
that the systemic risk can be caused by impairment of
all or parts of the financial system, and more formally, we call the risk factor
$j\in\{1,\ldots,l+m\}$ `systemic', if $c_{i,j}=1$ for at least two distinct r.c.'s
$i\in\{1,\ldots,n\}$. Also, we call the risk factor $j\in\{1,\ldots,l+m\}$
`idiosyncratic', if $c_{i,j}=1$ for only one risk component $i\in\{1,\ldots,n\}$.

Consider obligors in a default risk portfolio,
each of
which is exposed to exactly three distinct categories of fatal risk factors, e.g.,
systemic with fully-commonotonic occurrence times of the r.f.'s (category A),
systemic with conditionally independent occurrence times of the r.f.'s (category B)
and idiosyncratic with independent occurrence times of the r.f.'s (category C).
We assume that the
risk factors from distinct risk categories are independent and that
the hitting times (or occurrences) of defaults of the r.c.'s are exponentially-distributed with random parameters distributed gamma.
In fact, the future lifetime r.v. of the $i$-th r.c., $i=1,\ldots,n$ has exponential
distribution with the random parameter $\sigma_i^{-1}\sum_{j=1}^{l+m}c_{i,j}\Lambda_j$, where $\Lambda_j$ are
distributed gamma with unit rate parameters.
Then Definition \ref{MRF-stoch-Def} readily implies that the joint default times of the aforementioned
r.c.'s has d.d.f. (\ref{ddf-gp}).

To illustrate the effect of the dependence structure on the joint default
probability we further set the dimension to $n=2$ and
specialize the set-up above along the lines in Section 16.8 of Engelmann and Rauhmeier (2011)
as well as employing the 2014's Annual Global Corporate Default Study and Rating Transitions of
Standard \& Poor's (Standard \& Poor's (2015)).
More specifically, we assume the existence of six r.f.'s and set each
 $\mu:=\mathbf{E}[\Lambda_j]\equiv 1/1.8$, then fix the time horizon to $15$
years and choose the corresponding default probability, $p$ say, to be equal to $0.3198$
(on par with the `B' credit rating of highly speculative entities).  This yields the multivariate
probability structure of Definition \ref{MRF-stoch-Def} with identically distributed
margins having the parameters $\sigma_i\equiv \sigma=122.39$ and $\xi_{c,i}\equiv 3.33$, for $i=1$ and $2$.

Then we explore three different exposures of the obligors to the systemic and
 idiosyncratic  r.f.'s.
The distinct exposures are stipulated by appropriate choices of the $c$ parameters gathered by matrices $A_c^{(k)},\
k=1,\ 2,\ 3$. We compare the aforementioned three exposures with the reference case
in which no systemic risk presents, that is the joint d.d.f. of default times
is a bivariate Pareto with independent margins.
 We note in passing that the expressions for the d.d.f.'s below readily follow
from Theorem \ref{def-gp}, whereas the values of the Pearson correlation coefficient
 are in non-trivial cases obtained  with the help of Theorem \ref{joint_m}.

\begin{itemize}
\item [Case (1).]
Only the systemic (category A) and idiosyncratic (category C) risks present. The exposure
is represented schematically with the use of the following matrix, in which the rows
and the columns represent r.c.'s and r.f.'s, respectively
\[
A_c^{(1)}=\left(
\begin{array}{cccc||cccc}
1 & 1 & 1 & 1 & 1 & 1 &0 & 0\\

1 & 1 & 1 & 1 & 0 & 0& 1 & 1\\
\end{array}
\right).
\]
The joint d.d.f. of the risk components is given by
\[
\overline{F}^{(1)}(x_1,x_2)=\left(
1+\frac{\max(x_1,x_2)}{\sigma}
\right)^{-4\mu} \left(
1+\frac{x_1}{\sigma}
\right)^{-2\mu}\left(
1+\frac{x_2}{\sigma}
\right)^{-2\mu}  ,
\]
where $x_1,x_2$ are all in $\mathbf{R}_{+}$.
This is obviously the d.d.f. of the bivariate Pareto distribution of Asimit et al. (2010).
In this r.p., the Pearson correlation coefficient between the r.c.'s is $0.36$.
\end{itemize}
\noindent

\begin{itemize}
\item [Case (2).]
There are four conditionally independent r.f.'s (category B) and two uncorrelated idiosyncratic r.f.'s (category C).  The exposure is gathered by the following block matrix
\[
A_c^{(2)}=\left(
\begin{array}{c|cccc|cccc}
~ & 1 & 1 & 1 & 1 & 1 & 1 &0 & 0\\

~ & 1 & 1 & 1 & 1 & 0 & 0& 1 & 1\\
\end{array}
\right).
\]
The joint d.d.f. of the risk components is given by
\begin{eqnarray*}
&&\overline{F}^{(2)}(x_1,x_2)\\
&&=\left(
1+\frac{x_1+x_2}{\sigma}
\right)^{-4\mu}
\left(
1+\frac{x_1}{\sigma}
\right)^{-2\mu}
\left(
1+\frac{x_2}{\sigma}
\right)^{-2\mu},
\end{eqnarray*}
where $x_1,x_2$ are all in $\mathbf{R}_{+}$. This case corresponds to the bivariate Pareto model of Su \& Furman (2016).
In this r.p., the Pearson correlation coefficient between the r.c.'s is $0.14$.

\item [Case (3).] The r.p. admits the most general form that is proposed in the current paper. Namely r.f.'s from all three categories (A, B and C) present.
The exposure block matrix is given by
\[
A_c^{(3)}=\left(
\begin{array}{cc|cc|ccccc}
1 & 1 & 1 & 1 & 1 & 1 &0 & 0\\

1 & 1 & 1 & 1 & 0 & 0& 1 & 1\\
\end{array}
\right).
\]
The joint d.d.f. of the risk components is
\[
\overline{F}^{(3)}(x_1,x_2)=\left(
1+\frac{\max(x_1,x_2)}{\sigma}\right)^{-2\mu}
\left(
1+\frac{x_1+x_2}{\sigma}\right)^{-2\mu}
\left(
1+\frac{x_1}{\sigma}
\right)^{-2\mu},
\left(
1+\frac{x_2}{\sigma}
\right)^{-2\mu},
\]
where $x_1,x_2$ are all in $\mathbf{R}_{+}$.
In this r.p., the Pearson correlation coefficient between the r.c.'s is equal to $0.23$.
\end{itemize}

\subsection{Expected times of the first default} The left panel of
Figure \ref{fig:CTEmin}
depicts the values of $CTE_q[X_-]$ for $q\in[0,\ 1)$, $X_-\in\mathcal{X}$
and portfolios (1) to (3) as well as the reference portfolio, denoted by $(\perp)$.
As the risk components are identically distributed, it is not difficult to see that
the following ordering holds
\begin{equation}
\label{stord1}
\overline{F}^{(1)}_-\geq_{st}\overline{F}^{(3)}_-\geq_{st}\overline{F}^{(2)}_-\geq_{st}
\overline{F}^{(\perp)}_-,
\end{equation}
where `$\geq_{st}$' denotes first order stochastic dominance (FSD). Furthermore,
since the CTE risk measure
is known to preserve the FSD ordering, we also have that
\[
CTE^{(1)}_q[X_-]\geq CTE^{(3)}_q[X_-]\geq CTE^{(2)}_q[X_-]\geq CTE^{(\perp)}_q[X_-]
\]
for all $q\in[0,\ 1)$ and $X_-\in\mathcal{X}$. This conforms to
Figure \ref{fig:CTEmin} (left panel), which hints that the r.p.'s with more
significantly correlated r.c.'s
enjoy higher, and thus more favourable, occurrence times of the first default.
\begin{figure}[h!]
\centering
\includegraphics[width=7cm, height=7cm]{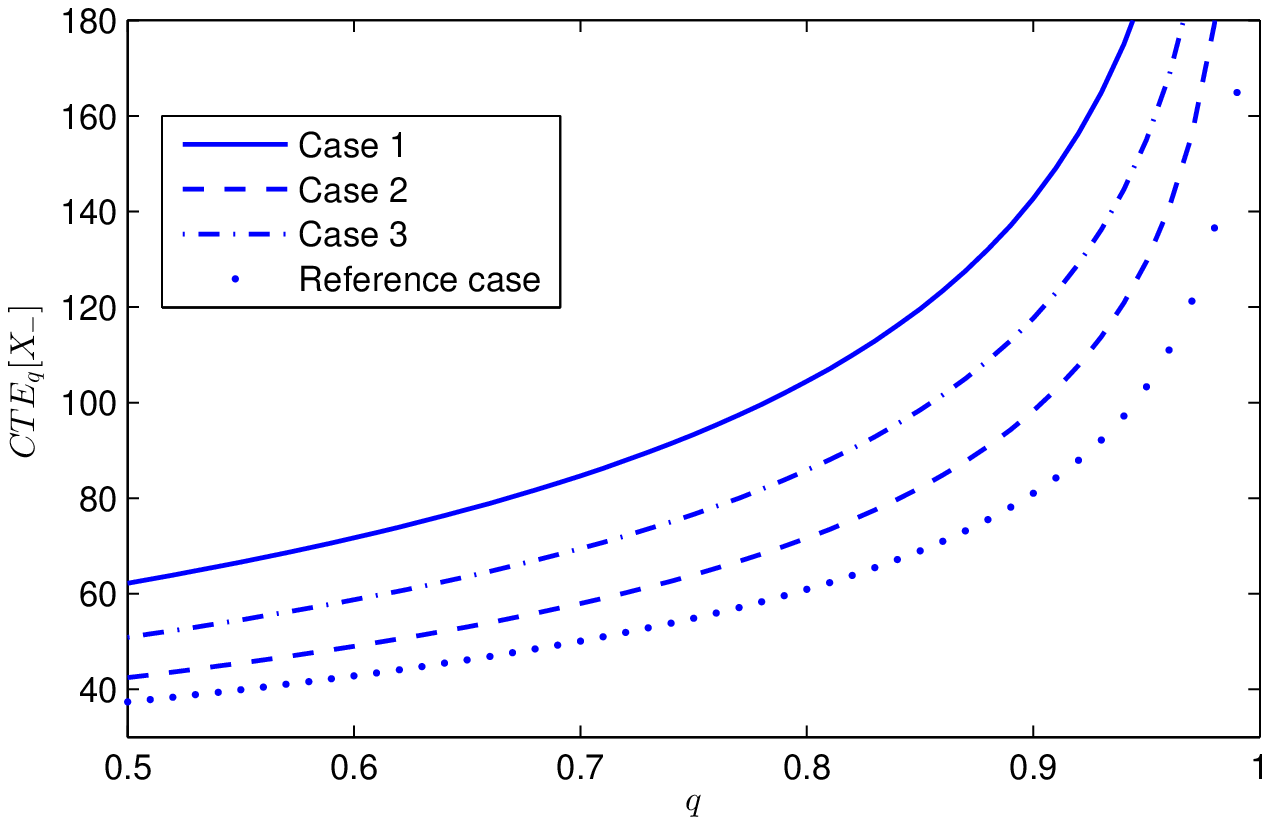}
 \includegraphics[width=7cm, height=7cm]{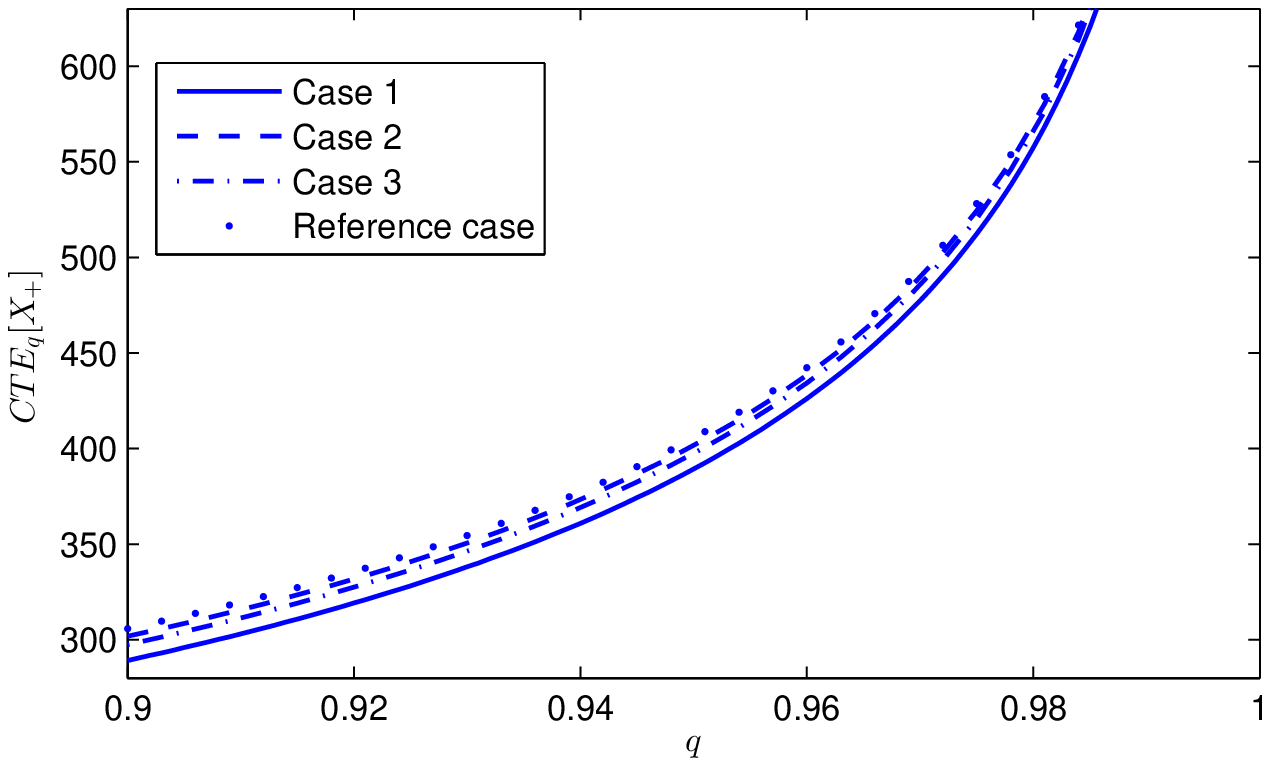}\\
\caption{Conditional expected times of first (left panel) and last
(right panel) default for portfolios (1) - (3) and the
reference portfolio $(\perp)$ for `B' rating r.p.'s with the probability of
default $p=0.3198$ and
$\mu=1/1.8$.
Proposition \ref{CTEparMin} is employed to
 compute the values of $CTE_q$ for $q\in[0,\ 1)$.}
\label{fig:CTEmin}
\end{figure}

The
downside of high correlations is elucidated in
 Figure \ref{fig:minCase}, in which
we leave the probability of default $p$ to be equal to $0.3198$ (`B' rating),
but vary the $\mu$ parameter that stipulates the effect of the risk factors.
In this respect, we observe that the r.p.'s with stronger correlations between r.c.'s are more
sensitive to the changes in the $\mu$ parameter, and therefore such r.p.'s
must be monitored and stress-tested more frequently.

\begin{figure}[h!]
\centering
\includegraphics[width=7cm,height=7cm]{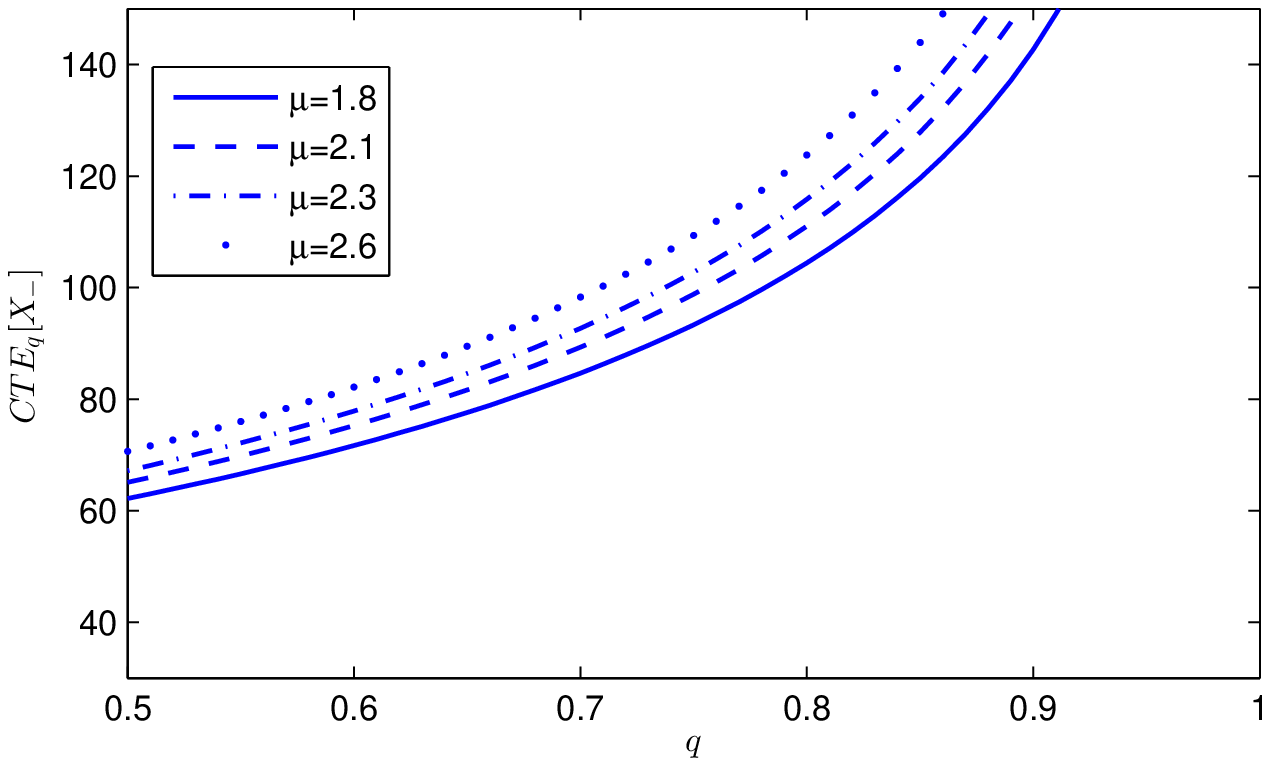}
\includegraphics[width=7cm,height=7cm]{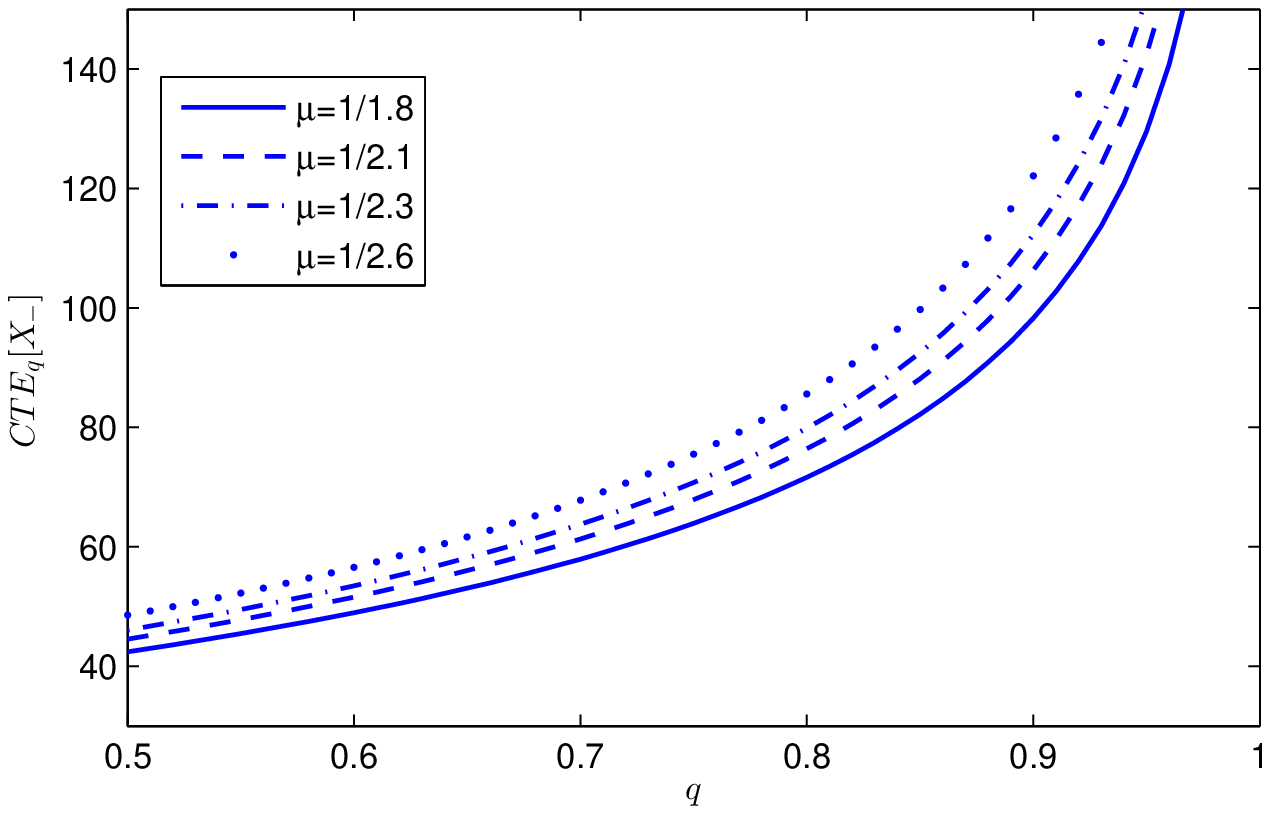}\\
\includegraphics[width=7cm,height=7cm]{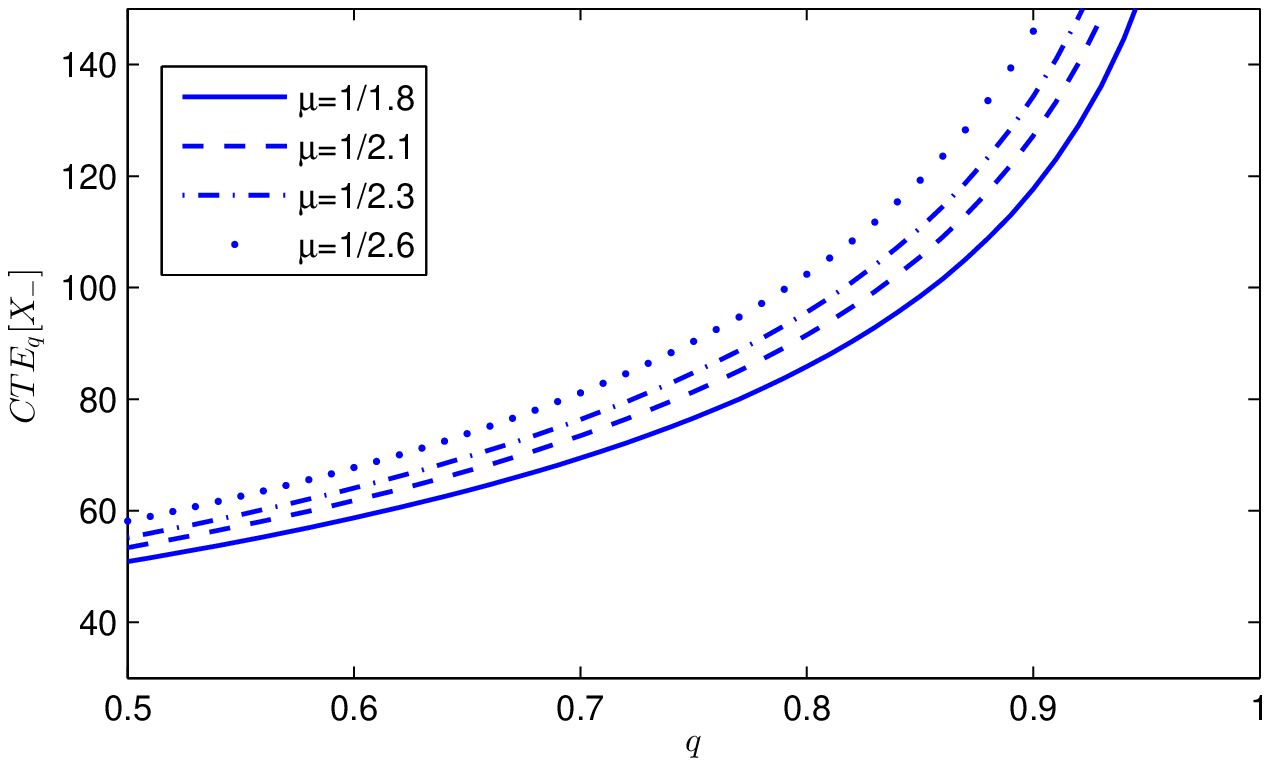}
\includegraphics[width=7cm,height=7cm]{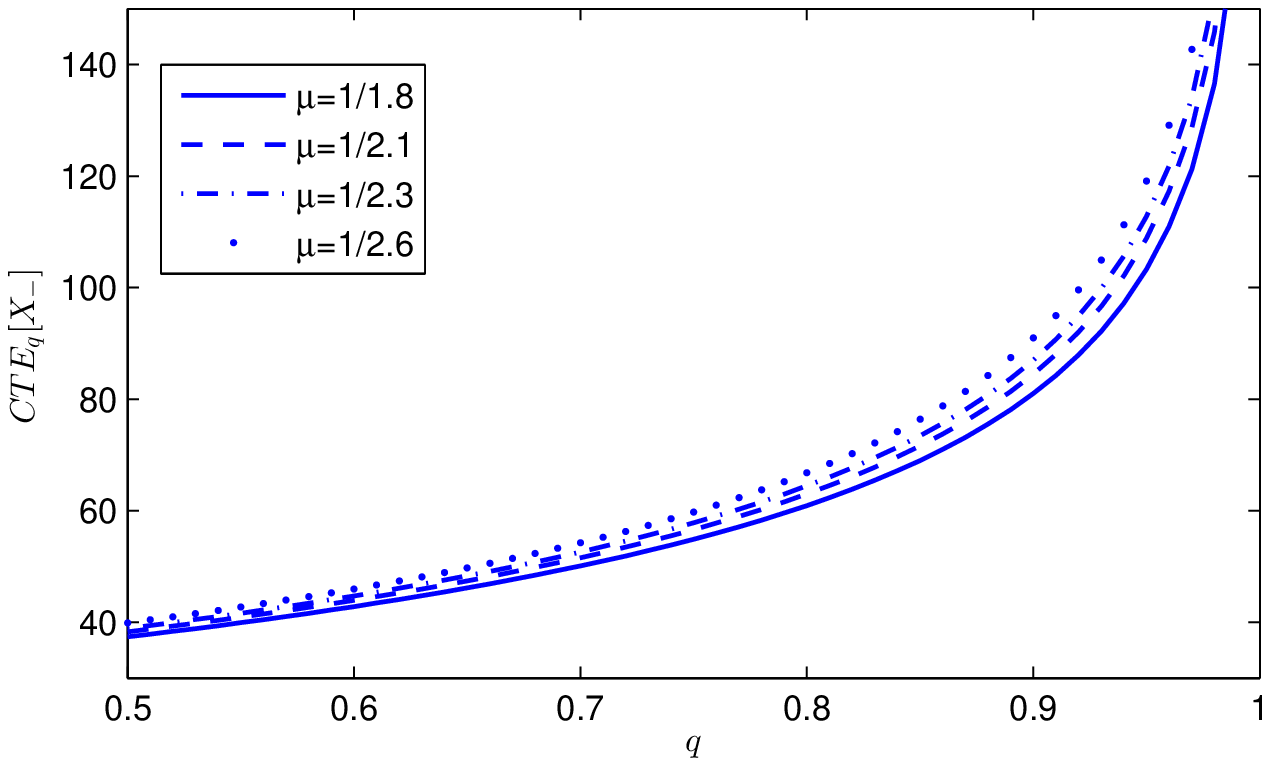}
\caption{Conditional expected times of first default for portfolios (1) (top left panel),
(2) (top right panel), (3) (bottom left panel) and reference $(\perp)$ (bottom right panel)
with the parameter $\mu$ varying from $1/1.8$ to $1/2.6$ and the default probability
$p=0.3198$. Proposition \ref{CTEparMin} is employed to compute the values of
$CTE_q$ for $q\in[0,\ 1)$.}
\label{fig:minCase}
\end{figure}

\subsection{Expected times of the last default}
Figure \ref{fig:CTEmin} (right panel)
depicts the values of $CTE_q[X_+]$ for $q\in[0,\ 1),$ $X_+\in\mathcal{X}$ and portfolios
(1) to (3) as well as the reference portfolio $(\perp)$.
We have that
\[
\overline{F}^{(\perp)}_+\geq_{st}\overline{F}^{(2)}_+\geq_{st} \overline{F}^{(3)}_+
\geq_{st} \overline{F}^{(1)}_+,
\]
and hence
\[
CTE^{(\perp)}_q[X_+]\geq CTE^{(2)}_q[X_+]\geq CTE^{(3)}_q[X_+] \geq CTE^{(1)}_q[X_+]
\]
for all $q\in[0,\ 1)$ and $X_+\in\mathcal{X}$.
This conforms with the right panel of Figure \ref{fig:CTEmin}.

Unlike in the case of the first default,
we observe that if the time of the last default is
of interest and  the distributions of the r.c.'s are fixed,
then assuming stronger correlations between r.c.'s yields a more
conservative assessment of the expected time of the last default.

\subsection{Solvency bonus indices}
Figure \ref{fig:econCTE} depicts the solvency bonus indices for portfolios (1) to (3). As expected, stronger dependencies between default times are associated with higher values of $\beta$.

\begin{figure}[h!]
\centering
\includegraphics[width=7cm,height=7cm]{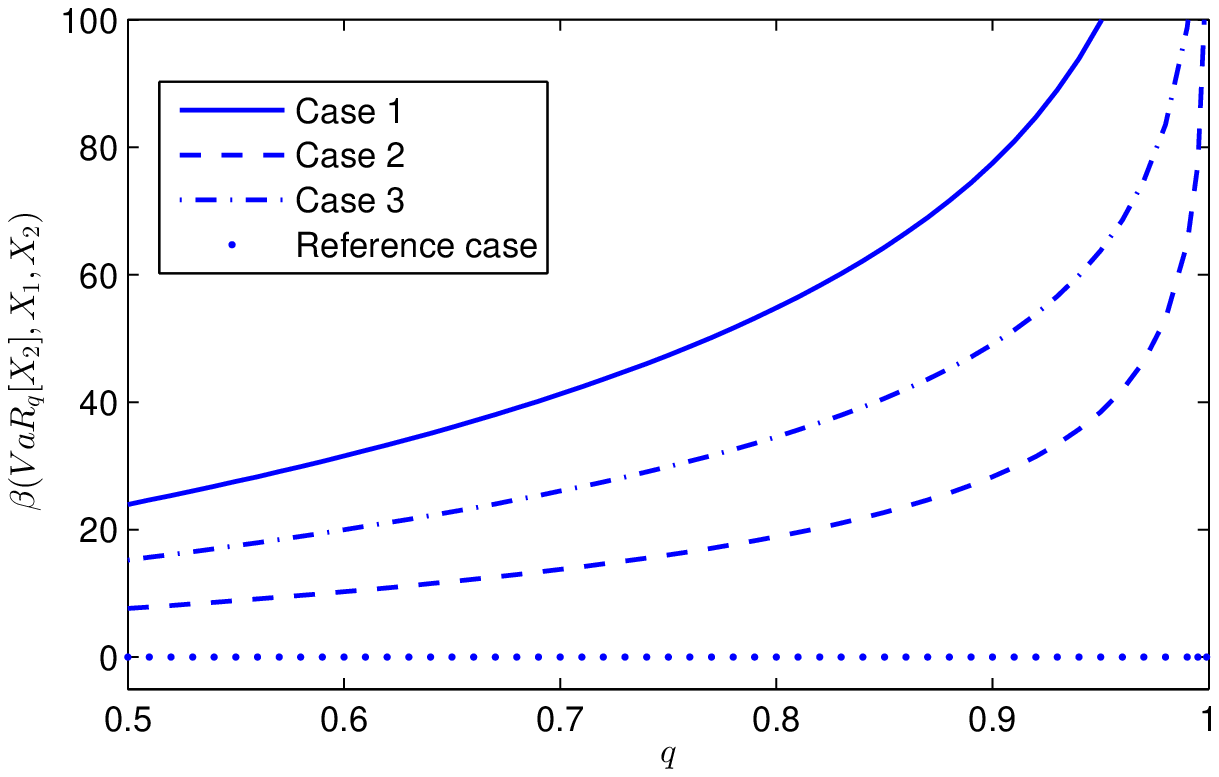}
\includegraphics[width=7cm,height=7cm]{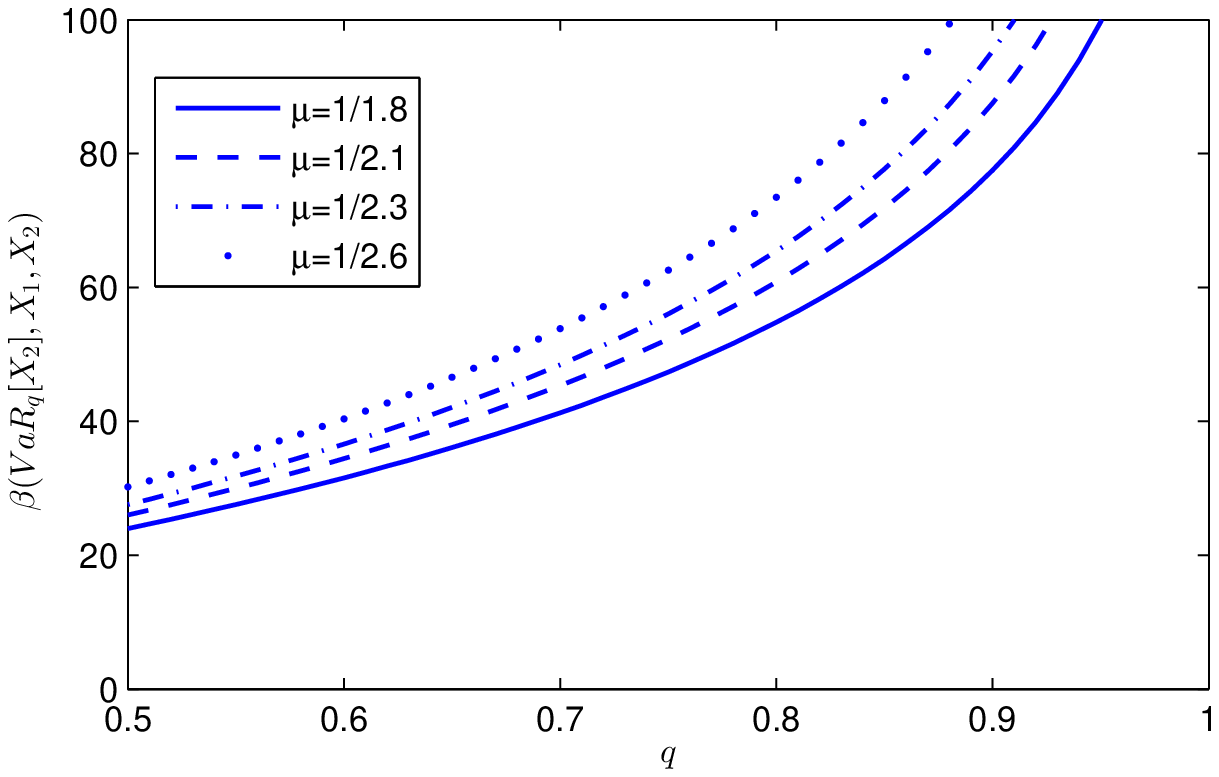}\\
\includegraphics[width=7cm,height=7cm]{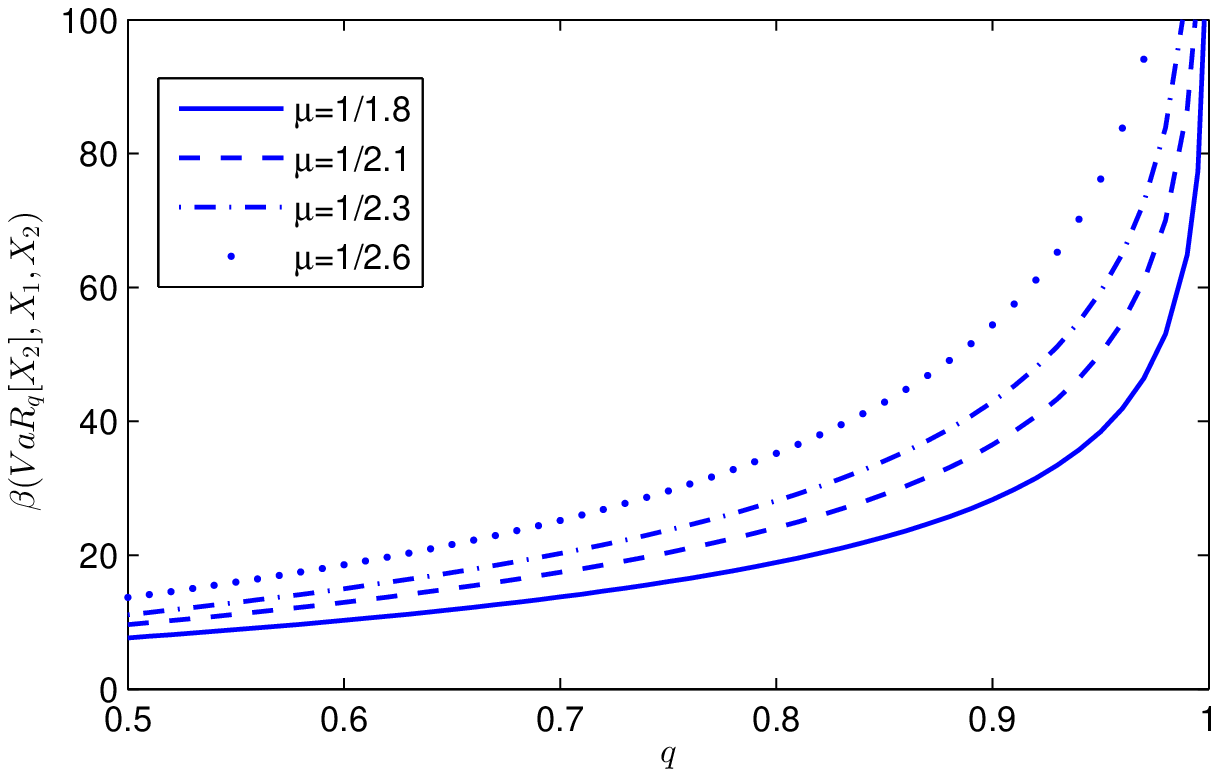}
\includegraphics[width=7cm,height=7cm]{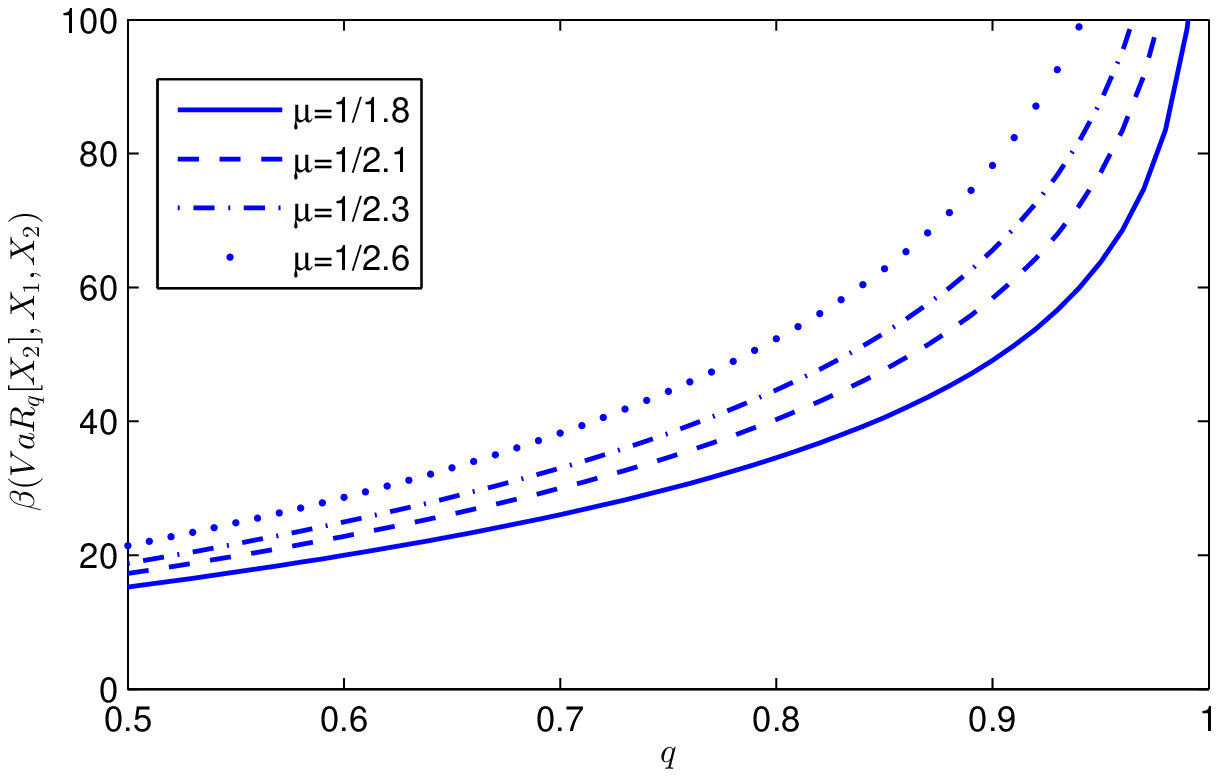}
\caption{Comparison of solvency bonus indices for portfolios (1) to (3) (top left panel), as well as the solvency bonus indices solely for portfolio (1) (top right panel), (2) (bottom left panel) and (3) (bottom right panel)
with the parameter $\mu$ varying from $1/1.8$ to $1/2.6$ and the default probability
$p=0.3198$. Proposition \ref{survival-function} is used to compute the values of the solvency bonus indices.}
\label{fig:econCTE}
\end{figure}

\section{Conclusions}
\label{Sec-7}
The latest Solvency II directives require the insurers to recognize the interdependencies within and among different liability classes.
As such, the new paradigm arguably
brings to an end the indisputable role of the assumption of independence that has shaped both research  and applications in actuarial science
and general quantitative risk management in the 20th century. The choice of an appropriate probability dependence model is however not an easy call.
Indeed, while there exists only one way to describe stochastic independence, the forms of stochastic dependence are infinite.

In this paper we have introduced  a new class of Multiple Risk Factor dependence structures. On the one hand, these structures
emerge as an extension of the
popular CreditRisk$^+$ approach, and as such they formally describe default risk portfolios exposed to an arbitrary number of
fatal risk factors with conditionally exponential hitting times that can be independent, positively dependent and even fully comonotonic.
On the other hand, the MRF structures can be viewed as a quite general family of multivariate probability distributions with Pareto of the 2nd kind
univariate margins, and in this role, they can model risk portfolios of (insurance) losses with heavy tailed and positively dependent
risk components.

It often happens in mathematical sciences that generalizing an object highlights the essence of the matter and helps to understand it better.
By generalizing the classical multivariate Pareto distribution of Arnold (1983, 2015), among others, in this paper we have extended the range of the attainable Pearson correlations to the entire $[0,\ 1]$ interval
and complemented some formal dependence analysis. As factor models have been known to produce under-correlated default times,
the MRF structures introduced and studied herein may provide a possible route to `inject the required amounts of extra correlation',
and they may thereby become of interest to banks, credit unions and insurance companies.
In addition, as the new realms of excessively prudent risk management make particular
effort to model non-hedgeable and heavy tailed risks, the family of MRF multivariate
distributions may be of interest well beyond the context of credit risk.

\section*{Acknowledgements}
We are grateful to Prof. Dr. Paul Embrechts and all participants of the ETHs
Series of Talks in Financial and Insurance Mathematics for feedback and insights.

\section*{Disclosure statement}
No potential conflict of interest was reported by the authors.

\section*{Funding}
Our research has been supported by the Natural Sciences and Engineering Research Council (NSERC) of Canada. Jianxi Su also acknowledges the financial support of the Government of Ontario and MITACS Canada via, respectively, the Ontario Graduate Scholarship program
and the Elevate Postdoctoral fellowship.

\section*{References}
\hangindent=\parindent\noindent {Adalsteinsson, G. (2014).} \textit{The Liquidity Risk Management Guide: From Policy to Pitfalls.} Wiley, Chichester.

\hangindent=\parindent\noindent {Arnold, B. C. (1983).} \textit{Pareto Distributions.} International Cooperative Publishing House, Fairland.

\hangindent=\parindent\noindent {Arnold, B. C. (2015).} \textit{Pareto Distributions}, 2nd ed. CRC Press, Boca Raton.

\hangindent=\parindent\noindent Asimit, A. V., Furman, E. \& Vernic, R. (2010). On a multivariate Pareto distribution. \textit{Insurance: Mathematics and Economics} \textbf{46}(2), 308--316.

\hangindent=\parindent\noindent Asimit, A. V., Furman, E. \& Vernic, R. (2016). Statistical inference for a new class of multivariate Pareto distributions. \textit{Communications in Statistics - Simulation and Computation} \textbf{45}(2),
456--471.

\hangindent=\parindent\noindent Azizpour, S. \& Giesecke, K. (2008). {Self-exciting corporate defaults : Contagion vs. frailty.} Technical report, Stanford University, Stanford.

\hangindent=\parindent\noindent Benson, D. A., Schumer, R. \& Meerschaert, M. M. (2007). Recurrence of extreme events with power-law interarrival times. \textit{Geophysical Research Letters} \textbf{34}, 1--5.

\hangindent=\parindent\noindent Bowers, N. L., Gerber, H. U., Hickman, J. C., Jones, D. A. \& Nesbitt, C. J. (1997). \textit{Actuarial Mathematics}, 2nd ed. Society of Actuaries, Schaumburg.



\hangindent=\parindent\noindent Bunke, H. C. (1969). \textit{A Primer on American Economic History.} Random House, New York.

\hangindent=\parindent\noindent Cebri\'{a}n, A. C., Denuit, M. \& Lambert, P. (2003). Generalized Pareto fit to the Society of Actuaries' large claims database. \textit{North American Actuarial Journal} \textbf{7}(3), 18--36.

\hangindent=\parindent\noindent Chavez-Demoulin, V., Embrechts, P. \& Hofert, M. (2015). An extreme value approach for modeling operational risk losses depending on covariates. \textit{Journal of Risk and Insurance}, in press.

\hangindent=\parindent\noindent Chiragiev, A. \& Landsman, Z. (2009). Multivariate flexible Pareto model: Dependency structure, properties and characterizations. \textit{Statistics and Probability Letters} \textbf{79}(16), 1733--1743.

\hangindent=\parindent\noindent Das, S. R., Duffie, D., Kapadia, N. \& Saita, L. (2007). Common failings: How corporate defaults are correlated. \textit{Journal of Finance} \textbf{62}(1), 93--117.

\hangindent=\parindent\noindent Denuit, M., Dhaene, J., Goovaerts, M. \& Kaas, R. (2005). \textit{Actuarial Theory for Dependent Risk: Measures, Orders and Models.} Wiley, Chichester.


\hangindent=\parindent\noindent Duffie, D. \& Singleton, K. J. (1999). Modeling term structures of defaultable bonds. \textit{Review of Financial Studies} \textbf{12}(4), 687--720.

\hangindent=\parindent\noindent Engelmann, B. \& Rauhmeier, R. (2011). \textit{The Basel II Risk Parameters: Estimation, Validation, Stress Testing - with Applications to Loan Risk Management.} Springer, Berlin.

\hangindent=\parindent\noindent Feller, W. (1966). \textit{An Introduction to Probability Theory and Its Applications.} Wiley, New York.


\hangindent=\parindent\noindent Frey, R. \& McNeil, A. J. (2003). Dependent defaults in models of portfolio credit risk. \textit{Journal of Risk} \textbf{6}(1), 59--92.

\hangindent=\parindent\noindent Furman, E. \& Landsman, Z. (2005). Risk capital decomposition for a multivariate dependent gamma portfolio. \textit{Insurance: Mathematics and Economics} \textbf{37}(3), 635--649.

\hangindent=\parindent\noindent Furman, E. \& Zitikis, R. (2008a). Weighted premium calculation principles. \textit{Insurance: Mathematics and Economics} \textbf{42}(1), 459--465.

\hangindent=\parindent\noindent Furman, E. \& Zitikis, R. (2008b). Weighted risk capital allocations. \textit{Insurance: Mathematics and Economics} \textbf{43}(2), 263--269.


\hangindent=\parindent\noindent Gabaix, X., Gopikrishnan, P., Plerou, V. \& Stanley, H. E. (2003). A theory of power-law
distributions in financial market fluctuations. \textit{Nature} \textbf{423}, 267--270.

\hangindent=\parindent\noindent Geweke, J.  \& Amisano, G. (2011). Hierarchical Markov normal mixture models with applications to financial asset returns. \textit{Journal of Applied Econometrics} \textbf{26}(1), 1--29.

\hangindent=\parindent\noindent Giesecke, K. (2003). A simple exponential model for dependent defaults. \textit{The Journal of
Fixed Income} \textbf{13}(3), 74--83.

\hangindent=\parindent\noindent Gordy, M. B. (2000). A comparative anatomy of credit risk models. \textit{Journal of Banking and Finance} \textbf{24}(1), 119--149.

\hangindent=\parindent\noindent Gradshteyn, I. S.  \& Ryzhik, I. M. (2014). \textit{Table of Integrals, Series, and Products}, 8th ed.
Academic Press, New York.

\hangindent=\parindent\noindent Joe, H. (1997). \textit{Multivariate Models and Dependence Concepts.} CRC Press, Boca Raton.

\hangindent=\parindent\noindent Koedijk, K. G., Schafgans, M. M. A. \& de Vries, C. G. (1990). The tail index of
exchange rate returns. \textit{Journal of International Economics} \textbf{29}(1-2), 93--108.

\hangindent=\parindent\noindent Lehmann, E. L. (1966). Some concepts of dependence. \textit{The Annals of Mathematical Statistics} \textbf{37}(5), 1137--1153.

\hangindent=\parindent\noindent Longin, F. M. (1996). The asymptotic distribution of extreme stock market returns.
\textit{Journal of Business} \textbf{69}(3), 383--408.

\hangindent=\parindent\noindent Marshall, A. W.  \& Olkin, I. (1967). A generalized bivariate exponential distribution.
\textit{Journal of Applied Probability} \textbf{4}(2), 291--302.

\hangindent=\parindent\noindent McNeil, A. J., Frey, R. \& Embrechts, P. (2005). \textit{Quantitative Risk Management: Concepts, Techniques, and Tools.} Princeton University Press, Princeton.

\hangindent=\parindent\noindent Moschopoulos, P. G. (1985). The distribution of the sum of independent gamma random variables. \textit{Annals of the Institute of Statistical Mathematics} \textbf{37}(1), 541--544.

\hangindent=\parindent\noindent Nelsen, R. B. (2006). \textit{An Introduction to Copulas}, 2nd ed. Springer, New York.

\hangindent=\parindent\noindent Soprano, A., Crielaard, B., Piacenza, F. \& Ruspantini, D. (2010). \textit{Measuring Operational and Reputational Risk: A Practitioner's Approach.} Wiley, Chichester.

\hangindent=\parindent\noindent Standard \& Poor's (2015). Default, transition and recovery: 2014 annual global corporate default study and rating transitions.  Technical report, Standard \& Poor's, New York.

\hangindent=\parindent\noindent Su, J. \& Furman, E. (2016). A form of multivariate Pareto distribution with applications to financial risk measurement. \textit{ASTIN Bulletin: The Journal of the International Actuarial Association}, in press.

\hangindent=\parindent\noindent Sweeting, P. (2011). \textit{Financial Enterprise Risk Management.} Cambridge University Press, Cambridge.

\hangindent=\parindent\noindent Vernic, R. (2011). Tail conditional expectation for the multivariate Pareto distribution of the second kind: Another approach. \textit{Methodology and Computing in Applied Probability} \textbf{13}(1), 121--137.

\hangindent=\parindent\noindent Wang, S. (1996). Premium calculation by transforming the layer premium density. \textit{ASTIN Bulletin: The Journal of the International Actuarial Association} \textbf{26}(1), 71--92.

\appendix
\numberwithin{equation}{section}
\section{Proofs}
\label{app-proof}
\begin{proof}[Proof of Lemma \ref{FL2005}]
The first part of the lemma, i.e., the distribution of $\Lambda^\ast$ is proved in Moschopoulos (1985) (also, Furman \& Landsman (2005)). To
establish the second part, we note that, for $x\in\mathbf{R}_+$, the d.d.f. of $E_{\Lambda^\ast}$ is given by
\begin{eqnarray*}
\overline{F}_{\Lambda^\ast}(x)&=&\int_{\mathbf{R}_+} e^{-x t} \sum_{k=0}^\infty \frac{p_k}{\Gamma(\gamma+k)}e^{-\alpha_+ t}t^{\gamma+k-1}\alpha_+^{\gamma+k}dt =\sum_{k=0}^\infty p_k\left(
1+\frac{x}{\alpha_+}
\right)^{-(\gamma+k)},
\end{eqnarray*}
where the interchange of the summation and the integration signs is justified because of the uniform convergence of the integrand. This completes the
proof.
\end{proof}

\begin{proof}[Proof of Theorem \ref{pro-min}]
We observe that, for $x\in \mathbf{R}_+$, the d.d.f. of the  first default r.v. is
written as
\begin{eqnarray*}
\mathbf{P}[X_-> x]&=&\mathbf{P}[\wedge_{i=1}^n X_i> x]\\
&=&\prod_{j=1}^{l}\left(1+x\bigvee_{i\in\mathcal{RC}_j} \frac{1}{\sigma_i}\right)^{-\xi_j}
\prod_{j=l+1}^{l+m}\left(1+x\sum_{i\in\mathcal{RC}_j}\frac{1}{\sigma_i}\right)^{-\xi_j} \\
&=&\int_{\mathbf{R}_+^{l+m}} \exp\left\{
-x
\sum_{j=1}^{l+m} \lambda_j
\right\}dG_{\boldsymbol{M}}(\lambda_1,\ldots,\lambda_{l+m})
\end{eqnarray*}
where $\boldsymbol{M}\sim G_{1,\ldots,l+m}$ is an $(l+m)$-dimensional r.v. with  stochastically independent and gamma distributed
coordinates. More specifically, we have that $M_j\sim Ga(\xi_j,\ \wedge_{i\in\mathcal{RC}_j}
\sigma_i)$ for $j=1,\ldots,l$ and
$M_j\sim Ga\left(\xi_j,\ \left(\sum_{i\in\mathcal{RC}_j}
1/{\sigma_i}\right)^{-1}\right)$ for $j=l+1,\ldots,l+m$.

Clearly we also have that $\mathbf{P}[X_-> x]=\overline{F}_{E_{M^\ast}}(x),\ x\in\mathbf{R}_+$, where $M^\ast$ is an $(l+m)$- fold convolution
of the coordinates of $\boldsymbol{M}=(M_1,\ldots,M_{l+m})'$, that is its distribution follows from the first part of Lemma
\ref{FL2005}. The statement of the theorem then follows by evoking the second part of Lemma \ref{FL2005}. This completes the proof.
\end{proof}

\begin{proof}[Proof of Corollary \ref{max-th}]
Clearly $\overline{F}_+(x)=\mathbf{P}[\left\{
\cup_{i=1}^n (X_i>x)
\right\}],\ x\in\mathbf{R}_+$, and then the statement of the corollary follows using the inclusion-exclusion principle and
employing Theorem \ref{pro-min}. This completes the proof.
\end{proof}

\begin{proof}[Proof of Theorem \ref{sim-default}]
First, since $\mathbf{P}[Y_{i_1}=\cdots =Y_{i_k}]\equiv0$, we arrive at
\begin{eqnarray}
&&\mathbf{P}[X_{i_1}/\sigma_{i_1}=\cdots=X_{i_k}/\sigma_{i_k}] \notag \\
&=&\mathbf{P}\left[\left\{ \bigcap_{i_h=i_1}^{i_k} (Z_{i_h}\leq Y_{i_h}) \cap \left(Z_{i_1}/\sigma_{i_1}=\cdots
=Z_{i_k}/\sigma_{i_k}\right)\right\}\right] \notag \\
&=&\alpha_{c,(i_1,\ldots,i_k)} \int_{\mathbf{R}_+} \prod_{j=l+1}^{l+m}	 \left(
1+z\sum_{i_h=1}^{i_k} c_{i_h,j}
\right)^{-\gamma_{j}} (1+z)^{-\alpha_{c,i_1,\ldots,i_k}-1}
 dz  \label{ThSingEq1}
\end{eqnarray}
where we have the latter equality sign since, for
$Z_{{i_h,\overline{(i_1,\ldots,i_k)}}}^\bullet:=\bigwedge_{j\in\mathcal{RF}_{{i_h,\overline{(i_1,\ldots,i_k)}}}^l} E_{\Lambda_j},\ i_h=i_1,\ldots,i_k$ and
$Z_{(i_1.\ldots,i_k)}^\bullet:=\bigwedge_{j\in\mathcal{RF}_{(i_1,\ldots,i_k)}^l} E_{\Lambda_j}$,  the following string of expressions follows
\begin{eqnarray*}
&&\mathbf{P}\left[
\frac{Z_{i_1}}{\sigma_{i_1}}=\cdots = \frac{Z_{i_k}}{\sigma_{i_k}}>z\right] \\
&=&\mathbf{P}\left[\left\{
Z_{{i_h,\overline{(i_1,\ldots,i_k)}}}^\bullet > Z_{(i_1,\ldots,i_k)}^\bullet \textnormal{ for all }i_h\in\{i_1,\ldots,i_k\}
\right\}
\cap
\left\{
Z_{(i_1.\ldots,i_k)}^\bullet >z
\right\}
\right]
\\
&=&\mathbf{P}\left[
\left\{
Z_{{i_h,\overline{(i_1,\ldots,i_k)}}}^\bullet > Z_{(i_1,\ldots,i_k)}^\bullet >z \textnormal{ for all }i_h\in\{i_1,\ldots,i_k\}
\right\}
\right] \\
&=&
\mathbf{P}\left[
\left\{
E_{\Lambda_j}> Z_{(i_1,\ldots,i_k)}^\bullet >z \textnormal{ for all } j\in
\mathcal{RF}_{\overline{(i_1,\ldots,i_k)}}^l
\right\}
\right] \\
&=&
\alpha_{c,(i_1,\ldots,i_k)}\int_z^\infty (1+x)^{-\alpha_{c,\overline{(i_1,\ldots,i_k)}}}
(1+x)^{-\alpha_{c,(i_1,\ldots,i_k)}-1}dx
\\
&=&
\alpha_{c,(i_1,\ldots,i_k)}
\int_z^\infty (1+x)^{-\alpha_{c,i_1,\ldots,i_k}-1}dx
\end{eqnarray*}
and hence
\[
-\frac{d}{dz}\mathbf{P}[Z_{i_1}/\sigma_{i_1}=\cdots = Z_{i_k}/\sigma_{i_k}>z]=
 \alpha_{c,(i_1,\ldots,i_k)} (1+z)^{-\alpha_{c,i_1,\ldots,i_k}-1}.
\]

Further, as the integrand in (\ref{ThSingEq1}) is the d.d.f. of  a first to default r.v.
(Theorem \ref{max-th}), it is the  d.d.f. of a Lomax r.v. with random shape parameter
$\gamma_{c,i_1,\ldots,i_k}+\alpha_{c,i_1,\ldots,i_k}+K+1$, where $K$ is an integer valued
r.v. with p.m.f. \`a la (\ref{pk})  and scale parameter equal to one. Hence by Lemma
(\ref{FL2005}), we have that
\begin{eqnarray*}
\mathbf{P}[X_{i_1}/\sigma_{i_1}=\cdots=X_{i_k}/\sigma_{i_k}] \notag &=&
\alpha_{c,i_1,\ldots,i_k}\int_{\mathbf{R}_+} \sum_{k=0}^\infty p_k\left(
1+z
\right)^{-(\gamma_{c,i_1,\ldots,i_k}+\alpha_{c,i_1,\ldots,i_k}+k)-1}dz.
\end{eqnarray*}
The proof is completed by changing the order of summation and integration signs.
\end{proof}

\begin{proof}[Proof of Theorem \ref{decomp}]
Let $\mathcal{A}=\left \{(X_i,X_k)'\in\mathbf{R}_+^2:\frac{X_i}{\sigma_i}=\frac{X_k}{\sigma_k}\right \}$ for $1\leq i\neq k\leq n$
 denote the set on which the singular component of
the d.d.f. concentrates.  Then, for $a=\mathbf{P}[\ \mathcal{A}\ ]$ and $(x,y)'\in\mathbf{R}_+^2$,
\begin{eqnarray*}
\overline{F}(x,y) &=&a\mathbf{P}[\{(X_i>x) \cap ( X_k>y) |\ \mathcal{A}\}]+(1-a)\mathbf{P}[\{(X_i>x) \cap (X_k>y) |\ \mathcal{A}^c\}] \\
&=&\mathbf{P}[\{(X_i>x) \cap (X_k>y) \cap \mathcal{A}\}]+\mathbf{P}[\{(X_i>x) \cap (X_k>y) \cap \mathcal{A}^c\}] ,
\end{eqnarray*}
where
\begin{eqnarray*}
&&\mathbf{P}[\{(X_i>x) \cap (X_k>y) \cap \mathcal{A}\}] =
 \mathbf{P}\left[
 \left\{\left(\frac{X_i}{\sigma_i}>\frac{x}{\sigma_i}\right) \cap \left(\frac{X_k}{\sigma_k}>\frac{y}{\sigma_k}\right) \cap \mathcal{A}
 \right\}\right] \\
&=&
 \mathbf{P}\left[
 \frac{X_i}{\sigma_i}=\frac{X_k}{\sigma_k}>\frac{x}{\sigma_i}\bigvee \frac{y}{\sigma_k} \
 \right] \\
 &=&\alpha_{c,(i,k)}\int_{\frac{x}{\sigma_i} \vee \frac{y}{\sigma_k}}^\infty \sum_{h=0}^\infty
 (1+z)^{-\xi_{c,i,k}-h-1}p_hdz
\end{eqnarray*}
with the latter expression following employing the techniques used in the proof of Theorem  \ref{sim-default}. It then follows that
\[
\overline{F}_s(x,y)=\frac{\alpha_{c,(i,k)}}{a}\sum_{h=0}^\infty \frac{1}{\xi_{c,i,k}+h}\left(
1+\frac{x}{\sigma_i} \vee \frac{y}{\sigma_k}
\right)^{-(\xi_{c,i,h}+h)}p_h,
\]
which proves (\ref{scomp}). The form of the absolutely continuous component then follows from the latter expression, (\ref{ddfbiv})
and evoking Theorem \ref{sim-default}. This completes the proof.
\end{proof}

\begin{proof}[Proof of Theorem \ref{joint_m}]
Assume, without loss of generality, that $\sigma_i=\sigma_k=1$ for $1\leq i\neq k\leq n$. Then because of (\ref{ddfbiv}) the d.d.f. of $(X_i,X_k)'$ is given by
\begin{eqnarray*}
\overline{F}(x,y)&=&
\left(
1+x \vee y
\right)^{-\alpha_{c,(i,k)}}
\left(
1+x+y
\right)^{-\gamma_{c,(i,k)}}
\left(
1+x
\right)^{-\xi_{c,i,\overline{(i,k)}}}
\left(
1+y
\right)^{-\xi_{c,k,\overline{(i,k)}}} \notag \\
&=&
\left\{
\begin{array}{cc}
\left(
1+x+y
\right)^{-\gamma_{c,(i,k)}}
\left(
1+x
\right)^{-\xi_{c,i,\overline{(i,k)}}}
\left(
1+y
\right)^{-(\xi_{c,k,\overline{(i,k)}}+\alpha_{c,(i,k)})},
 & x\leq y \\
\left(
1+x+y
\right)^{-\gamma_{c,(i,k)}}
\left(
1+x
\right)^{-(\xi_{c,i,\overline{(i,k)}}+\alpha_{c,(i,k)})}
\left(
1+y
\right)^{-\xi_{c,k,\overline{(i,k)}}},& x>y
\end{array}
\right..
\end{eqnarray*}
Hence we readily have that
\begin{eqnarray*}
\mathbf{E}[X_iX_k]&=&\int_{\mathbf{R}_+^2} \overline{F}(x,y)dxdy \\
&=& \int_{\mathbf{R}_+}\int_x^\infty
\left(
1+x+y
\right)^{-\gamma_{c,(i,k)}}
\left(
1+x
\right)^{-\xi_{c,i,\overline{(i,k)}}}
\left(
1+y
\right)^{-(\xi_{c,k,\overline{(i,k)}}+\alpha_{c,(i,k)})}dydx
\\
&+&\int_{\mathbf{R}_+} \int_{0}^x
\left(
1+x+y
\right)^{-\gamma_{c,(i,k)}}
\left(
1+x
\right)^{-(\xi_{c,i,\overline{(i,k)}}+\alpha_{c,(i,k)})}
\left(
1+y
\right)^{-\xi_{c,k,\overline{(i,k)}}}dydx \\
&=& \int_{\mathbf{R}_+}\int_x^\infty
\left(
1+x+y
\right)^{-\gamma_{c,(i,k)}}
\left(
1+x
\right)^{-\xi_{c,i,\overline{(i,k)}}}
\left(
1+y
\right)^{-(\xi_{c,k,\overline{(i,k)}}+\alpha_{c,(i,k)})}dydx
\\
&+&\int_{\mathbf{R}_+} \int_{y}^\infty
\left(
1+x+y
\right)^{-\gamma_{c,(i,k)}}
\left(
1+x
\right)^{-(\xi_{c,i,\overline{(i,k)}}+\alpha_{c,(i,k)})}
\left(
1+y
\right)^{-\xi_{c,k,\overline{(i,k)}}}dxdy \\
&=&I_1(\boldsymbol{\xi})+I_2(\boldsymbol{\xi}).
\end{eqnarray*}

We further calculate $I_1(\boldsymbol{\xi})$, whereas the other integral can be tackled in a similar fashion. More specifically, by change
of variables and evoking Equation (3.197(1)) in Gradshteyn \& Ryzhik (2014), we have that
\begin{eqnarray*}
I_1(\boldsymbol{\xi)}&=&\frac{1}{\xi_{c,k}-1}\int_{\mathbf{R}_+}
(1+2x)^{-\gamma_{c,(i,k)}}
(1+x)^{-(\xi_{c,i,k}-\gamma_{c,(i,k)}-1)}{}_2F_1\left(
\gamma_{c,(i,k)},1;\xi_{c,k};\frac{x}{1+2x}
\right)
dx \\
&=&\frac{1}{2(\xi_{c,k}-1)}\int_0^1
(1-u/2)^{-(\xi_{c,i,k}-\gamma_{c,(i,k)}-1)}
(1-u)^{(\xi_{c,i,k}-3)}
{}_2F_1\left(
\gamma_{c,(i,k)},1;\xi_{c,k};\frac{u}{2}
\right)
du.
\end{eqnarray*}

Furthermore, note that as the ${}_2F_1$ hypergeometric function has the following integral representation for all
$\xi_{c,k}>1$ and $u\in\mathbf{R}$,
\[
\frac{1}{\xi_{c,k}-1}\ _2F_1\left(\gamma_{c,(i,k)},1;\xi_{c,k};u/2 \right)=\int_0^1 (1-v)^{\xi_{c,k}-2} \left(1-\frac{u}{2}v\right)^{-\gamma_{c,(i,k)}}dv
\]
(Equation 9.111 in loc. cit.) we obtain the following string of integrals
\begin{eqnarray*}
I_1(\boldsymbol{\xi})&=&
\frac{1}{2}\int_0^1 (1-v)^{\xi_{c,k}-2}\int_0^1 (1-u)^{\xi_{c,i,k}-3}(1-u/2)^{-(\xi_{c,i,k}-\gamma_{c,(i,k)}-1)}(1-uv/2)^{-\gamma_{c,(i,k)}}dudv
\\
&\overset{(1)}{=}&
\frac{1}{2(\xi_{c,i,k}-2)}\int_0^1
(1-v)^{\xi_{c,k}-2} F_1\left(
1,\gamma_{c,(i,k)},\xi_{c,i,k}-\gamma_{c,(i,k)}-1,\xi_{c,i,k}-1,\frac{v}{2},\frac{1}{2}
\right)
dv \\
&\overset{(2)}{=}&
\frac{1}{(\xi_{c,i,k}-2)}\int_0^1
(1-v)^{\xi_{c,k}-2} {}_2F_1\left(
1,\gamma_{c,(i,k)};\xi_{c,i,k}-1;v-1
\right)
dv \\
&=&
\frac{1}{(\xi_{c,i,k}-2)}\int_0^1
v^{\xi_{c,k}-2} {}_2F_1\left(
1,\gamma_{c,(i,k)};\xi_{c,i,k}-1;-v
\right)
dv \\
&\overset{(3)}{=}&
\frac{1}{(\xi_{c,i,k}-2)(\xi_{c,k}-1)}
{}_3F_2\left(\xi_{c,k}-1,1,\gamma_{c,(i,k)};\xi_{c,k},\xi_{c,i,k}-1;-1\right),
\end{eqnarray*}
where
$F_1$ is the bivariate hypergeometric function,
and  `$\overset{(1)}{=}$', `$\overset{(2)}{=}$' and `$\overset{(3)}{=}$' hold by
Equations (3.211), (9.182(1)) and (7.512(12)), respectively, in Gradshteyn \& Ryzhik (2014).
The expression for $I_2(\boldsymbol{\xi})$ is then by analogy
\[
I_2(\boldsymbol{\xi})=\frac{1}{(\xi_{c,i,k}-2)(\xi_{c,i}-1)}
{}_3F_2\left(\xi_{c,i}-1,1,\gamma_{c,(i,k)};\xi_{c,i},\xi_{c,i,k}-1;-1\right).
\]
We note in passing that the hypergeometric functions in $I_1(\boldsymbol{\xi})$ and  $I_2(\boldsymbol{\xi})$
converge for $\xi_{c,i,k}-\gamma_{c,(i,k)}>0$ and converge absolutely for  $\xi_{c,i,k}-\gamma_{c,(i,k)}>1$,
$1\leq i\neq k\leq n$.
This completes the proof.
\end{proof}

\begin{proof}[Proof of Corollary \ref{cov-su}]
First notice that according to Theorem \ref{joint_m}, we have that
\begin{eqnarray*}
&&\int_0^{\infty}\int_u^{\infty} (1+u)^{-c}(1+v)^{-b}(1+u+v)^{-a}dv du \\
&=&\frac{1}{(a+b+c-2)(a+b-1)}\ _3F_2(a+b-1,1,a;a+b,a+b+c-1;-1),
\end{eqnarray*}
where $a,b,c$ are all positive and such that $s_1:=a+b>2$ and $s_2:=a+c>2$. Then, for $s=a+b+c$,
\begin{eqnarray*}
&&\int_0^{\infty}\int_0^{\infty} (1+u)^{-c}(1+v)^{-b}(1+u+v)^{-a}dv du \\
&=&\int_0^{\infty}\int_v^{\infty} (1+u)^{-c}(1+v)^{-b}(1+u+v)^{-a}du dv+\int_0^{\infty}\int_u^{\infty} (1+u)^{-c}(1+v)^{-b}(1+u+v)^{-a}dv du\\
&=&\frac{1}{s-2}\left(\frac{1}{s_2-1}\ _3F_2(s_2-1,1,a;s_2,s-1;-1)+\frac{1}{s_1-1}\ _3F_2(s_1-1,1,a;s_1,s-1;-1)\right)\\
&\overset{(1)}{=}&\frac{1}{(s_1-1)(s_2-1)}\ _3F_2(a,1,1;s_1,s_2;1),
\end{eqnarray*}
where the latter equality is by Theorem 2.1 in Su \& Furman (2016), and the hypergeometric function converges absolutely
since $\xi_{c,i}>2$ and $\xi_{c,k}>2$ by assumption. This, along with setting $\alpha_{c,(i,k)}\equiv 0$
completes the proof.
\end{proof}

\begin{proof}[Proof of Corollary \ref{cov-cs}]
It is easy to check that, for $\gamma_{c,(i,k)}\equiv 0$, we have
${}_3F_2(a,b,0;c,d;z)\equiv 1$, for any real $a,b,c,d,z$ and also $\xi_{c,k}+\xi_{c,i}-\xi_{c,i,k}=\alpha_{c,(i,k)}$
for $1\leq i\neq k\leq n$. This completes the proof.
\end{proof}

\begin{proof}[Proof of Proposition \ref{survival-function}]
Let, for $(x,y)'\in\mathbf{R}_{+}^2$ and $b_1,b_2,b_3$ all real,
\begin{equation*}
\label{G-fun}
\overline{G}(x,y;b_1,b_2,b_3)=
\left(1+x\right)^{-b_1}
\left(1+y\right)^{-b_2}
\left(1+x+y\right)^{-b_3}.
\end{equation*}
Employing (\ref{ddfbiv}), we obtain that, for $y\in\mathbf{R}_{+}$,
\begin{eqnarray}
&&\mathbf{E}[X_i|X_k>y]=\int_{\mathbf{R}_+}\mathbf{P}[X_i>x|X_k>y]dx \notag \\
&=&\int_{\mathbf{R}_+}
\overline{G}
\left(
x/\sigma_i,y/\sigma_k;\ \xi_{c,i,\overline{(i,k)}},-\gamma_{c,(i,k)},\gamma_{c,(i,k)}
\right)dx \label{sfPropf1}
 \\
&-& \int_{\frac{\sigma_iy}{\sigma_k}}^\infty
\overline{G}
\left(
x/\sigma_i,y/\sigma_k;\ \xi_{c,i,\overline{(i,k)}},-\gamma_{c,(i,k)},\gamma_{c,(i,k)}
\right)dx \label{sfPropf2}\\
&+& \int_{\frac{\sigma_iy}{\sigma_k}}^\infty
\overline{G}
\left(
x/\sigma_i,y/\sigma_k;\ \xi_{c,i,\overline{(i,k)}}+\alpha_{c,(i,k)},-\xi_{c,(i,k)},\gamma_{c,(i,k)}
\right)dx.
\label{sfPropf3}
\end{eqnarray}
In order to compute (\ref{sfPropf1}), we use Expression (3.197(5)) in Gradshteyn \& Ryzhik (2014) and obtain that
\begin{eqnarray*}
\int_{\mathbf{R}_+} \overline{G}(x,y;b_1,-b_2,b_2)dx&=&
\int_{\mathbf{R}_+} (1+x)^{-b_1}\left(1+\frac{x}{1+y}\right)^{-b_2}dx \\
&=&\frac{1}{b_1+b_2-1}{}_2F_1\left(
b_2,1;b_1+b_2;\frac{y}{1+y}
\right),
\end{eqnarray*}
whereas to compute the other two integrals, i.e., (\ref{sfPropf2}) and (\ref{sfPropf3}), we use Expression (3.197(1)) in Gradshteyn \& Ryzhik (2014) and
have the following string of expressions
\begin{eqnarray*}
\int_z^\infty \overline{G}(x,y;b_1,-b_2,b_2)dx&=&\left(1+y\right)^{b_2}
\int_z^\infty \left(1+x\right)^{-b_1}
\left(
1+x+y
\right)^{-b_2}dx
\\
&=&\left(1+y\right)^{b_2}
\int_0^\infty \left(1+x+z\right)^{-b_1}
\left(
1+x+y+z
\right)^{-b_2}dx\\
&=&\frac{1}{b_1+b_2-1}\left(1+y\right)^{b_2}
(1+z)^{1-b_1}\left(
1+y+z
\right)^{-b_2} \\
&\times &{}_2F_1\left(
b_2,1;b_1+b_2;\frac{y}{1+y+z}
\right),
\end{eqnarray*}
where $z\in\mathbf{R}_+$ and $y\in\mathbf{R}_+$.
This, along with the formula for $\mathbf{E}[X]$ that has been derived in Theorem \ref{marginal-p}, completes the proof.
\end{proof}
\end{document}